\theoremstyle{definition}
\newtheorem{theorem}{Theorem}
\newtheorem{lemma}[theorem]{Lemma}
\newtheorem{corollary}[theorem]{Corollary}
\newtheorem{proposition}[theorem]{Proposition}
\newtheorem{remark}[theorem]{Remark}
\newtheorem{assumption}{Assumption}
\newtheorem{definition}{Definition}
\newtheorem{notation}[definition]{Notation}
\newcommand{\E}{\mathbb E}
\newcommand{\ee}{\mathrm e}  %
\newcommand{\D}{\mathrm{d}}
\newcommand{\F}{\mathcal F}
\newcommand{\PP}{\mathbb P}
\newcommand{\Var}{\mathrm{Var}}
\newcommand{\argmax}{\mathrm{arg max}}
\newcommand{\vc}{\mathrm{vec}}
\newcommand{\Aa}{\bm{\mathrm{A}}} %
\newcommand{\BB}{\bm{\mathrm{B}}} %
\newcommand{\CC}{\bm{\mathrm{C}}} %
\newcommand{\G}{\bm{\mathrm{G}}}
\newcommand{\M}{\bm{\mathrm{M}}}
\newcommand{\N}{\bm{\mathrm{N}}}
\newcommand{\X}{\bm{\mathrm{X}}}
\newcommand{\DD}{\bm{\mathrm{D}}}
\newcommand{\Dg}{\mathrm{Dg}}
\newcommand{\V}{\bm{\mathrm{V}}}
\newcommand{\U}{\bm{\mathrm{u}}}
\newcommand{\Z}{\bm{\mathrm{Z}}} %
\newcommand{\ZZ}{\bm{\mathrm{Z}}} %
\begin{document}

\title{Application of Hawkes volatility in the observation of filtered high-frequency price process in tick structures}

\title{Application of Hawkes volatility in the observation of filtered high-frequency price process in tick structures}
\author{Kyungsub Lee\footnote{Department of Statistics, Yeungnam University, Gyeongsan, Gyeongbuk 38541, Korea}}



\maketitle

\begin{abstract}
	The Hawkes model is suitable for describing self and mutually exciting random events.
	In addition, the exponential decay in the Hawkes process allows us to calculate the moment properties in the model.
	However, due to the complexity of the model and formula, few studies have been conducted on the performance of Hawkes volatility.
	In this study, we derived a variance formula that is directly applicable under the general settings of both unmarked and marked Hawkes models for tick-level price dynamics.
	In the marked model, the linear impact function and possible dependency between the marks and underlying processes are considered.
	The Hawkes volatility is applied to the mid-price process filtered at 0.1-second intervals to show reliable results;
	furthermore, intraday estimation is expected to have high utilization in real-time risk management.	
	We also note the increasing predictive power of intraday Hawkes volatility over time and examine the relationship between futures and stock volatilities.
\end{abstract}




\section{Introduction}\label{sec1}

The importance of volatility as a risk measure of financial assets was first observed by  \cite{Markowitz}.
Since then, research on the historical volatility of financial asset prices has increased. 
For example, seminal studies on volatility were conducted via autoregressive heteroscedastic models, 
proposed by  \cite{Engle} and \cite{Bollerslev}, 
which typically estimates the volatilities based on daily data.
Daily data as well as intraday data have become useful for volatility estimation
with developments in computing, recording, and storage technology, 
thereby advancing the theory and practice of volatility estimation.

Based on the theory of quadratic variation of semimartingales, 
the realized volatility studied by \cite{ABDL}, which is equivalent to the integrated volatility \citep{Barndorff2002a,Barndorff2002b}
for the It\^{o} process, is a reliable volatility estimator.
It uses intraday data observed at appropriate frequencies.
These theories and practical application methodologies were further developed in subsequent studies \citep{barndorff2004power,Ait2005,Zhang2005,Hansen,fan2008spot,Ait2011}.

Another useful method for estimating volatility or risk using intraday data is based on point processes \citep{Bacryetal2013,Blanc2017,lee2017modeling,lee2017marked,herrera2020marked}.
The Hawkes process is a point process introduced by \cite{Hawkes1,Hawkes2}, 
and was initially recognized for its importance in seismology.
However, it has since been used in various fields such as finance, biology, and computer science.

Early studies that apply the Hawkes model in the field of finance include \cite{Hewlett2006}, \cite{Bowsher2007}, and \cite{Large2007}.
The Hawkes point process and similar intensity-based models have been studied in terms of price dynamics, especially for microstructure
\citep{Bauwens2009,Bacry2014,HAINAUT2020124330},
bid-ask price \citep{Lee2023}, and limit order book modeling \citep{Zheng2014,hainaut2019switching,morariu2021state},
with various applications such as optimal execution \citep{choi2021,da2021simple}.
It has also been used in other financial studies on credit risk \citep{Errais2010,Dassios2012,AITSAHALIA2015585}, 
extreme risk \citep{herrera2018point}, systemic risk \citep{Jang2020}, and derivatives pricing \citep{MA2020122645}.
Theoretical and applied studies on the Hawkes process in finance 
have been reviewed in the literature \citep{Law2015,Bacry2015hawkes,Hawkes2018}.

We compute the daily and intraday volatilities of financial asset prices using Hawkes models.
Under the mutually and self-excited unmarked Hawkes model framework,  
\cite{Bacryetal2013} derive formulas for the signature plot and the correlation between two price increments.
\cite{Fonseca2014} achieve a similar result under an unmarked Hawkes model setting, with derivations in matrix forms and linear differential equations.
Furthermore, \cite{lee2017modeling} focus on volatility estimation in a symmetric setting in the Hawkes kernel.  
A similar step-by-step calculation for the general moments of the Hawkes model is demonstrated by \cite{cui_hawkes_yi_2020}.

An extension of the basic Hawkes model for financial data can be achieved through by considering the event size.
The marked Hawkes model is extended by including the magnitude of the event or size of the jump in the model.
An early study on the application of the marked model to financial data can be found in \cite{Embrechts2011}.
\cite{lee2017marked} and \cite{Lee_Recurrent_2023} attempt to estimate volatility using the symmetric marked Hawkes process. 
Our study can be viewed as an extended version, with more relaxed parameter conditions.
Furthermore, marked Hawkes processes have been actively studied in the field of finance \citep{Chavez2012,ji2020combining,cai2020hawkes}.

Our first goal is to develop a closed-form volatility formula that can be used directly in computer programs with minimal model assumptions and parameter constraints.
The volatility formula is derived for all unmarked, dependent marked, and independent marked models, and provides a closed-form solution under minimal assumptions.
Volatility can be computed using estimates based on high-frequency stock-price data.
Our empirical study shows stable results.

Second, we test the performance of Hawkes volatility through a model estimation.
For the model estimation, we do not use raw data that records all the activities; 
instead, we use filtered data at 0.1 second intervals.
The filtered data removes meaningless information at the ultra-high frequency level,
making it easier to compute long-term volatility such as daily volatility.
The results show that Hawkes volatility exhibits high performance
with an increasing predictive power over time.
We also provide a comparative study of realized volatility.

In addition, the abundance of tick data makes it possible to measure Hawkes volatility in real time, 
which is significant advantage.
Real-time Hawkes variability enables detailed empirical analysis.
For example, 
we visualize the pattern by which inflow information contributes to the prediction of volatility as new information flows into the market.
We also examine the relationship between futures and stock market volatilities, as well as their predictability, for denser time intervals.


The remainder of this paper is organized as follows.
Section~\ref{Sec:motive} describes the motivation behind study.
Section~\ref{Sec:simple} introduces the basic unmarked Hawkes model and derives a volatility formula.
Section~\ref{Sec:mark} extends these results to the marked Hawkes model.
Section~\ref{Sec:application} presents the various empirical results.
Section~\ref{Sect:concl} concludes the paper.
The proofs are provided in Appendix~\ref{Sect:proof} and
an example estimation result can be found in Appendix~\ref{Sect:estimate}.

\newpage

\section{Motivation}~\label{Sec:motive}

As mentioned in the previous section, 
traditional volatility estimation methods that use high-frequency data, such as realized variance \citep{ABDL,Barndorff2002b}, typically assume that price changes follow a semimartingale process, such as a diffusion process \citep{mcaleer2008realized} or a diffusion process with jumps \citep{Barndorff2007}.
These methods are known to be reliable;
however, stock prices move in discrete steps because of their minimum tick size \citep{huang2001tick}.
To better address this discreteness, point process models, such as Hawkes processes, may offer a more natural approach. 
These models capture the time and clustering of trade arrivals and price movements, 
and provide an alternative perspective that enriches our understanding of market dynamics \citep{chen2022}.

The Hawkes process with an exponential kernel is a suitable fundamental model for representing tick movements. 
Studies on the microstructure of asset price movements suggest that new events such as price changes influence the intensity of future events. 
This influence dissipates over time and can be represented by exponential decay. 
Although the decay rate may not exactly follow an exponential function, 
and can align with a power law or other decreasing trends  \citep{bacry2012non,hardiman2013critical,bacry2016first}, 
the complexity of such kernels complicates volatility calculations and falls outside the scope of this study. 

Nevertheless, we explore the robustness of our method against different kernel properties through filtering effects using a simulation study.
Although the exponential Hawkes model is mathematically tractable for deriving various equations,
this becomes complex when numerous parameters are incorporated by relaxing the parameter constraints. 
Consequently, several studies focus on simpler symmetric models \citep{bacry2012non, Bacry2014,DaFonseca&Zaatour2014,lee2017modeling}.  
However, our research derives tractable formulas while relaxing the parameter constraints as much as possible.

An unmarked Hawkes process, 
which typically incorporates only the temporal sequence of events, may not sufficiently capture the complex dependencies between the size and time of each event in the tick structure.
The marked Hawkes process extends the standard Hawkes model by including marks that represent additional information such as the jump size in price changes. 
This allows for more detailed modeling of how specific characteristics of one event influence the intensity and nature of subsequent events. 
The additional information represented by the marks can include volumes, transaction sizes, bid-ask spreads, or other relevant market variables; however, in this study, we focus on the size of price changes.

In the marked Hawkes model setting, we aim to provide conditions that minimize the constraints while yielding a tractable volatility formula.
However, the incorporation of marks and their dependence structures complicates the mathematical derivations.
Therefore, to maintain mathematical tractability, we assume a linear function for the impact of jumps exceeding the basic tick size.
This assumption implies that the larger the jump size of an event, the greater its influence on future intensities.
We consider this assumption reasonably realistic.
Throughout the derivation process, we attempt to simplify the formulas as much as possible.
We present the variance formula derived under these conditions in Section~\ref{Sec:mark}.
  
With the advancements in technology, high-frequency financial data have evolved.
For example, for NYSE-listed stocks, the data were recorded with millisecond precision until July 2015. 
The recording frequency then increased to microseconds until around September 2018, and subsequently increased to nanoseconds.
Similarly, for Nasdaq-listed securities, data recording had a millisecond resolution up until July 2015, shifted to microseconds by around October 2016, and has been logged in nanoseconds since then \citep{Lee2023}.
This progression in the data recording frequency reflects an improvement in technological capabilities, allowing for a more detailed and precise analysis of market dynamics at increasingly finer resolutions.
Moreover, the sheer number of transactions has also increased over time.

However, data recorded at ultra-high frequencies inevitably include market microstructure noise, which can lead to unstable estimations if raw data are used directly for analysis. 
A simple exponential Hawkes model may not fit the raw data accurately~\citep{Lee2024}.
To address this issue, Subsection~\ref{Subsec:filter} describes a filtering method. 
Subsampling itself is a well-known concept for handling microstructure noise~\citep{AitSahalia2005}.
This technique is designed to refine data by reducing noise and enhancing the reliability of estimations, thus providing more stable and accurate results from high-frequency trading data.
This will be discussed later; however although further research is necessary to determine the optimal data-filtering method, 
the filtering approach presented in this study works sufficiently well with the proposed method.

Our model and its assumptions are summarized as follows: 
We apply a marked Hawkes model with an exponential kernel to sparsely observed high-frequency data and do not make any special assumptions such as symmetry, except for the decay parameter $\beta$. The mark represents the magnitude of price changes, and its future impact is modeled as a linear function. 
Mark distributions and intensities may or may not be dependent on one another.

\section{Basic Hawkes model}~\label{Sec:simple}

The microstructure of stock price dynamics at the intraday level can be described by two-point processes $N_i$ for $i=1,2$,
representing the times of up ($i=1$) and down ($i=2$) price movements, respectively.
In this section, we derive the formula of 
\begin{equation}
	\Var(N_1(t)- N_2(t)) \label{Eq:var}
\end{equation}
which measures the variability of price changes under the condition 
that $N_1$ and $N_2$ follow Hawkes processes with $N_1(0) = N_2(0) = 0$.
Accordingly, we derive the formulas for the moments of the intensity and counting processes of the basic bivariate Hawkes model in terms of the parameters.
The basic Hawkes model for intraday price dynamics does not include marks, which can be used to describe jump size or trade volume in microstructure price movements.
Thus, the basic Hawkes model only describes the distribution of the size of the intervals between events.

The derivation of moments in the basic Hawkes model is not entirely a new approach;
similar findings can be found in previous studies \citep{DaFonseca&Zaatour2014,lee2017modeling,cui_hawkes_yi_2020} as mentioned in Section~\ref{sec1}.
We review them in an organized manner to derive Eq.~\eqref{Eq:var}.
Hence, this formula can be used practically; 
for example, it can be applied directly to a computer programming code.
Deriving the variance formula of the basic Hawkes model is important 
because it serves as a building block for deriving a volatility formula in more complicated models.

The basic Hawkes process is defined in a probability space $(\Omega, \F, \PP)$.
There is a strictly increasing sequence of real-valued random  variables $\{ \tau_{i,n} \}$ that represent event times on the space with countable index $n$.
Assuming simple counting processes, the probability of two or more events occurring simultaneously is zero.
For practical purposes, only positive $\tau$ values are meaningful. 
However, to define the intensity processes later, 
nonpositive $\tau$ values are conceptually assumed to exist.

We use $N_i$ to denote the counting process as a stochastic process and a random measure
\[
N_i(t) = N_i((0,t]) = \sum_{n} \mathbbm{1}_{ \{ 0 < \tau_{i,n} \leq t \} } = \# \textrm{ of } \tau_{i,n} \in (0, t], \quad \textrm{for } i=1,2.
\]
The stochastic intensity $\lambda_i$ of the given counting process $N_i$ is a nonnegative and $\F$-predictable process such that
\begin{equation*}
	\E  \left[  \left. \int_s^t \lambda_i(u) \D u \right\rvert \F_s \right] =  \E[N_i(t) - N_i(u) \rvert \F_s ], \quad \textrm{for } s < t. 
\end{equation*}

\begin{assumption}[Unmarked model]~\label{Assum1}
The bivariate Hawkes processes and their corresponding intensity processes are defined as follows:
\[
\bm{N}_t = \begin{bmatrix} N_1(t) \\ N_2(t) \end{bmatrix}, \quad \bm{\lambda}_t = \begin{bmatrix} \lambda_1(t) \\ \lambda_2(t) \end{bmatrix},
\]
where
\begin{equation}
	\bm{\lambda}_t = \bm{\mu} + \int_{-\infty}^{t} \bm{h}(t-u) \D \bm{N}_u \label{Eq:lambda}
\end{equation}
with the matrix-vector multiplication in the integral;
$\bm{\mu} = [\mu_1, \mu_2]^{\top}$ is a positive constant vector representing the exogenous intensities; 
$\bm{h}$ is often called a kernel, such that
$$ \bm{h}(t) = \bm{\alpha} \circ
\begin{bmatrix}
	\ee^{-\beta_1 t} & \ee^{-\beta_1 t} \\
	\ee^{-\beta_2 t} & \ee^{-\beta_2 t}
\end{bmatrix}
$$ 
with the Hadamard (element-wise) product $\circ$;
$\bm{\alpha}$ is a positive $2\times 2$ constant matrix.
\end{assumption}

The order of the Hadamard product $\circ$ and matrix product in Eq.~\eqref{Eq:lambda} is from left to right.
Note that $\bm{\mu}$ and $\bm\alpha$ have no parameter restrictions on the requirement of positivity;
however, the decay rates $\beta$ have the same value for each $i$ to ensure the Markov property of the joint process $(\bm{N}, \bm{\lambda})$.

In a differential form, the intensity vector process is represented by
\begin{align*}
	\D \bm{\lambda}_t &= \bm{\beta} (\bm{\mu} - \bm{\lambda}_t) \D t + \bm{\alpha} \D \bm{N}_t \\
	&= \{ \bm{\beta}\bm{\mu} + (\bm{\alpha} - \bm{\beta}) \bm{\lambda}_t) \}\D t + \bm{\alpha} (\D \bm{N}_t - \bm{\lambda}_t \D t),
\end{align*}
where
$$ \bm{\beta} = \begin{bmatrix} \beta_1 & 0 \\ 0 & \beta_2 \end{bmatrix},$$
is a diagonal matrix.
In integral form,
\begin{align}
	\bm{\lambda}_t 
	={}& \bm{\lambda}_0 + \int_0^t \bm{\beta} (\bm{\mu} - \bm{\lambda}_s)\D s + \int_0^t \bm{\alpha} \D \bm{N}_s  \\
	={}& \bm{\lambda}_0 + \int_0^t \{ \bm{\beta}\bm{\mu} + (\bm{\alpha} - \bm{\beta}) \bm{\lambda}_s) \}\D s + \int_0^t \bm{\alpha} (\D \bm{N}_s - \bm{\lambda}_s \D s), \label{eq:lambda}
\end{align}
where the last term of Eq.~\eqref{eq:lambda} is a martingale.
The counting processes are right continuous, and the intensity processes are left continuous.
If the spectral radius of 
$$  \left\rvert \int_0^{\infty} \bm{h}(t) \D t \right\rvert, $$
is less than one, it is known that $\bm{N}$ has a unique stationary version.
By conditioning the Hawkes process to begin at $-\infty$, 
we assume that the unconditional distribution of the intensity processes at and after time zero will no longer change.
Therefore, we always assume that $\bm{\lambda}$ is in a steady state at time zero.
Hence, the unconditional joint distribution of 
$(\lambda_1(s), \lambda_2(s))$ and $(\lambda_1(t), \lambda_2(t))$ are identical for any time $s, t \geq 0$.

The steady state assumption implies that the intensity processes reach a long-term distribution; hence, the expectations of the intensity processes are constant.
The following formula is well-known.

\begin{proposition}~\label{Prop:E_lambda}
	Under Assumption~\ref{Assum1} and the steady state assumption,
	\begin{equation}
		\E[\bm{\lambda}_t] = (\bm{\beta} - \bm{\alpha})^{-1} \bm{\beta} \bm{\mu} = \left(\bm{\mathrm{I}} - \bm{\beta}^{-1}\bm{\alpha}\right)^{-1}\bm{\mu}. \label{Eq:E_lambda} 
	\end{equation}
\end{proposition}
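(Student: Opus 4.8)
The plan is to take expectations on both sides of the integral equation~\eqref{eq:lambda} and exploit the steady-state hypothesis. Write $\bm{m} := \E[\bm{\lambda}_t]$, which by stationarity does not depend on $t$. The last term in~\eqref{eq:lambda}, $\int_0^t \bm{\alpha}(\D\bm{N}_s - \bm{\lambda}_s\D s)$, is a mean-zero martingale started at $0$ — here I would invoke the sub-unit spectral radius condition on $\left|\int_0^\infty \bm{h}(t)\D t\right|$, which guarantees $\E[\bm{\lambda}_s] < \infty$ and hence that the compensated counting process is a genuine (not merely local) martingale — so its expectation vanishes. By Fubini and the defining property of the stochastic intensity, $\E\!\left[\int_0^t \bm{\lambda}_s\D s\right] = t\bm{m}$ and $\E[\bm{N}_t] = t\bm{m}$.

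Taking expectations in the second line of~\eqref{eq:lambda} then yields $\bm{m} = \bm{m} + t\{\bm{\beta}\bm{\mu} + (\bm{\alpha} - \bm{\beta})\bm{m}\}$ for every $t>0$, forcing the bracket to be zero, i.e. $(\bm{\beta} - \bm{\alpha})\bm{m} = \bm{\beta}\bm{\mu}$. The stability condition also makes $\bm{\beta} - \bm{\alpha}$ invertible (the eigenvalues of $\bm{\beta}^{-1}\bm{\alpha}$ lie strictly inside the unit disc, so those of $\bm{\mathrm I} - \bm{\beta}^{-1}\bm{\alpha}$ are bounded away from $0$), giving $\bm{m} = (\bm{\beta} - \bm{\alpha})^{-1}\bm{\beta}\bm{\mu}$. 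The second identity in~\eqref{Eq:E_lambda} is pure linear algebra: factor $\bm{\beta} - \bm{\alpha} = \bm{\beta}(\bm{\mathrm I} - \bm{\beta}^{-1}\bm{\alpha})$, whence $(\bm{\beta} - \bm{\alpha})^{-1}\bm{\beta} = (\bm{\mathrm I} - \bm{\beta}^{-1}\bm{\alpha})^{-1}\bm{\beta}^{-1}\bm{\beta} = (\bm{\mathrm I} - \bm{\beta}^{-1}\bm{\alpha})^{-1}$.

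An equivalent shortcut is to take expectations directly in the kernel form~\eqref{Eq:lambda}: $\E[\bm{\lambda}_t] = \bm{\mu} + \int_{-\infty}^{t}\bm{h}(t-u)\,\E[\D\bm{N}_u] = \bm{\mu} + \left(\int_0^\infty \bm{h}(s)\,\D s\right)\bm{m}$, and a direct computation of the Hadamard product gives $\int_0^\infty \bm{h}(s)\,\D s = \bm{\beta}^{-1}\bm{\alpha}$, so again $(\bm{\mathrm I} - \bm{\beta}^{-1}\bm{\alpha})\bm{m} = \bm{\mu}$. I would present the argument from~\eqref{eq:lambda} as the main line since that equation is already displayed. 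The only genuinely delicate points — and essentially the only ones worth writing out, given that the result is flagged as well known — are the integrability/martingale justification that lets us discard the compensated-jump term and apply Fubini, and the invertibility of $\bm{\beta} - \bm{\alpha}$; both follow from the stationarity assumption together with the spectral-radius condition stated just above the proposition, and everything else is a one-line rearrangement.
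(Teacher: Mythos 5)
Your proposal is correct and follows essentially the same route as the paper's proof: take expectations in Eq.~\eqref{eq:lambda}, use stationarity so $\E[\bm{\lambda}_t]=\E[\bm{\lambda}_s]=\E[\bm{\lambda}_0]$ and the compensated-jump term drops out, then conclude the integrand must vanish and solve the resulting linear equation. The additional points you flag (integrability justifying the vanishing expectation of the martingale term, invertibility of $\bm{\beta}-\bm{\alpha}$, and the kernel-form shortcut) are sound refinements the paper leaves implicit, but they do not change the argument.
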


\begin{proof}
See Appendix~\ref{Sect:proof}.
\end{proof}

To proceed, we use the concept of the quadratic (co)variation of semimartingales in a matrix version.
Consider the vectors of semimartingale processes $\bm{X}$ and $\bm{Y}$;
then the integration by parts of $\bm{X}_t \bm{Y}^{\top}_t$ is expressed in the following differential form:
\begin{align}
	\D (\bm{X}_t \bm{Y}^{\top}_t) &= \bm{X}_{t-} \D \bm{Y}^{\top}_t + \D(\bm{X}_t) \bm{Y}^{\top}_{t-} + \D [\bm{X}\bm{Y}^{\top}]_t,  \label{eq:quad}
\end{align}
where $\D(\bm{X}_t) \bm{Y}^{\top}_{t-}$ implies $(\bm{Y}_{t-} \D \bm{X}_t^{\top})^{\top} $
and $ [\bm{X}\bm{Y}^{\top}]_t $ denotes the matrix whose elements are the quadratic (co)variations for the entries of $\bm{X}_t \bm{Y}_t^{\top}$.

\begin{notation}~\label{Notataion1}
	If $\bm{\mathrm{x}}$ is a vector, then $\mathrm{Dg}(\bm{\mathrm{x}})$ denotes a diagonal matrix whose diagonal entries are composed of the elements of $\bm{\mathrm{x}}$.
	If $\bm{\mathrm{M}}$ is a square matrix, then $\mathrm{Dg}(\bm{\mathrm{M}})$ denotes a diagonal matrix whose diagonal entries are the diagonal parts of $\bm{\mathrm{M}}$.
	In addition, let $\mathcal{T}$ be an operator, such that
	$$ \mathcal{T}(\M) = \M + \M^{\top} $$
	for the square matrix $\M$.
\end{notation}

	The following lemma illustrates how the change over time in the expectation of the product of two point processes is expressed. 
	Based on this general formula, a system of equations for $\E[\bm{\lambda}_t \bm{\lambda}_t^{\top}]$ and $\E[\bm{\lambda}_t \bm{N}_t^{\top}]$ is established in Propositions~\ref{Prop:E_ll} and~\ref{Prop:E_lN}. Once $\E[\bm{\lambda}_t \bm{\lambda}_t^{\top}]$ and $\E[\bm{\lambda}_t \bm{N}_t^{\top}]$ have been computed, we can calculate the variance of the price change distribution over an interval.

\begin{lemma}\label{Lemma:quad}
	Consider $2\times 1$ vector processes $\bm{X}$ and $\bm{Y}$ such that
	\begin{align*}
		\D \bm{X}_t = \bm{a}_t \D t + \bm{f}_x(t) \D \bm{N}_t, \quad \D \bm{Y}_t = \bm{b}_t \D t + \bm{f}_y(t) \D \bm{N}_t
	\end{align*}
	where $\bm{a}$ and $\bm{b}$ are $2\times 1$ vector processes, and $\bm{f}_x$ and $\bm{f}_y$ are $2 \times 2$ matrix processes.
	Then
	$$ \D [\bm{X}\bm{Y}^{\top}]_t = \bm{f}_x(t) \mathrm{Dg} (\D \bm{N}_t) \bm{f}_y^{\top}(t).$$
	Thus, by Eq.~\eqref{eq:quad},
	\begin{align*}
		\D (\bm{X}_t \bm{Y}^{\top}_t) &= \bm{X}_{t-} \D \bm{Y}^{\top}_t + \D(\bm{X}_t) \bm{Y}^{\top}_{t-} + \bm{f}_x(t) \mathrm{Dg} (\D \bm{N}_t) \bm{f}_y^{\top}(t).
	\end{align*}
	In addition, when $\E[\bm{X}_t \bm{Y}^{\top}_t]$ exists,
	\begin{align*}
		\frac{\D\E[\bm{X}_t \bm{Y}^{\top}_t]}{\D t} = \E[ \bm{X}_{t-} (\bm{b}_t^{\top} + \bm{\lambda}_t^{\top} \bm{f}_y^{\top}(t))] + \E[(\bm{a}_t + \bm{f}_x(t)  \bm{\lambda}_t ) \bm{Y}^{\top}_{t-} ] 
		+ \bm{f}_x(t) \mathrm{Dg} (\E[\bm{\lambda}_t]) \bm{f}_y^{\top}(t).
	\end{align*}
	
\end{lemma}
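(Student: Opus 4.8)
The plan is to handle the three claims in order: the quadratic–covariation formula is the one substantive step, the integration‑by‑parts identity is then an immediate substitution, and the expectation formula follows by compensating $\D\bm{N}$ and discarding martingale terms. Throughout I work entry‑wise and then reassemble in matrix form.

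\emph{Quadratic covariation.} Both $\bm{X}$ and $\bm{Y}$ are semimartingales with no continuous martingale part (each is a drift plus a stochastic integral against the counting process $\bm{N}$), so the covariation is a pure jump sum: $[X_i,Y_j]_t=\sum_{s\le t}\Delta X_i(s)\,\Delta Y_j(s)$. The jumps are $\Delta\bm{X}_s=\bm{f}_x(s)\,\Delta\bm{N}_s$ and $\Delta\bm{Y}_s=\bm{f}_y(s)\,\Delta\bm{N}_s$, hence $\Delta X_i(s)\Delta Y_j(s)=\sum_{k,l}f_{x,ik}(s)f_{y,jl}(s)\,\Delta N_k(s)\Delta N_l(s)$. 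Because $N_1,N_2$ are simple with no common jumps, $\Delta N_k(s)\in\{0,1\}$ and $\Delta N_k(s)\Delta N_l(s)=\delta_{kl}\,\Delta N_k(s)$, so $\Delta X_i(s)\Delta Y_j(s)=\sum_k f_{x,ik}(s)f_{y,jk}(s)\,\Delta N_k(s)$; summing over $s\le t$ gives $\D[X_i,Y_j]_t=\sum_k f_{x,ik}(t)f_{y,jk}(t)\,\D N_k(t)$, i.e. $\D[\bm{X}\bm{Y}^{\top}]_t=\bm{f}_x(t)\,\mathrm{Dg}(\D\bm{N}_t)\,\bm{f}_y^{\top}(t)$. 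Inserting this into Eq.~\eqref{eq:quad} yields the second displayed identity.

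\emph{Taking expectations.} Write $\D\bm{N}_t=\bm{\lambda}_t\,\D t+\D\bm{M}_t$ with $\bm{M}$ the compensated martingale, so that $\D\bm{Y}_t^{\top}=(\bm{b}_t^{\top}+\bm{\lambda}_t^{\top}\bm{f}_y^{\top}(t))\D t+\D\bm{M}_t^{\top}\bm{f}_y^{\top}(t)$, $\D\bm{X}_t=(\bm{a}_t+\bm{f}_x(t)\bm{\lambda}_t)\D t+\bm{f}_x(t)\D\bm{M}_t$, and $\mathrm{Dg}(\D\bm{N}_t)=\mathrm{Dg}(\bm{\lambda}_t)\D t+\mathrm{Dg}(\D\bm{M}_t)$. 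Substituting into the integrated form of the second identity, all $\D\bm{M}$–terms are stochastic integrals with left‑continuous (hence predictable) integrands $\bm{X}_{s-},\bm{Y}_{s-}$ and coefficients $\bm{f}_x,\bm{f}_y$ (predictable, and deterministic in every application below); under the integrability conditions making the expectations below finite these are martingales null at $0$ and drop out. Using $\E[\mathrm{Dg}(\bm{\lambda}_s)]=\mathrm{Dg}(\E[\bm{\lambda}_s])$ and that $\bm{X}_{s-},\bm{Y}_{s-}$ may be replaced by $\bm{X}_s,\bm{Y}_s$ inside the $\D s$–integrals (the jump set is Lebesgue‑null), what remains is
\begin{align*}
\E[\bm{X}_t\bm{Y}_t^{\top}]=\E[\bm{X}_0\bm{Y}_0^{\top}]+\int_0^t\Big(&\E[\bm{X}_s(\bm{b}_s^{\top}+\bm{\lambda}_s^{\top}\bm{f}_y^{\top}(s))]+\E[(\bm{a}_s+\bm{f}_x(s)\bm{\lambda}_s)\bm{Y}_s^{\top}]\\
&{}+\bm{f}_x(s)\,\mathrm{Dg}(\E[\bm{\lambda}_s])\,\bm{f}_y^{\top}(s)\Big)\D s .
\end{align*}
The integrand is finite (here $\E[\bm{\lambda}_s]$ is the constant of Proposition~\ref{Prop:E_lambda}) and continuous in $s$, so differentiating in $t$ gives exactly the asserted formula for $\D\E[\bm{X}_t\bm{Y}_t^{\top}]/\D t$, the $\bm{X}_{t-},\bm{Y}_{t-}$ versus $\bm{X}_t,\bm{Y}_t$ distinction being immaterial inside the expectation at a.e.\ $t$.

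\emph{Main obstacle.} The delicate points are (i) the no‑common‑jumps identity $\Delta N_k\Delta N_l=\delta_{kl}\Delta N_k$, which is precisely where the simple‑counting‑process hypothesis is used and which carries the whole covariation computation, and (ii) upgrading the $\bm{M}$–integrals from local martingales to genuine martingales, which requires a second‑moment control on $\bm{X}_{s-},\bm{Y}_{s-}$ together with (local) boundedness of $\bm{f}_x,\bm{f}_y$ — automatic in the stationary Hawkes applications but worth flagging as a standing hypothesis.
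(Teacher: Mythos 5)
Your proof is correct. Note that the paper states Lemma~\ref{Lemma:quad} without proof; the closest argument it contains is the proof of the marked analogue, Lemma~\ref{Lemma:XY}, which uses exactly the decomposition you use (integration by parts, then replacing $\D \bm{N}$ by its compensator and discarding the martingale integrals), so your route is essentially the paper's. Your entry-wise computation of $\D[\bm{X}\bm{Y}^{\top}]_t$ via $\Delta N_k \Delta N_l = \delta_{kl}\Delta N_k$ supplies the step the paper takes for granted, and the caveats you flag --- that $\bm{f}_x,\bm{f}_y$ must be deterministic (or independent of $\bm{\lambda}$) for the last term to leave the expectation, and that integrability is needed to upgrade the compensated integrals to true martingales --- are precisely the implicit standing hypotheses under which the lemma is applied (where $\bm{f}_x,\bm{f}_y$ are the constant matrices $\bm{\alpha}$ or the identity).
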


\begin{proposition}~\label{Prop:E_ll}
	Under Assumption~\ref{Assum1} and the steady state assumption,
	we obtain the following Sylvester equation for $\E[\bm{\lambda}_t \bm{\lambda}_t^{\top}]$:
	\begin{equation}
		\E[\bm{\lambda}_t \bm{\lambda}_t^{\top}] (\bm{\alpha}- \bm{\beta} )^{\top}
		+ (\bm{\alpha}- \bm{\beta} ) \E[\bm{\lambda}_t \bm{\lambda}_t^{\top}]
		+ \E[\bm{\lambda}_t]   (\bm{\beta} \bm{\mu})^{\top}
		+ \bm{\beta} \bm{\mu} \E[\bm{\lambda}_t^{\top}]
		+ \bm{\alpha}\mathrm{Dg}(\E[\bm{\lambda}_t])\bm{\alpha}^{\top} = \bm{0} \label{Eq:E_ll}
	\end{equation}
	or using the Notation~\ref{Notataion1},
	$$
	\mathcal{T}\left((\bm{\alpha}- \bm{\beta} ) \E[\bm{\lambda}_t \bm{\lambda}_t^{\top}] + \bm{\beta} \bm{\mu} \E[\bm{\lambda}_t^{\top}] \right) + \bm{\alpha}\mathrm{Dg}(\E[\bm{\lambda}_t])\bm{\alpha}^{\top} = \bm{0}
	$$
	where $\bm{0}$ denotes the $2 \times 2$ zero matrix and $\E[\bm{\lambda}_t]$ satisfies Eq.~\eqref{Eq:E_lambda}.	
\end{proposition}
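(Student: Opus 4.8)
The plan is to apply Lemma~\ref{Lemma:quad} with $\bm{X} = \bm{Y} = \bm{\lambda}$ and then impose the steady-state assumption. First I would read off from the differential form of Eq.~\eqref{eq:lambda}, namely $\D\bm{\lambda}_t = \bm{\beta}(\bm{\mu}-\bm{\lambda}_t)\D t + \bm{\alpha}\D\bm{N}_t$, that $\bm{\lambda}$ fits the template of the lemma with $\bm{a}_t = \bm{b}_t = \bm{\beta}\bm{\mu} - \bm{\beta}\bm{\lambda}_t$ and constant coefficient matrices $\bm{f}_x(t) = \bm{f}_y(t) = \bm{\alpha}$. Substituting these into the last display of Lemma~\ref{Lemma:quad} gives
$$\frac{\D\E[\bm{\lambda}_t \bm{\lambda}_t^{\top}]}{\D t} = \E[\bm{\lambda}_{t-}(\bm{b}_t^{\top} + \bm{\lambda}_t^{\top}\bm{\alpha}^{\top})] + \E[(\bm{a}_t + \bm{\alpha}\bm{\lambda}_t)\bm{\lambda}_{t-}^{\top}] + \bm{\alpha}\,\mathrm{Dg}(\E[\bm{\lambda}_t])\bm{\alpha}^{\top}.$$

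Next I would simplify the two expectation terms. Because $\bm{\beta}$ is diagonal, hence symmetric, one has $\bm{b}_t^{\top} + \bm{\lambda}_t^{\top}\bm{\alpha}^{\top} = (\bm{\beta}\bm{\mu})^{\top} + \bm{\lambda}_t^{\top}(\bm{\alpha}-\bm{\beta})^{\top}$ and $\bm{a}_t + \bm{\alpha}\bm{\lambda}_t = \bm{\beta}\bm{\mu} + (\bm{\alpha}-\bm{\beta})\bm{\lambda}_t$. Since, for each fixed $t$, the left-continuous intensity satisfies $\bm{\lambda}_{t-} = \bm{\lambda}_t$ almost surely (the jump times of $\bm{N}$ form a null set), I can replace $\bm{\lambda}_{t-}$ by $\bm{\lambda}_t$ and pull the constant matrices $\bm{\alpha}-\bm{\beta}$ and $\bm{\beta}\bm{\mu}$ out of the expectations, obtaining
$$\frac{\D\E[\bm{\lambda}_t \bm{\lambda}_t^{\top}]}{\D t} = \E[\bm{\lambda}_t](\bm{\beta}\bm{\mu})^{\top} + \E[\bm{\lambda}_t\bm{\lambda}_t^{\top}](\bm{\alpha}-\bm{\beta})^{\top} + \bm{\beta}\bm{\mu}\,\E[\bm{\lambda}_t^{\top}] + (\bm{\alpha}-\bm{\beta})\E[\bm{\lambda}_t\bm{\lambda}_t^{\top}] + \bm{\alpha}\,\mathrm{Dg}(\E[\bm{\lambda}_t])\bm{\alpha}^{\top}.$$
Finally I would invoke the steady-state assumption: since $\bm{\lambda}$ is stationary, $t\mapsto\E[\bm{\lambda}_t\bm{\lambda}_t^{\top}]$ is constant, so the left-hand side vanishes; this is exactly Eq.~\eqref{Eq:E_ll}. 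Grouping $(\bm{\alpha}-\bm{\beta})\E[\bm{\lambda}_t\bm{\lambda}_t^{\top}] + \bm{\beta}\bm{\mu}\,\E[\bm{\lambda}_t^{\top}]$ with its transpose (using that $\E[\bm{\lambda}_t\bm{\lambda}_t^{\top}]$ is symmetric, and that $\bm{\alpha}\,\mathrm{Dg}(\E[\bm{\lambda}_t])\bm{\alpha}^{\top}$ is symmetric) gives the compact $\mathcal{T}$-form, and the value of $\E[\bm{\lambda}_t]$ is supplied by Proposition~\ref{Prop:E_lambda}.

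The main obstacle is the analytic bookkeeping rather than the algebra. One must know a priori that $\E[\bm{\lambda}_t\bm{\lambda}_t^{\top}]$ is finite, so that Lemma~\ref{Lemma:quad} applies and differentiation under the expectation is legitimate, and that the local-martingale part generated by $\D\bm{N}_t - \bm{\lambda}_t\D t$ is a genuine martingale so that expectations annihilate it; both follow from the subcriticality (spectral-radius) condition already assumed for the stationary version, but it deserves an explicit sentence. A secondary remark is that the resulting Sylvester equation determines $\E[\bm{\lambda}_t\bm{\lambda}_t^{\top}]$ uniquely precisely because the same condition forces $\bm{\alpha}-\bm{\beta}$ to be a stable (Hurwitz) matrix, so that no spectrum-overlap obstruction to solvability arises.
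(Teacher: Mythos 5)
Your proposal is correct and follows essentially the same route as the paper: apply Lemma~\ref{Lemma:quad} with $\bm{X}=\bm{Y}=\bm{\lambda}$ (drift $\bm{\beta}(\bm{\mu}-\bm{\lambda}_t)$, jump coefficient $\bm{\alpha}$), kill the martingale part under expectation, and use stationarity of $\E[\bm{\lambda}_t\bm{\lambda}_t^{\top}]$ to set the derivative to zero, yielding Eq.~\eqref{Eq:E_ll}. Your added remarks on finiteness of the second moment, the true-martingale property, and unique solvability of the Sylvester equation via the Hurwitz property of $\bm{\alpha}-\bm{\beta}$ are sensible points that the paper leaves implicit.
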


\begin{proof}
See Appendix~\ref{Sect:proof}.	
\end{proof}

Solving the Sylvester equation in Eq.~\eqref{Eq:E_ll} is equivalent to solving
\[
\{ \bm{\mathrm{I}} \otimes (\bm{\alpha}- \bm{\beta} ) + (\bm{\alpha}- \bm{\beta} )\otimes \bm{\mathrm{I}}\}  \vc \left( \E [\bm{\lambda}_t \bm{\lambda}_t^{\top}] \right) 
= - \vc  \left( \E[\bm{\lambda}_t]   (\bm{\beta} \bm{\mu})^{\top}
+ \bm{\beta} \bm{\mu} \E[\bm{\lambda}_t^{\top}]
+ \bm{\alpha}\mathrm{Dg}(\E[\bm{\lambda}_t])\bm{\alpha}^{\top} \right) 
\]
where $\otimes$ denotes the Kronecker product, and $\vc$ is the vectorization operator.
Assuming the steady state condition, the first and second moments of $\lambda$ do not depend on time.
However, $\E[\bm{\lambda}_t \bm{N}_t^{\top}]$ depends on time.

\begin{proposition}~\label{Prop:E_lN}
	Under Assumption~\ref{Assum1},
	for $\E[\bm{\lambda}_t \bm{N}_t^{\top}]$, 
	we have the following first order $2\times 2$ system of differential equations.
	\begin{align*}
		\frac{\D \E[\bm{\lambda}_t \bm{N}_t^{\top}]}{\D t} = (\bm{\alpha} - \bm{\beta} ) \E[\bm{\lambda}_t \bm{N}_t^{\top}] + \bm{\beta} \bm{\mu} \E[\bm{\lambda}_t^{\top}] t + \E[\bm{\lambda}_t \bm{\lambda}_t^{\top}] + \bm{\alpha}\mathrm{Dg}(\E[\bm{\lambda}_t]).
	\end{align*}
	Let $\V = \begin{bmatrix}\bm{\mathrm{v}}_1 & \bm{\mathrm{v}}_2 \end{bmatrix}$ be the eigenvector matrix of $\bm{\alpha} - \bm{\beta}$ 
	and $\xi_1$ and $\xi_2$ are corresponding eigenvalues.
	The solution of the system is as follows:
	\begin{align*}
		\E[\bm{\lambda}_t \bm{N}_t^{\top}] =
		\V \CC \circ \begin{bmatrix} \ee^{\xi_1 t} & \ee^{\xi_1 t} \\ \ee^{\xi_2 t} & \ee^{\xi_2 t} \end{bmatrix} + \Aa t + \BB
	\end{align*}
	where
	\begin{align}
		\Aa &= - (\bm{\alpha} - \bm{\beta} )^{-1} \bm{\beta}\bm{\mu} \E [\bm{\lambda}_t^{\top}] = \E[\bm{\lambda}_t] \E[\bm{\lambda}_t]^{\top} \label{Eq:basicA}\\
		\BB &= (\bm{\alpha} - \bm{\beta} )^{-1} \{\Aa - \E[\bm{\lambda}_t \bm{\lambda}_t^{\top} ] - \bm{\alpha}\mathrm{Dg}(\E[\bm{\lambda}_t]) \} \label{Eq:basicB}
	\end{align}
	and
	$\CC$ is a constant coefficient matrix that satisfies the initial condition:
	$$ \E[\bm{\lambda}_0 \bm{N}_0^{\top}] = \bm{0}.$$
	As $t \rightarrow \infty$, only the particular solution is significant, that is,
	\begin{align*}
		\E[\bm{\lambda}_t \bm{N}_t^{\top}] \approx
		\Aa t + \BB.
	\end{align*}
\end{proposition}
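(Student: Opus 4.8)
The plan is to read off the differential equation directly from Lemma~\ref{Lemma:quad} and then solve the resulting linear matrix ODE by the method of variation of constants. First I would specialize Lemma~\ref{Lemma:quad} to $\bm{X}_t = \bm{\lambda}_t$ and $\bm{Y}_t = \bm{N}_t$. From the differential form of $\bm{\lambda}_t$ stated before Eq.~\eqref{eq:lambda} one reads $\bm{a}_t = \bm{\beta}(\bm{\mu}-\bm{\lambda}_t)$, $\bm{f}_x(t)\equiv\bm{\alpha}$, while $\D\bm{N}_t = \bm{\mathrm{I}}\,\D\bm{N}_t$ gives $\bm{b}_t = \bm{0}$ and $\bm{f}_y(t)\equiv\bm{\mathrm{I}}$. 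Substituting into the last display of Lemma~\ref{Lemma:quad}, replacing $\bm{N}_{t-}$ by $\bm{N}_t$ (they agree outside a countable set), and using the steady-state identity $\E[\bm{N}_t] = \int_0^t\E[\bm{\lambda}_s]\,\D s = \E[\bm{\lambda}_t]\,t$ together with Proposition~\ref{Prop:E_lambda}, the term $\bm{\beta}\bm{\mu}\,\E[\bm{N}_t^{\top}]$ becomes $\bm{\beta}\bm{\mu}\,\E[\bm{\lambda}_t^{\top}]\,t$, and collecting terms yields exactly the claimed equation. By the steady-state reductions of Propositions~\ref{Prop:E_lambda} and \ref{Prop:E_ll}, the matrices $\E[\bm{\lambda}_t]$, $\Dg(\E[\bm{\lambda}_t])$ and $\E[\bm{\lambda}_t\bm{\lambda}_t^{\top}]$ are constant in $t$, so this is a linear first-order system in $\E[\bm{\lambda}_t\bm{N}_t^{\top}]$ with constant coefficient $\bm{\alpha}-\bm{\beta}$ and an inhomogeneous term that is affine in $t$.

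Next I would solve the system. Diagonalizing $\bm{\alpha}-\bm{\beta} = \V\,\Dg(\xi_1,\xi_2)\,\V^{-1}$, the matrix exponential $\e^{(\bm{\alpha}-\bm{\beta})t}$ equals $\V\,\Dg(\e^{\xi_1 t},\e^{\xi_2 t})\,\V^{-1}$, so the homogeneous solution can be written as $\V\bigl(\C\circ\begin{bmatrix}\e^{\xi_1 t}&\e^{\xi_1 t}\\\e^{\xi_2 t}&\e^{\xi_2 t}\end{bmatrix}\bigr)$ after absorbing the free constant matrix into $\C$. For a particular solution I would try the ansatz $\A t + \B$; equating the coefficient of $t$ gives $(\bm{\alpha}-\bm{\beta})\A + \bm{\beta}\bm{\mu}\,\E[\bm{\lambda}_t^{\top}] = \bm{0}$, which is Eq.~\eqref{Eq:basicA}, and equating the constant term gives $\A = (\bm{\alpha}-\bm{\beta})\B + \E[\bm{\lambda}_t\bm{\lambda}_t^{\top}] + \bm{\alpha}\,\Dg(\E[\bm{\lambda}_t])$, which rearranges to Eq.~\eqref{Eq:basicB}; invertibility of $\bm{\alpha}-\bm{\beta}$ is guaranteed because the stationarity hypothesis (spectral radius of $|\int_0^\infty\bm{h}(t)\,\D t|$ below $1$) forces $\xi_1,\xi_2$ to have strictly negative real part and hence to be nonzero. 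The identity $\A = \E[\bm{\lambda}_t]\E[\bm{\lambda}_t]^{\top}$ then follows from Proposition~\ref{Prop:E_lambda}, since $-(\bm{\alpha}-\bm{\beta})^{-1}\bm{\beta}\bm{\mu} = (\bm{\beta}-\bm{\alpha})^{-1}\bm{\beta}\bm{\mu} = \E[\bm{\lambda}_t]$.

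Finally I would pin down $\C$ from the initial condition: since $\bm{N}_0 = \bm{0}$ and all exponential factors equal $1$ at $t=0$, the requirement $\E[\bm{\lambda}_0\bm{N}_0^{\top}] = \bm{0}$ gives $\V\C + \B = \bm{0}$, i.e. $\C = -\V^{-1}\B$. As $t\to\infty$, the negativity of $\mathrm{Re}(\xi_i)$ makes the homogeneous term decay to zero, leaving the asymptotic $\E[\bm{\lambda}_t\bm{N}_t^{\top}]\approx\A t + \B$.

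The algebra above is routine once the ODE is available, so the delicate part is really the first step: one must justify that $\E[\bm{\lambda}_t\bm{N}_t^{\top}]$ is finite and differentiable so the derivative may be passed inside the expectation, that $\bm{N}_{t-}$ may be replaced by $\bm{N}_t$ under the expectation, and that $\E[\bm{N}_t] = \E[\bm{\lambda}_t]\,t$ under the steady-state convention — in short, that the data feeding the linear system are genuinely time-constant. All of these rest on the stationarity assumption and the moment control provided by Propositions~\ref{Prop:E_lambda} and \ref{Prop:E_ll}.
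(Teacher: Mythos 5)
Your proposal is correct and follows essentially the same route as the paper: the ODE is obtained exactly as in the paper's proof by specializing Lemma~\ref{Lemma:quad} to $\bm{X}_t=\bm{\lambda}_t$, $\bm{Y}_t=\bm{N}_t$ (with $\bm{a}_t=\bm{\beta}(\bm{\mu}-\bm{\lambda}_t)$, $\bm{f}_x=\bm{\alpha}$, $\bm{b}_t=\bm{0}$, $\bm{f}_y=\bm{\mathrm{I}}$) and using $\E[\bm{N}_t]=\E[\bm{\lambda}_t]\,t$ under stationarity. The only addition is that you spell out the routine solution of the resulting constant-coefficient linear system (diagonalization, affine ansatz for $\A t+\B$, and $\C=-\V^{-1}\B$ from the initial condition), which the paper states in the proposition but leaves implicit in its proof.
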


\begin{proof}
	See Appendix~\ref{Sect:proof}.
\end{proof}

Note that even if we assume that $\bm{\lambda}$ is in a steady state at time 0, $\E[\bm{\lambda}_t \bm{N}_t^{\top}]$ has an exponentially converging term.
However, the homogeneous solution of the exponential is rather insignificant with a relatively large $t$ under the condition that the spectral radius is less than 1.
Hence, $\E[\bm{\lambda}_t \bm{N}_t^{\top}]$ can be approximated as an affine form of time $t$.
With this approximation, the variance of the net number of up and down movements between zero and $t$ 
is presented below.
To compute the variance of price change, the formula should be adjusted based on the minimum tick size.

\begin{theorem}\label{Thm:simple_vol}
	We have
	\begin{align}
		\Var(N_1(t) - N_2(t)) 
		&\approx \U^{\top} \left( 2 \BB  +  \Dg(\E[\bm{\lambda}_t])  \right)\U t \label{Eq:var1}
	\end{align}
	where $\BB$ is defined by Eq.~\eqref{Eq:basicB} and $\bm{\mathrm{u}} = \begin{bmatrix} 1 & -1 \end{bmatrix}^{\top}.$
	
\end{theorem}

\begin{proof}
See Appendix~\ref{Sect:proof}.
\end{proof}

\begin{remark}\label{Remark:simple}

The variance formula presented in Eq.~\eqref{Eq:var1} is simplified under the symmetric self and mutually excited Hawkes model \citep{Bacry2014,lee2017modeling}. 
For this specific model setup, the parameters are defined as:
$$ \bm{\mu} = \begin{bmatrix} \mu \\ \mu \end{bmatrix}, \quad 
\bm{\alpha} = \begin{bmatrix} \alpha_1 & \alpha_2 \\ \alpha_2 & \alpha_1 \end{bmatrix}, \quad
\bm{\beta} = \begin{bmatrix} \beta & 0 \\ 0 & \beta \end{bmatrix}.
$$
Assuming symmetry, the mean intensities for the up and down processes are identical, leading to
\begin{equation}
\E[\lambda_t] := \E[\lambda_1(t)] = \E[\lambda_2(t)] = \frac{\mu \beta}{\beta - \alpha_1 - \alpha_2}. \label{Eq:E_lambda_simple}
\end{equation}
In addition, $\E[\bm{\lambda}_t \bm{\lambda}_t^{\top}]$ is reduced to 
\begin{equation}
\E[\bm{\lambda}_t \bm{\lambda}_t^{\top}] = (\bm{\beta} - \bm{\alpha})^{-1} \left( \frac{1}{2}\bm{\alpha} \Dg (\E [\bm{\lambda}_t]) \bm{\alpha} + \bm{\beta} \bm{\mu} \E [\bm{\lambda}_t^{\top}] 
\right) \label{Eq:Ell_reduce}
\end{equation}
and
\begin{align*}
	\BB &= (\bm{\alpha} - \bm{\beta} )^{-1} \{\Aa - \E[\bm{\lambda}_t \bm{\lambda}_t^{\top} ] - \bm{\alpha}\mathrm{Dg}(\E[\bm{\lambda}_t]) \} \\
	&= (\bm{\alpha} - \bm{\beta} )^{-1} \left\{(\bm{\beta} - \bm{\alpha})^{-1} \bm{\beta} \bm{\mu} \E [\bm{\lambda}_t^{\top}] - (\bm{\beta} - \bm{\alpha})^{-1} \left( \frac{1}{2}\bm{\alpha} \Dg (\E [\bm{\lambda}_t]) \bm{\alpha} + \bm{\beta} \bm{\mu} \E [\bm{\lambda}_t^{\top}] \right) - \bm{\alpha}\mathrm{Dg}(\E[\bm{\lambda}_t])  \right\} \\
	&= \E[\lambda_t] \left\{ \frac{1}{2}(\bm{\alpha} - \bm{\beta} )^{-2} \bm{\alpha}^2 - (\bm{\alpha} - \bm{\beta} )^{-1} \bm{\alpha}    \right\} \\
	&= \E[\lambda_t] \left( \frac{1}{2 (\beta - \alpha_1 - \alpha_2)^2 (\beta - \alpha_1 + \alpha_2)^2}
	\begin{bmatrix} (\alpha_1 - \beta)^2 + \alpha_2^2 & -2 \alpha_2 (\alpha_1 - \beta) \\ -2 \alpha_2 (\alpha_1 - \beta) & (\alpha_1 - \beta)^2 + \alpha_2^2 \end{bmatrix}
	\begin{bmatrix} \alpha_1^2 + \alpha_2^2 & 2\alpha_1\alpha2 \\
	2 \alpha_1 \alpha_2 & \alpha_1^2 + \alpha_2^2 \end{bmatrix}  \right. \\
	 & \left. \phantom{\E[\lambda_t]abc} - \frac{1}{(\beta - \alpha_1 - \alpha_2) (\beta - \alpha_1 + \alpha_2)} 
	 \begin{bmatrix} \alpha_1 - \beta & -\alpha_2 \\ -\alpha_2 & \alpha_1 - \beta \end{bmatrix}
	 \begin{bmatrix} \alpha_1 & \alpha_2 \\ \alpha_2 & \alpha_1 \end{bmatrix} \right).
\end{align*}
Given that
\[
\U^{\top} \BB \U  = \E[\lambda_t] \left( \frac{(\alpha_1 - \alpha_2)^2}{(\beta - \alpha_1 + \alpha_2)^2} + \frac{2(\alpha_1 - \alpha_2)}{\beta - \alpha_1 + \alpha_2} \right),
\]
we obtain
\begin{align*}
	\Var(N_1(t) - N_2(t)) 
	&\approx \U^{\top} \left( 2 \BB  +  \Dg(\E[\bm{\lambda}_t])  \right)\U t = 2 \E[\lambda_t] \left( \frac{\alpha_1 - \alpha_2}{\beta - \alpha_1 + \alpha_2} + 1\right)^2 t \\
	&= \frac{2\mu \beta^3 t}{(\beta - \alpha_1 - \alpha_2)(\beta - \alpha_1 + \alpha_2)^2}.
\end{align*}

\end{remark}

\section{Marked model}~\label{Sec:mark}

In this section, we introduce a marked Hawkes model that incorporates both the time and size of the asset price movement.
This section can be regarded as a general version of \cite{lee2017marked}.
As in the previous section, using an unmarked model, we present the variance formula of the marked Hawkes model.
The marked model differs from the unmarked model in two ways.
First, the effect of the jump size on future intensity should be considered.
Second, the possibility of dependence between jump size and other underlying processes, such as $\N$ and $\bm \lambda$, should be considered.

For $i = 1, 2$, let 
$$E_i = \mathbb{N} \times \{i\}$$ 
be the spaces of the mark (jump) sizes for up and down price movements.
Moreover, let
$$ E = E_1 \cup E_2 = \mathbb{N} \times \{1, 2\}.$$
In this study, the jump size space is $\mathbb{N}$, which is natural numbers, because it is a multiple of the minimum tick size defined in the stock market.

Let $\mathcal{E}_i$ be the $\sigma$-algebra defined on $E_i$.
We have a sequence of $(E_i, \mathcal{E}_i)$-valued random variables $\{Z_{i,n}\}$ in addition to the sequence of random times $\{\tau_{i, n}\}$ for each $i$.
The sequence of random variables $Z_{i,n}$ represents the movement size in the mid-price process, and $\tau_{i,n}$ represents the time of the event.
The random measure $M_i$ is defined in the product of time and jump size space, $\mathbb{\bar R} \times E_i$, such that
$$M_i(\D u \times \D z_i) = \sum_{n} \delta_{\tau_{i,n}, Z_{i,n}}(\D u \times \D z_i)$$
with the Dirac measure $\delta$, which is defined as
$$ \delta_{\tau_{i,n}, Z_{i,n}} (I \times A_i) =
\left\{\begin{array}{lr}
	1, \text{ if } \tau_{i,n} \in I \text{ and } Z_{i,n} \in A_i,  \\
	0, \text{ otherwise,}
\end{array}\right.
$$
for $A_i \subset E_i$ and any time interval $I$.
Note that $z_i$ is a variable associated with the measure $M_i$.
This measure is also called a marked point process with a mark space $(E_i, \mathcal{E}_i)$.

Consider a vector representation of the random measures
$$
\bm{M}(\D u \times \D z) = \begin{bmatrix} M_1(\D u \times \D z_1) \\ M_2(\D u \times \D z_2) \end{bmatrix}
$$
and a vector of c\`adl\`ag counting processes defined by
$$ \bm{N}_t = \int_{(0,t] \times E} \Dg(z) \bm{M}(\D u \times \D z)$$
where the integration is applied element-wise and
$$\Dg(z) = \begin{bmatrix} z_1 & 0 \\ 0 & z_2 \end{bmatrix}. $$
That is, each element of $\bm{N}_t$ counts the number of events, with the weights of jump sizes at that time for up and down mid-price movements.

\begin{assumption}[Marked model]~\label{Assum:g}
The intensity process is defined as follows.
\begin{equation}
	\bm{\lambda}_t = \bm{\mu} + \int_{(-\infty,t] \times E} \bm{h}(t-u, z) \bm{M}(\D u \times \D z)
\end{equation}
where $\bm{\mu} = [\mu_1, \mu_2]^{\top}$ denotes a positive constant vector. 
\begin{equation}
\bm{h}(t, z) = (\bm{\alpha}  + \bm{g}(z)) \circ
\begin{bmatrix}
	\ee^{-\beta_1 t} & \ee^{-\beta_1 t} \\
	\ee^{-\beta_2 t} & \ee^{-\beta_2 t}
\end{bmatrix} \label{Eq:h}
\end{equation}
and $\bm{\alpha}$ is a positive constant $2\times 2$ matrix such that
$$ \bm{\alpha} =  \begin{bmatrix} \alpha_{11} & \alpha_{12} \\ \alpha_{21} & \alpha_{22} \end{bmatrix}.$$
Unlike the unmarked model, the marked model has a function $\bm{g}$ that represents the future impact of a jump size.	
We assume a linear impact function for $\bm{g}$ such that
\begin{equation}
	\bm{g}(z) =  \bm{\eta} \circ (\Z - 1)\label{Eq:g}
\end{equation}
where $\bm{\eta}$ is a positive $2\times 2$ parameter matrix and $\bm{\mathrm{Z}}$ is a matrix such that
$$\bm{\eta} = \begin{bmatrix} \eta_{11} & \eta_{12} \\ \eta_{21} & \eta_{22} \end{bmatrix}, \quad \bm{\mathrm{Z}} = \begin{bmatrix} z_1 & z_2 \\ z_1 & z_2 \end{bmatrix}.$$
\end{assumption}

Under the aforementioned assumption, each row of the exponential kernel matrix in Eq.~\eqref{Eq:h} retains the Markov property by using a uniform decay rate, $\beta_i$. 
Although there may be several candidates for $\bm{g}$, we choose a linear impact function for simplicity.
By using $\ZZ - 1$, taking minus one for every element of $\ZZ$,  the linear impact function $\bm{g}$ is responsible only for the future effect of movements larger than the minimum tick size and, the future impact is proportional to $\bm{\eta}$.

Consider a matrix of predictable stochastic functions $\bm{f}$ such that
$$ \bm{f}(s,z) = \begin{bmatrix} f_{11}(s, z_1) & f_{12}(s, z_2) \\ f_{21}(s, z_1) & f_{22}(s, z_2) \end{bmatrix}.$$
Then, the matrix of processes is defined as
\begin{equation}
	\int_{(0,t] \times E} \bm{f}(s,z) \bm{M}(\D s \times \D z) - \int_{(0,t] \times E}  \bm{f}(s,z) \bm{\nu} (\D s, \D z), \label{Eq:comp}
\end{equation}
which is a matrix of martingales, where $\bm{\nu}$ is a vector of compensator measures for $\bm{M}$. 
In other words, every element of the matrix in Eq.~\eqref{Eq:comp} is a martingale with respect to $\mathcal F$.
In addition, we assume that the conditional probability distribution of the mark size can be separated, given that an event occurs.
Then 
$$ \int_{(0,t] \times E}  \bm{f}(s,z) \bm{\nu} (\D s, \D z)  = \int_0^t \int_E \bm{f}(s,z) \circ \bm{K}(s, \D z)  \bm{\lambda}_s  \D s $$
where $\bm{K}$ is a matrix of random measures, such that
$$ \bm{K}(t, \D z) = \begin{bmatrix} k_1 (t, \D z_1) &  k_2 (t, \D z_2) \\ k_1 (t, \D z_1) &  k_2 (t, \D z_2) \end{bmatrix} $$
and for any time $t$,
$$ \int_E k_i (t, \D z_i)  = 1 \text{ for }i = 1, 2$$
represents the conditional distribution of the mark at time $t$.

The following notations are used:
\begin{equation}
	\mathbb{K}_t [\bm{f}(t,z)] = \int_E \bm{f}(t,z) \circ  \bm{K}(t, \D z).\label{Eq:K}
\end{equation}
For example, 
$$ \mathbb{K}_t [\Z] = \begin{bmatrix} \int_E z_1 k_1 (t, \D z_1)  & \int_E z_2 k_2 (t, \D z_2)  \\ 
	\int_E z_1 k_1 (t, \D z_1) & \int_E z_2  k_2 (t, \D z_2)  \end{bmatrix}.$$
Hence, each column represents the conditional expectation of the mark size at time $t$, given that a corresponding event occurs at time $t$.
Similarly,
$$ \mathbb{K}_t [\ZZ^{\circ 2}] = \begin{bmatrix} \int_E z_1^2 k_1 (t, \D z_1) & \int_E z_2^2 k_2 (t, \D z_2) \\ 
	\int_E z_1^2 k_1 (t, \D z_1) & \int_E z_2^2  k_2 (t, \D z_2) \end{bmatrix}$$
represents the conditional second moments of the mark sizes, where ${\enspace}^{\circ 2}$ denotes the Hadamard power.

The vector of the intensity processes is represented by
\begin{align}
	\bm{\lambda}_t 
	= \bm{\lambda}_0 + \int_0^t \bm{\beta}(\bm{\mu} -  \bm{\lambda}_s) \D s + \int_{(0,t] \times E}  (\bm{\alpha} + \bm{g}(z)) \bm{M}(\D s \times \D z) \label{eq:marklambda}
\end{align}
or using a compensator and the notation in Eq.~\eqref{Eq:K},
\begin{align*}
	\bm{\lambda}_t 
	= \bm{\lambda}_0 &+ \int_0^t \left\{ \bm{\beta}\bm{\mu} + \left( \bm{\alpha} + \mathbb{K}_s[\bm{g}(z)]  - \bm{\beta}) \bm{\lambda}_s \right) \right\}\D s \\
	&+  \int_{(0,t] \times E}  (\bm{\alpha} + \bm{g}(z)) \bm{M}(\D s \times \D z) - 
	\int_0^t  \left( \bm{\alpha}  + \mathbb{K}_s[\bm{g}(z)] \right) \bm{\lambda}_s \D s.
\end{align*}

One of the difficulties in applying the marked model is performing appropriate modeling and quantitative analysis if the distribution of marks is dependent on the underlying processes.
For this dependency structure, we use the following conditional expectations, 
without making any parametric assumptions, to calculate future volatility, using the following definition.

\begin{definition}\label{def:matrix}
	For a $2\times 2$ matrix process $\bm{H}(t)$, 
	\begin{align}
		\overline \ZZ_{\bm{H}} (t) &= \E \left[ \mathbb K_t[\bm{\mathrm{Z}}] \circ \bm{H}_t \right]   \oslash \E[\bm{H}_t],  \label{Eq:ZH}\\
		\overline \ZZ^{(2)}_{\bm{H}} (t) &= \E \left[ \mathbb K_t[\bm{\mathrm{Z}}^{\circ 2}] \circ \bm{H}_t \right]   \oslash \E[\bm{H}_t],  \label{Eq:ZH2}
	\end{align}
	where $\oslash$ denotes the Hadamard division and ${\enspace}^{\circ 2}$ denotes the Hadamard power.	
	These imply covariance adjusted expectations of the mark or squared mark.
\end{definition}

The estimators of Eqs.~\eqref{Eq:ZH}~and~\eqref{Eq:ZH2} can be defined using the sample means of the observed jumps,
as the bars of $\ZZ$ imply.
In addition, if $\bm{H}_t$ and $\bm{K}(t, \D z)$ are independent, then,
\begin{align*}
	\overline \ZZ_{\bm{H}} (t) &= \E \left[ \mathbb K_t[\bm{\mathrm{Z}}] \right]  \\
	\overline \ZZ^{(2)}_{\bm{H}} (t) &= \E \left[ \mathbb K_t[\bm{\mathrm{Z}}^{\circ 2}] \right].
\end{align*}
When 
$$\bm{H}_t = \bm{1}\bm{\lambda}_t^{\top} = \begin{bmatrix}\lambda_1(t) & \lambda_2(t) \\ \lambda_1(t) & \lambda_2(t) \end{bmatrix},$$ 
the subscript of $\overline \ZZ$ is omitted for simplicity.
We assume that $\overline \ZZ (t)$ and $\overline \ZZ_{\bm{\lambda\lambda}^{\top}} (t)$ do not depend on time.
The elements of the matrices can be represented as, for example, 
$$ [\overline \ZZ]_{i,j} = \frac{\E \left[ \lambda_j(t)  \int_E z_j k_j(t, \D z_j)  \right]}{\E \left[ \lambda_j(t) \right]},$$
$$ [\overline \ZZ_{\bm{\lambda}\bm{\lambda}^{\top}}]_{i,j} = \frac{\E \left[ \lambda_i(t)\lambda_j(t)  \int_E z_j k_j(t, \D z_j)  \right]}{\E \left[ \lambda_i(t)\lambda_j(t) \right]}.$$

Whether the mark is dependent or independent of the underlying processes, $\overline \ZZ$s can be computed from the data in a semi-parametric manner using conditional sample means.
For example, for $i=1$, we assume that we observe arrival times and corresponding mark sizes
$$ \{ (\tau_{1,1}, Z_{1,1}), (\tau_{1,2}, Z_{1,2}),  \cdots, (\tau_{1,N}, Z_{1,N}) \}. $$
Then, to estimate $[\overline \ZZ]_{i,j}$, we compute
$$ \frac{ \sum_{n=1}^{N} \hat \lambda_1(\tau_{1,n}) Z_{1,n}}{\sum_{n=1}^{N} \hat \lambda_1(\tau_{1,n})}  $$
where $ \hat \lambda_1$ denotes the fitted intensity.

As described in the previous section, we assume the steady state condition.
We obtain a similar formula for the expected intensities under the marked model in Proposition~\ref{Prop:E_lambda}.

\begin{proposition}~\label{Prop:E_lambda2}

	Under Assumption~\ref{Assum:g} and assuming the steady state condition, we have:

	\begin{equation}
		\E[\bm{\lambda}_t] = ( \bm{\beta} - \bm{\alpha} + \bm{\eta} -  \bm{\eta} \circ \overline \ZZ)^{-1}\bm{\beta}\bm{\mu}.\label{Eq:E_lambda2}
	\end{equation}
\end{proposition}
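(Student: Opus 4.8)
The plan is to take expectations on both sides of the compensated integral representation of $\bm{\lambda}_t$ written just after Eq.~\eqref{eq:marklambda}, use the martingale property to discard the stochastic-integral term, and then invoke stationarity to turn the resulting integral identity into a purely algebraic linear equation for $\E[\bm{\lambda}_t]$. Concretely, since $\int_{(0,t]\times E}(\bm{\alpha}+\bm{g}(z))\bm{M}(\D s\times\D z)-\int_0^t(\bm{\alpha}+\mathbb{K}_s[\bm{g}(z)])\bm{\lambda}_s\,\D s$ is a (matrix) martingale that equals $\bm{0}$ at $t=0$, and assuming the steady-state moments are finite so the interchange of $\E$ and $\int$ is valid, one obtains
$$\E[\bm{\lambda}_t] = \E[\bm{\lambda}_0] + \int_0^t \Big( \bm{\beta}\bm{\mu} + \E\big[(\bm{\alpha} + \mathbb{K}_s[\bm{g}(z)] - \bm{\beta})\bm{\lambda}_s\big] \Big)\,\D s .$$

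The heart of the argument is evaluating the cross term $\E[\mathbb{K}_s[\bm{g}(z)]\bm{\lambda}_s]$ while allowing the marks to depend on $\bm{\lambda}$. From $\bm{g}(z)=\bm{\eta}\circ(\Z-1)$, the linearity of $\mathbb{K}_s[\cdot]$, and $\int_E k_i(s,\D z_i)=1$ (so that $\mathbb{K}_s$ maps the all-ones matrix to itself), I would write $\mathbb{K}_s[\bm{g}(z)] = \bm{\eta}\circ\mathbb{K}_s[\Z] - \bm{\eta}$. Expanding the matrix--vector product entrywise gives $\big[(\bm{\eta}\circ\mathbb{K}_s[\Z])\bm{\lambda}_s\big]_i = \sum_j \eta_{ij}\,\lambda_j(s)\int_E z_j\,k_j(s,\D z_j)$, and after taking expectations each factor $\E\big[\lambda_j(s)\int_E z_j k_j(s,\D z_j)\big]$ is, by Definition~\ref{def:matrix} applied with $\bm{H}_s=\bm{1}\bm{\lambda}_s^{\top}$, precisely $[\overline{\Z}]_{i,j}\,\E[\lambda_j(s)]$. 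Hence $\E[(\bm{\eta}\circ\mathbb{K}_s[\Z])\bm{\lambda}_s]=(\bm{\eta}\circ\overline{\Z})\E[\bm{\lambda}_s]$, so $\E[\mathbb{K}_s[\bm{g}(z)]\bm{\lambda}_s]=(\bm{\eta}\circ\overline{\Z}-\bm{\eta})\E[\bm{\lambda}_s]$, and therefore $\E[(\bm{\alpha}+\mathbb{K}_s[\bm{g}(z)]-\bm{\beta})\bm{\lambda}_s]=(\bm{\alpha}-\bm{\beta}-\bm{\eta}+\bm{\eta}\circ\overline{\Z})\E[\bm{\lambda}_s]$; here the matrix written $\bm{\eta}\circ\bm{\Z}$ in the statement is to be read as $\bm{\eta}\circ\overline{\Z}$, which the steady-state hypothesis renders time-independent.

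To finish, stationarity at and after time $0$ gives $\E[\bm{\lambda}_s]=\E[\bm{\lambda}_0]$ for all $s\ge 0$, so the integrand in the displayed identity is constant in $s$ and $\D\E[\bm{\lambda}_t]/\D t\equiv\bm{0}$; equivalently, equating that constant integrand to $\bm{0}$ yields $\bm{0}=\bm{\beta}\bm{\mu}+(\bm{\alpha}-\bm{\beta}-\bm{\eta}+\bm{\eta}\circ\overline{\Z})\E[\bm{\lambda}_t]$, i.e. $(\bm{\beta}-\bm{\alpha}+\bm{\eta}-\bm{\eta}\circ\overline{\Z})\E[\bm{\lambda}_t]=\bm{\beta}\bm{\mu}$. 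Left-multiplying by the inverse of $\bm{\beta}-\bm{\alpha}+\bm{\eta}-\bm{\eta}\circ\overline{\Z}$ — which is nonsingular exactly because the stationarity (spectral-radius) condition makes $\bm{\mathrm{I}}$ minus the effective branching matrix invertible, just as in Proposition~\ref{Prop:E_lambda} — gives Eq.~\eqref{Eq:E_lambda2}. I expect the main obstacle to be the middle step: carefully pushing the expectation through the Hadamard products and identifying the mark/intensity covariance exactly as the covariance-adjusted expectation $\overline{\Z}$, together with verifying the integrability that licenses taking expectations inside the compensated stochastic integral.
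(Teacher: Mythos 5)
Your proposal is correct and follows essentially the same route as the paper: take expectations of the compensated integral form of $\bm{\lambda}_t$, use the identity $\E[\mathbb{K}_t[\bm{g}(z)]\bm{\lambda}_t]=\bm{\eta}\circ(\overline{\Z}-1)\E[\bm{\lambda}_t]$ (which you re-derive entrywise rather than citing Lemma~\ref{Lemma:EK}), and invoke stationarity to reduce to the linear equation $(\bm{\beta}-\bm{\alpha}+\bm{\eta}-\bm{\eta}\circ\overline{\Z})\E[\bm{\lambda}_t]=\bm{\beta}\bm{\mu}$. Your reading of $\bm{\Z}$ in Eq.~\eqref{Eq:E_lambda2} as the covariance-adjusted mean $\overline{\Z}$ is also the intended one.
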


\begin{proof}
See Appendix~\ref{Sect:proof}.
\end{proof}

After simple but tedious calculations, we have the following lemmas.
The results involve calculating the expectations of stochastic processes represented by random measures. 
These serve as intermediate steps to obtain $\E[\bm{\lambda}_t \bm{\lambda}_t^{\top}]$.
The main purpose of Lemma~\ref{Lemma:EK} is to separate terms related to jumps from the expectation of the stochastic process.
\begin{lemma}~\label{Lemma:EK}
	Under Assumption~\ref{Assum:g}, using the definitions provided in Eqs.~\eqref{Eq:K},~\eqref{Eq:ZH}, and~\eqref{Eq:ZH2}, we obtain
	\begin{align*}
		&\E\left[\mathbb{K}_t [\bm{g}(z)] \bm{X}_t \right] =  \bm\eta \circ (\overline \ZZ_{\bm{1}\bm{X}^{\top}}(t) - 1 )\E[\bm{X}_t]  \\
		&\E\left[\mathbb{K}_t [\bm{g}(z)]  \bm{H}_t \right] = \bm\eta \left((\overline \ZZ_{\bm{H}} (t) - 1)\circ \E[\bm{H}_t]\right) \\
		&\E\left[\bm{H}_t \mathbb{K}_t [\bm{g}(z)]^{\top} \right] = \left((\overline \ZZ_{\bm{H}} (t) - 1)\circ \E[\bm{H}_t]\right) \bm\eta^{\top}
	\end{align*}
 	for a $2\times 1$ vector process $\bm{X}(t)$ and a $2\times 2$ matrix process $\bm{H}(t)$.
\end{lemma}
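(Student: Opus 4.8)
The plan is to strip off the deterministic matrix $\bm{\eta}$ first, reduce every term to the conditional mean-mark matrix $\mathbb{K}_t[\Z]$, and only at the very end invoke Definition~\ref{def:matrix}. First I would substitute the linear impact $\bm{g}(z)=\bm{\eta}\circ(\Z-1)$ from Eq.~\eqref{Eq:g} into $\mathbb{K}_t[\bm{g}(z)]=\int_E\bm{g}(z)\circ\bm{K}(t,\D z)$. Since $\bm{\eta}$ is constant and the Hadamard product is associative, this collapses to $\mathbb{K}_t[\bm{g}(z)]=\bm{\eta}\circ\big(\mathbb{K}_t[\Z]-\bm{1}\big)$, using $\int_E k_i(t,\D z_i)=1$ so that $\int_E\bm{1}\circ\bm{K}(t,\D z)$ equals the $2\times 2$ all-ones matrix. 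The structural fact that drives the rest is that $\mathbb{K}_t[\Z]$ has identical rows: its $(i,j)$ entry is the conditional mean mark $\overline{z}_{j,t}:=\int_E z_j\,k_j(t,\D z_j)$, which does not depend on $i$. Writing $\overline{\bm{\mathrm{z}}}_t$ for the $2\times 1$ vector with entries $\overline{z}_{j,t}$, we therefore get $\mathbb{K}_t[\bm{g}(z)]=\bm{\eta}\,\Dg(\overline{\bm{\mathrm{z}}}_t-\bm{1})$ — that is, $\bm{\eta}$ with its $j$-th column scaled by $\overline{z}_{j,t}-1$ — and dually $\mathbb{K}_t[\bm{g}(z)]^{\top}=\Dg(\overline{\bm{\mathrm{z}}}_t-\bm{1})\,\bm{\eta}^{\top}$.

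With this factorization the three identities become almost mechanical. For the first, $\mathbb{K}_t[\bm{g}(z)]\bm{X}_t=\bm{\eta}\,\Dg(\overline{\bm{\mathrm{z}}}_t-\bm{1})\bm{X}_t$, whose $i$-th entry is $\sum_j\eta_{ij}(\overline{z}_{j,t}-1)[\bm{X}_t]_j$; taking expectations and substituting $\E[\overline{z}_{j,t}[\bm{X}_t]_j]=[\overline{\Z}_{\bm{1}\bm{X}^{\top}}(t)]_{ij}\,\E[[\bm{X}_t]_j]$ — which is exactly Eq.~\eqref{Eq:ZH} specialized to $\bm{H}_t=\bm{1}\bm{X}_t^{\top}$, since then $[\bm{H}_t]_{ij}=[\bm{X}_t]_j$ and $[\mathbb{K}_t[\Z]\circ\bm{H}_t]_{ij}=\overline{z}_{j,t}[\bm{X}_t]_j$ — reproduces $\bm{\eta}\circ(\overline{\Z}_{\bm{1}\bm{X}^{\top}}(t)-1)\E[\bm{X}_t]$. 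The second and third are the same computation with $\bm{H}_t$ in place of $\bm{X}_t$: from $\E[\mathbb{K}_t[\bm{g}(z)]\bm{H}_t]=\bm{\eta}\,\E[\Dg(\overline{\bm{\mathrm{z}}}_t-\bm{1})\bm{H}_t]$ and $\E[\bm{H}_t\mathbb{K}_t[\bm{g}(z)]^{\top}]=\E[\bm{H}_t\,\Dg(\overline{\bm{\mathrm{z}}}_t-\bm{1})]\bm{\eta}^{\top}$, one writes the expectation of the (row- or column-) rescaled $\bm{H}_t$ entrywise and matches it against $(\overline{\Z}_{\bm{H}}(t)-1)\circ\E[\bm{H}_t]$, pulling the conditional mark expectation inside via Eq.~\eqref{Eq:ZH}. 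The third identity is then just the transpose of the manipulation used in the second, consistent with $\mathbb{K}_t[\bm{g}(z)]^{\top}=\Dg(\overline{\bm{\mathrm{z}}}_t-\bm{1})\bm{\eta}^{\top}$.

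The one genuinely delicate point, which I expect to be the main obstacle, is the bookkeeping of how the Hadamard products, the ordinary matrix products, and the placement of the mark index on $\bm{K}(t,\D z)$ and $\Z$ interact. One must verify carefully that the conditional-mean-mark factor attaches to exactly the index of $\bm{H}_t$ that is being contracted, so that the object that emerges is precisely $\overline{\Z}_{\bm{H}}(t)$ as defined in Eq.~\eqref{Eq:ZH} — consistent with the explicit entrywise formulas for $[\overline{\Z}]_{i,j}$ and $[\overline{\Z}_{\bm{\lambda}\bm{\lambda}^{\top}}]_{i,j}$ given after Definition~\ref{def:matrix} — and not a transposed variant, and that no independence between the marks and $\bm{H}_t$ is tacitly used anywhere. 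The whole purpose of $\overline{\Z}_{\bm{H}}$ is to absorb the covariance $\E[\mathbb{K}_t[\Z]\circ\bm{H}_t]-\E[\mathbb{K}_t[\Z]]\circ\E[\bm{H}_t]$, so every manipulation has to stay at the level of joint expectations, splitting into a product of expectations only at the final appeal to Eq.~\eqref{Eq:ZH}. Once the index conventions are pinned down consistently this way, the rest is the ``simple but tedious'' algebra the statement advertises.
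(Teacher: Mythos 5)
Your overall route is the intended one: the paper gives no proof of this lemma at all (it is introduced only with ``after simple but tedious calculations''), and your factorization $\mathbb{K}_t[\bm{g}(z)]=\bm{\eta}\circ(\mathbb{K}_t[\Z]-1)=\bm{\eta}\,\Dg(\bar{\bm{\mathrm{z}}}_t-\bm{1})$ with $\bar z_{j,t}=\int_E z_j\,k_j(t,\D z_j)$, followed by entrywise expectations and an appeal to Definition~\ref{def:matrix}, is exactly the natural computation. Your verification of the first identity (via $\bm{H}_t=\bm{1}\bm{X}_t^{\top}$) and of the third identity is correct: in $\E[\bm{H}_t\mathbb{K}_t[\bm{g}(z)]^{\top}]$ the conditional mean mark attaches to the \emph{column} index of $\bm{H}_t$, which is precisely the index that Eq.~\eqref{Eq:ZH} marks, so the matching goes through with no further assumptions.

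The gap is in the second identity, at exactly the point you flag as delicate and then defer. Entrywise, $\bigl[\E[\mathbb{K}_t[\bm{g}(z)]\bm{H}_t]\bigr]_{ij}=\sum_k\eta_{ik}\,\E[(\bar z_{k,t}-1)H_{kj}(t)]$, so the mark carries the \emph{row} (contracted) index $k$ of $\bm{H}_t$; whereas $\bigl[\bm{\eta}\bigl((\overline\Z_{\bm{H}}(t)-1)\circ\E[\bm{H}_t]\bigr)\bigr]_{ij}=\sum_k\eta_{ik}\,\E[(\bar z_{j,t}-1)H_{kj}(t)]$, because Eq.~\eqref{Eq:ZH} attaches the mark to the column index (compare the displayed formula for $[\overline\Z_{\bm{\lambda}\bm{\lambda}^{\top}}]_{i,j}$). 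These two expressions do not coincide in general --- not even when the marks are independent of $\bm{H}_t$, unless the two mark means agree --- so the ``matching against $(\overline\Z_{\bm{H}}(t)-1)\circ\E[\bm{H}_t]$'' you describe would fail if carried out. What your computation actually yields is $\E[\mathbb{K}_t[\bm{g}(z)]\bm{H}_t]=\bm{\eta}\bigl((\overline\Z^{\top}_{\bm{H}^{\top}}(t)-1)\circ\E[\bm{H}_t]\bigr)$, which in the symmetric case $\bm{H}_t=\bm{\lambda}_t\bm{\lambda}_t^{\top}$ used downstream reads $\bm{\eta}\bigl((\overline\Z^{\top}_{\bm{\lambda}\bm{\lambda}^{\top}}-1)\circ\E[\bm{\lambda}_t\bm{\lambda}_t^{\top}]\bigr)$ --- and this transposed form is the one that actually appears inside $\mathcal{T}$ in Eq.~\eqref{eq:ELL} of Proposition~\ref{Prop:E_ll2}; the left- and right-multiplied versions differ by a transpose and only become interchangeable after the symmetrization $\mathcal{T}$. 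To complete the proof you must do the deferred index bookkeeping explicitly and either establish the second identity in this transposed form (noting why the discrepancy is immaterial where the lemma is applied) or state the extra symmetry of the covariance-adjusted mark matrix that the literal statement requires; asserting that the conventions ``pin down consistently'' conceals that your reduction and the displayed claim differ by precisely this transpose.
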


The following lemma is derived in a similar context, and aims to represent the expectations of stochastic processes, which are expressed as integrals over the random measure $\bm{M}$ by $\E [\bm{\lambda}_t]$
\begin{lemma}~\label{Lemma:useful}
	Under Assumption~\ref{Assum:g} and the steady state assumption, we have
	\begin{align}
		&\E \left[ \int_{(0,t] \times E} \bm{g}(z) \bm{M}(\D s \times \D z) \right] = \bm{\eta} \circ (\overline \ZZ - 1) \E [\bm{\lambda}_t] t, \label{eq:e1}\\
		&\E \left[ \int_{(0,t] \times E} (\bm{\alpha} + \bm{g}(z)) \Dg (\bm{M}(\D s \times \D z)) (\bm{\alpha} + \bm{g}(z))^{\top} \right] = \bm{\mathrm{G}}t, \label{eq:e2}\\
		&\E \left[ \int_{(0,t]\times E}  (\bm{\alpha} + \bm{g}(z)) \Dg(\bm{M}(\D s \times \D z)) \Dg(z) \right] 
		=  \left((\bm{\alpha} - \bm{\eta}) \circ \overline \ZZ + \bm{\eta} \circ \overline \ZZ^{(2)} \right)\Dg(\E[\bm{\lambda}_t]) t,\label{eq:e3}\\
		&\E \left[ \int_{(0,t]\times E} \Dg(z) \Dg(\bm{M}(\D s \times \D z)) \Dg(z) \right] 
		=   \overline \ZZ^{(2)} \circ  \Dg(\E[\bm{\lambda}_t]) t, \label{eq:e4}
	\end{align}
	where
	\begin{equation}
		\bm{\mathrm{G}}= (\bm{\alpha} - \bm{\eta} + \bm{\eta} \circ \overline \ZZ) \Dg(\E[\bm{\lambda}_t]) (\bm{\alpha} - \bm{\eta})^{\top} + 
		(\bm{\alpha} - \bm{\eta})\Dg(\E[\bm{\lambda}_t]) (\bm{\eta} \circ \overline \ZZ)^{\top}
		+\left(\bm{\eta} \circ \overline \ZZ^{(2)\circ\frac{1}{2}}\right) \Dg(\E[\bm{\lambda}_t]) \left(\bm{\eta} \circ \overline \ZZ^{(2)\circ\frac{1}{2}}\right)^{\top}.\label{eq:G}
	\end{equation}
\end{lemma}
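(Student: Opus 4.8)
The plan is to evaluate each of \eqref{eq:e1}--\eqref{eq:e4} by replacing the random measure $\bm{M}$ with its compensator and then using stationarity to collapse the resulting time integral into a multiple of $t$. Everything rests on one fact, already contained in Eq.~\eqref{Eq:comp} together with the separability assumption on the conditional mark law: for a predictable $2\times 2$ matrix function $\bm{f}(s,z)$ one has $\E\big[\int_{(0,t]\times E}\bm{f}(s,z)\,\bm{M}(\D s\times\D z)\big]=\int_0^t\E\big[\mathbb{K}_s[\bm{f}(s,z)]\,\bm{\lambda}_s\big]\D s$, with $\mathbb{K}$ as in Eq.~\eqref{Eq:K}. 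The quadratic-looking integrands in \eqref{eq:e2}--\eqref{eq:e4}, of the form $\bm{f}(s,z)\,\Dg(\bm{M}(\D s\times\D z))\,\bm{h}(s,z)^{\top}$, are not genuinely double integrals: because the counting processes are simple, every atom of $\bm{M}$ has multiplicity one, so each entry of such a matrix is a finite sum of ordinary scalar integrals against the components $M_i$ --- this is exactly the mechanism behind the identity $\D[\bm{X}\bm{Y}^{\top}]_t=\bm{f}_x(t)\,\Dg(\D\bm{N}_t)\,\bm{f}_y^{\top}(t)$ of Lemma~\ref{Lemma:quad}. Hence the compensator argument applies entrywise, and its effect is to replace $\Dg(\bm{M}(\D s\times\D z))$ by $\Dg(\bm{\lambda}_s)\,\D s$ together with column-wise integration against the mark kernels $k_i(s,\D z_i)$.

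Granting this, I would proceed term by term. For \eqref{eq:e1} take $\bm{f}(s,z)=\bm{g}(z)=\bm{\eta}\circ(\Z-1)$; the compensator formula gives $\int_0^t\E[\mathbb{K}_s[\bm{g}(z)]\bm{\lambda}_s]\D s$, and the first identity of Lemma~\ref{Lemma:EK} with $\bm{X}=\bm{\lambda}$ (for which $\overline\Z_{\bm{1}\bm{\lambda}^{\top}}=\overline\Z$) rewrites the integrand as $\bm{\eta}\circ(\overline\Z-1)\E[\bm{\lambda}_s]$, which is constant in $s$ by stationarity, so integrating gives the claim. For \eqref{eq:e2}--\eqref{eq:e4} I would split the integrand $\bm{\alpha}+\bm{g}(z)=(\bm{\alpha}-\bm{\eta})+\bm{\eta}\circ\Z$ into a deterministic part and a mark-linear part via \eqref{Eq:g}, multiply out, and compute the compensator of each resulting piece. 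A deterministic factor on each side pulls out, leaving $\E\int_{(0,t]\times E}\Dg(\bm{M}(\D s\times\D z))=\Dg(\E[\bm{\lambda}_t])\,t$ in the middle; a single factor $z_i$ in column $i$ contributes, after dividing by $\E[\bm{\lambda}_t]$, the covariance-adjusted mean $\overline\Z$ of Definition~\ref{def:matrix} (taken with $\bm{H}=\bm{1}\bm{\lambda}^{\top}$); and a factor $z_i^2$ contributes $\overline\Z^{(2)}$. Reassembling the columns yields $\Z^{(2)}\circ\Dg(\E[\bm{\lambda}_t])\,t$ for \eqref{eq:e4}, $\big((\bm{\alpha}-\bm{\eta})\circ\overline\Z+\bm{\eta}\circ\overline\Z^{(2)}\big)\Dg(\E[\bm{\lambda}_t])\,t$ for \eqref{eq:e3}, and the three-term matrix $\bm{\mathrm{G}}$ of \eqref{eq:G} for \eqref{eq:e2}.

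I expect the only delicate point --- and the main obstacle --- to be the genuinely quadratic piece $(\bm{\eta}\circ\Z)\,\Dg(\bm{M})\,(\bm{\eta}\circ\Z)^{\top}$ in \eqref{eq:e2}, whose compensator density has $(j,k)$ entry $\sum_i\eta_{ji}\eta_{ki}\,\E[z_i^2\lambda_i]$. To write this as a single symmetric product $\M\,\Dg(\E[\bm{\lambda}_t])\,\M^{\top}$ I must use that $\overline\Z$ and $\overline\Z^{(2)}$ have identical rows --- which holds because $\mathbb{K}_t[\Z]$, $\mathbb{K}_t[\Z^{\circ 2}]$ and $\bm{H}=\bm{1}\bm{\lambda}^{\top}$ all do --- so that $\E[z_i^2\lambda_i]$ equals $\E[\lambda_i]$ times the common value of the $i$-th column of $\overline\Z^{(2)}$, which can then be placed as its own square root on each side; this is the origin of the Hadamard square root $\overline\Z^{(2)\circ\frac{1}{2}}$ in \eqref{eq:G}. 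The same row-identity, used with one factor $z_i$, matches the cross terms $(\bm{\alpha}-\bm{\eta})\,\Dg(\bm{M})\,(\bm{\eta}\circ\Z)^{\top}$ and its transpose against the first and second summands of $\bm{\mathrm{G}}$; what remains is only careful bookkeeping of the Hadamard products and of which mark variable $z_i$ is integrated against which kernel $k_i$. Throughout, the assumed time-independence of $\bm{\lambda}$, $\overline\Z$ and $\overline\Z_{\bm{\lambda}\bm{\lambda}^{\top}}$ (stated just before Proposition~\ref{Prop:E_lambda2}) is what lets every integrand leave the integral $\int_0^t(\cdot)\,\D s$ as a constant.
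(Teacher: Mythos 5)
Your proposal is correct and follows essentially the same route as the paper: the same splitting $\bm{\alpha}+\bm{g}(z)=(\bm{\alpha}-\bm{\eta})+\bm{\eta}\circ\Z$, reduction of each piece to its compensator with the covariance-adjusted moments $\overline\Z$, $\overline\Z^{(2)}$ of Definition~\ref{def:matrix}, and the Hadamard square-root repackaging of the purely quadratic term that produces $\overline\Z^{(2)\circ\frac{1}{2}}$ in Eq.~\eqref{eq:G}. The paper simply carries out the entrywise expectation calculations explicitly where you invoke the $\mathbb{K}_t$ compensator formalism, which is the same mechanism.
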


\begin{proof}
	See Appendix~\ref{Sect:proof}.
\end{proof}

The following lemma extends Lemma~\ref{Lemma:quad} to the marked model and is used in deriving expressions such as $\E[\bm{\lambda}_t \bm{\lambda}_t^{\top}]$ and $\E[\bm{N}_t \bm{\lambda}_t^{\top}]$.

\begin{lemma}~\label{Lemma:XY}
	Consider vector processes $\bm{X}$ and $\bm{Y}$ such that
	\begin{align*}
		\bm{X}_t &= \bm{X}_0 + \int_0^t \bm{a}_s \D s + \int_{(0,t] \times \mathbb E}\bm{f}_x(s,z) \bm{M}(\D s \times \D z),\\
		\bm{Y}_t &= \bm{Y}_0 + \int_0^t \bm{b}_s \D s + \int_{(0,t] \times \mathbb E}\bm{f}_y(s,z) \bm{M}(\D s \times \D z).
	\end{align*}
	Then,
	\begin{align}
		\frac{\D\E[\bm{X}_t \bm{Y}^{\top}_t]}{\D t} ={}& \E\left[ \bm{X}_{t} (\bm{b}_t^{\top} + \bm{\lambda}_t^{\top} \mathbb{K}_t [\bm{f}_y(t,z)]^{\top}) \right] + \E\left[(\bm{a}_t + \mathbb{K}_t [\bm{f}_x(t,z) ]  \bm{\lambda}_t ) \bm{Y}^{\top}_{t} \right] \nonumber\\
		&+  \frac{\D}{\D t} \E \left[ \int_{(0,t] \times E}  \bm{f}_x(s, z)   \mathrm{Dg} (\bm{M}(\D s \times \D z))  \bm{f}_y(s, z)^{\top} \right] \nonumber \\
		={}& \E[ \bm{X}_{t}\bm{b}_t^{\top} ] + \E\left[ \bm{X}_{t} \bm{\lambda}_t^{\top} \mathbb{K}_t [\bm{f}_y(t,z)]^{\top} \right] 
		+ \E[\bm{a}_t \bm{Y}^{\top}_{t} ]+ \E\left[\mathbb{K}_t [\bm{f}_x(t,z)] \bm{\lambda}_t \bm{Y}^{\top}_{t} \right] \nonumber\\
		&+  \frac{\D}{\D t} \E \left[ \int_{(0,t] \times E}  \bm{f}_x(s, z)   \mathrm{Dg} (\bm{M}(\D s \times \D z))  \bm{f}_y(s, z)^{\top} \right] . \label{eq:diff_xy}
	\end{align}
\end{lemma}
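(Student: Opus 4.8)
The plan is to transcribe the unmarked computation behind Lemma~\ref{Lemma:quad} into the marked setting, using the matrix integration-by-parts identity \eqref{eq:quad} as the starting point and the compensator of $\bm M$ to evaluate the expectations. Writing $\bm X$ and $\bm Y$ in differential form, $\D\bm X_t = \bm a_t\,\D t + \int_E\bm f_x(t,z)\,\bm M(\D t\times\D z)$ and likewise for $\bm Y$, and applying \eqref{eq:quad} to $\bm X_t\bm Y_t^\top$, the only ingredient not already present in Lemma~\ref{Lemma:quad} is the co-variation term. Since $\bm M$ is a simple marked point process, at any event time exactly one coordinate of $\bm N$ jumps; a type-$k$ event with mark $z$ moves $X_i$ by $f_{x,ik}(t,z_k)$ and $Y_j$ by $f_{y,jk}(t,z_k)$, so its contribution to $[X_i,Y_j]$ is $f_{x,ik}(t,z_k)f_{y,jk}(t,z_k)$, and summing over events gives
$$\D[\bm X\bm Y^\top]_t = \int_E\bm f_x(t,z)\,\Dg(\bm M(\D t\times\D z))\,\bm f_y(t,z)^\top,$$
the diagonal measure $\Dg(\bm M(\D t\times\D z))$ encoding the fact that distinct coordinates never jump together.

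Next I would integrate \eqref{eq:quad} over $(0,t]$ and take expectations. Splitting $\D\bm Y_s^\top = \bm b_s^\top\,\D s + \bigl(\int_E\bm f_y(s,z)\,\bm M(\D s\times\D z)\bigr)^\top$, the integrand $\bm X_{s-}$ paired with the $\bm M$-part is predictable, so by the martingale property \eqref{Eq:comp} together with the separation-of-marks assumption I may substitute the compensator $\mathbb K_s[\bm f_y(s,z)]\bm\lambda_s\,\D s$ (notation \eqref{Eq:K}) for $\int_E\bm f_y(s,z)\,\bm M(\D s\times\D z)$ under the expectation, turning $\E\bigl[\int\bm X_{s-}(\bm f_y\bm M)^\top\bigr]$ into $\E\bigl[\int_0^t\bm X_{s-}\bm\lambda_s^\top\,\mathbb K_s[\bm f_y(s,z)]^\top\,\D s\bigr]$. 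The same substitution in $\D(\bm X_s)\bm Y_{s-}^\top=(\bm Y_{s-}\D\bm X_s^\top)^\top$ produces $\E\bigl[\int_0^t\bm a_s\bm Y_{s-}^\top\,\D s\bigr]+\E\bigl[\int_0^t(\mathbb K_s[\bm f_x(s,z)]\bm\lambda_s)\bm Y_{s-}^\top\,\D s\bigr]$, while the co-variation term is kept as $\E\bigl[\int_{(0,t]\times E}\bm f_x\,\Dg(\bm M)\,\bm f_y^\top\bigr]$. Because $\bm X$ and $\bm Y$ have no fixed jump times, $\bm X_{s-}=\bm X_s$ and $\bm Y_{s-}=\bm Y_s$ almost surely for each fixed $s$, so the $s-$ may be replaced by $s$ throughout; differentiating the resulting identity in $t$ yields the first displayed equality, and expanding the two products gives the second.

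The step I expect to be the main obstacle is bookkeeping rather than anything deep: keeping the transposes and the left/right placement of $\mathbb K_s[\cdot]$, $\bm\lambda_s$, $\bm X_s$ and $\bm Y_s$ consistent, so that the $\bm f_y$-contribution ends up post-multiplied as $\bm X_t\bm\lambda_t^\top\mathbb K_t[\bm f_y(t,z)]^\top$ while the $\bm f_x$-contribution is pre-multiplied as $\mathbb K_t[\bm f_x(t,z)]\bm\lambda_t\bm Y_t^\top$; and justifying the two ``predictable replacement'' steps — that the $\bm M$-integrands are integrable and left-continuous so the compensator may be substituted inside the expectation, and that replacing $s-$ by $s$ is harmless after integrating in $\D s$. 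Both are routine under the steady-state and integrability assumptions already in force, and the rest of the argument is a direct matrix-valued analogue of the unmarked computation in Lemma~\ref{Lemma:quad}.
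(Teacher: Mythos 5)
Your proposal is correct and follows essentially the same route as the paper's proof: the matrix integration-by-parts identity \eqref{eq:quad}, identification of the covariation term as $\int_E \bm{f}_x \,\mathrm{Dg}(\bm{M}(\D s \times \D z))\, \bm{f}_y^{\top}$ via the simple-point-process structure, compensator substitution through the notation \eqref{Eq:K} for the predictable $\bm{M}$-integrands, and differentiation in $t$ after replacing $s-$ by $s$. Your added justifications for the covariation form and the predictable-replacement steps are exactly what the paper leaves implicit, so nothing is missing.
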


\begin{proof}
	See Appendix~\ref{Sect:proof}.
\end{proof}


Using the above lemmas, we obtain the following formula for the second moments of $\lambda$, similar to Proposition~\ref{Prop:E_ll}.
\begin{proposition}~\label{Prop:E_ll2}
	Under Assumption~\ref{Assum:g} and assuming the steady state condition, $\E[\bm{\lambda}_t \bm{\lambda}_t^{\top}]$ satisfies the following equation:
	\begin{equation}
		\mathcal T \left\{ (\bm{\alpha} - \bm{\beta}) \E[\bm{\lambda}_t \bm{\lambda}_t^{\top} ] +   \bm{\eta} \left( (\overline \ZZ_{\bm{\lambda}\bm{\lambda}^{\top}}^{\top} - 1) \circ \E[ \bm{\lambda}_t \bm{\lambda}_t^{\top} ] \right) + \bm{\beta}\bm{\mu}\E[\bm{\lambda}_t^{\top}] \right\} +  \G = 0 \label{eq:ELL}
	\end{equation}
	where $\G$ is defined by Eq.~\eqref{eq:G}.
\end{proposition}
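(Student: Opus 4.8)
The plan is to obtain \eqref{eq:ELL} as a direct specialization of the general second-moment identity in Lemma~\ref{Lemma:XY}, taking $\bm{X}_t = \bm{Y}_t = \bm{\lambda}_t$. Reading off the coefficients from the semimartingale representation \eqref{eq:marklambda} of the intensity, one has $\bm{a}_s = \bm{b}_s = \bm{\beta}\bm{\mu} - \bm{\beta}\bm{\lambda}_s$ and $\bm{f}_x(s,z) = \bm{f}_y(s,z) = \bm{\alpha} + \bm{g}(z)$. Substituting into \eqref{eq:diff_xy} expresses $\D\E[\bm{\lambda}_t\bm{\lambda}_t^{\top}]/\D t$ as a sum of five terms; since $\bm{\lambda}$ is in steady state at time $0$ the left-hand side vanishes, so it remains to show that the five terms rearrange into $\mathcal{T}\{\cdots\} + \G = 0$.

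I would then evaluate the five terms in turn. The two drift terms $\E[\bm{\lambda}_t \bm{b}_t^{\top}]$ and $\E[\bm{a}_t \bm{\lambda}_t^{\top}]$ expand, using that $\bm{\beta}$ and $\E[\bm{\lambda}_t\bm{\lambda}_t^{\top}]$ are symmetric, into $\mathcal{T}(\bm{\beta}\bm{\mu}\E[\bm{\lambda}_t^{\top}]) - \mathcal{T}(\bm{\beta}\E[\bm{\lambda}_t\bm{\lambda}_t^{\top}])$. For the two terms involving $\mathbb{K}_t[\bm{f}]$, I split $\mathbb{K}_t[\bm{\alpha}+\bm{g}(z)] = \bm{\alpha} + \mathbb{K}_t[\bm{g}(z)]$, using that $\mathbb{K}_t[\bm{\alpha}] = \bm{\alpha}$ for the constant matrix $\bm{\alpha}$ (because $\int_E k_j(t,\D z_j) = 1$) and $\mathbb{K}_t[\bm{g}(z)] = \bm{\eta}\circ(\mathbb{K}_t[\Z] - 1)$: the $\bm{\alpha}$ part contributes $\mathcal{T}(\bm{\alpha}\E[\bm{\lambda}_t\bm{\lambda}_t^{\top}])$, and the $\bm{g}(z)$ part is handled by Lemma~\ref{Lemma:EK} applied with $\bm{H}_t = \bm{\lambda}_t\bm{\lambda}_t^{\top}$, which gives $\bm{\eta}\big((\overline{\Z}_{\bm{\lambda}\bm{\lambda}^{\top}}^{\top}-1)\circ\E[\bm{\lambda}_t\bm{\lambda}_t^{\top}]\big)$ from $\E[\mathbb{K}_t[\bm{g}(z)]\bm{\lambda}_t\bm{\lambda}_t^{\top}]$ and its transpose from $\E[\bm{\lambda}_t\bm{\lambda}_t^{\top}\mathbb{K}_t[\bm{g}(z)]^{\top}]$, i.e. $\mathcal{T}\big(\bm{\eta}((\overline{\Z}_{\bm{\lambda}\bm{\lambda}^{\top}}^{\top}-1)\circ\E[\bm{\lambda}_t\bm{\lambda}_t^{\top}])\big)$. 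Finally, the quadratic-(co)variation term $\frac{\D}{\D t}\E\big[\int_{(0,t]\times E}(\bm{\alpha}+\bm{g}(z))\Dg(\bm{M}(\D s\times\D z))(\bm{\alpha}+\bm{g}(z))^{\top}\big]$ equals $\G$ by \eqref{eq:e2} of Lemma~\ref{Lemma:useful}. Summing the five contributions, setting the total to zero, and combining $\mathcal{T}(\bm{\alpha}\E[\bm{\lambda}_t\bm{\lambda}_t^{\top}]) - \mathcal{T}(\bm{\beta}\E[\bm{\lambda}_t\bm{\lambda}_t^{\top}]) = \mathcal{T}((\bm{\alpha}-\bm{\beta})\E[\bm{\lambda}_t\bm{\lambda}_t^{\top}])$ yields exactly \eqref{eq:ELL}.

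The main obstacle is the Hadamard-product and transpose bookkeeping in the $\mathbb{K}_t[\bm{g}(z)]$ terms, because $\overline{\Z}_{\bm{\lambda}\bm{\lambda}^{\top}}$ is generally \emph{not} symmetric: its $(i,j)$ entry weights the conditional mark in direction $j$ by $\lambda_i\lambda_j$, so one must track which index carries the mark when $\mathbb{K}_t[\bm{g}(z)]$ multiplies $\bm{\lambda}_t\bm{\lambda}_t^{\top}$ on the left versus on the right. The identity that makes the two $\bm{g}$-contributions package into a single $\mathcal{T}(\cdot)$ is $\big((\overline{\Z}_{\bm{\lambda}\bm{\lambda}^{\top}}^{\top}-1)\circ\E[\bm{\lambda}_t\bm{\lambda}_t^{\top}]\big)^{\top} = (\overline{\Z}_{\bm{\lambda}\bm{\lambda}^{\top}}-1)\circ\E[\bm{\lambda}_t\bm{\lambda}_t^{\top}]$, which uses $(\bm{A}\circ\bm{B})^{\top} = \bm{A}^{\top}\circ\bm{B}^{\top}$ together with the symmetry of $\E[\bm{\lambda}_t\bm{\lambda}_t^{\top}]$; the steady-state hypothesis of Definition~\ref{def:matrix} (that $\overline{\Z}_{\bm{\lambda}\bm{\lambda}^{\top}}$ is time-independent) and Proposition~\ref{Prop:E_lambda2} (for $\E[\bm{\lambda}_t]$) are also invoked implicitly. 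A secondary, routine technicality is the usual pre-jump ($\bm{\lambda}_{t-}$ versus $\bm{\lambda}_t$) distinction, which is already absorbed into Lemma~\ref{Lemma:XY}.
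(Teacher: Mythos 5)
Your proof is correct and takes essentially the same route as the paper: specialize Lemma~\ref{Lemma:XY} with $\bm{X}_t=\bm{Y}_t=\bm{\lambda}_t$, $\bm{a}_t=\bm{b}_t=\bm{\beta}(\bm{\mu}-\bm{\lambda}_t)$, $\bm{f}_x=\bm{f}_y=\bm{\alpha}+\bm{g}(z)$, evaluate the two cross terms via Lemma~\ref{Lemma:EK} with $\bm{H}_t=\bm{\lambda}_t\bm{\lambda}_t^{\top}$, identify the quadratic-variation term with $\G$ by Eq.~\eqref{eq:e2}, and set the time derivative to zero by stationarity. Your explicit bookkeeping of where the transpose on $\overline\Z_{\bm{\lambda}\bm{\lambda}^{\top}}$ sits (using $(\bm{A}\circ\bm{B})^{\top}=\bm{A}^{\top}\circ\bm{B}^{\top}$ and the symmetry of $\E[\bm{\lambda}_t\bm{\lambda}_t^{\top}]$) is consistent with the statement of Eq.~\eqref{eq:ELL} and, if anything, slightly more careful than the paper's own wording of Lemma~\ref{Lemma:EK}.
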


\begin{proof}
	See Appendix~\ref{Sect:proof}.
\end{proof}

To solve Eq.~\eqref{eq:ELL}, we can use the following fact.
For 2$\times$2 matrices $\M, \N$ and $\X$, the following holds:
$$ \vc ( \M (\N \circ \X) ) = \begin{bmatrix}
	\M \Dg(\N_1)  & \bm{0} \\
	\bm{0} & \M \Dg(\N_2) 
\end{bmatrix}
\vc(\X)
$$
where $\N_i$ is the $i$-th column vector of $\N$ and $\vc$ denotes the vectorization of the matrix by column.
If $\X$ is symmetric, then
$$ 
\vc ((\N \circ \X) \M ) = \begin{bmatrix}
	\M_{11} \N_{11} & \M_{21} \N_{12}  & 0 & 0 \\
	0 & 0 & \M_{11} \N_{21} & \M_{21} \N_{22} \\
	\M_{12} \N_{11} & \M_{22} \N_{12}  & 0 & 0 \\
	0 & 0 & \M_{12} \N_{21} & \M_{22} \N_{22} \\
\end{bmatrix}
\vc(\X)
$$
where $\M_{ij}$ and $\N_{ij}$ are the elements of the $i$-th row and $j$-th column of $\N$ and $\M$, respectively.
In addition, it is known that
$$ \vc (\M \X + \X\N) = (\bm{\mathrm{I}} \otimes \M  + \N^{\top} \otimes \bm{\mathrm{I}}) \vc(\X)$$
where $\otimes$ denotes the Kronecker product.
Hence, Eq.~\eqref{eq:ELL} can be rewritten as
$$ \bm{\mathrm{H}} \, \vc \left(\E[\bm{\lambda}_t \bm{\lambda}_t^{\top}]\right) + \vc\left(\mathcal T \left( \bm{\beta}\bm{\mu} \E[\bm{\lambda}_t^{\top}] \right)  + \G \right) = 0$$
where
\begin{equation*}
	\bm{\mathrm{H}}
	=
	\bm{\mathrm{I}} \otimes (\bm{\alpha} - \bm{\beta})  + (\bm{\alpha} - \bm{\beta}) \otimes \bm{\mathrm{I}} 
	+  
	\begin{bmatrix}
		\bm{\eta} \Dg\left(\overline \ZZ_{\bm{\lambda}\bm{\lambda}^{\top}, 1}^{\top} - 1 \right)  & \bm{0} \\
		\bm{0} & \bm{\eta} \Dg\left(\overline \ZZ_{\bm{\lambda}\bm{\lambda}^{\top}, 2}^{\top} - 1\right)
	\end{bmatrix} 
	+  \begin{bmatrix}
		\eta_{11} \N_{11} & \eta_{12} \N_{12}  & 0 & 0 \\
		0 & 0 & \eta_{11} \N_{21} & \eta_{12} \N_{22} \\
		\eta_{21} \N_{11} & \eta_{22} \N_{12}  & 0 & 0 \\
		0 & 0 & \eta_{21} \N_{21} & \eta_{22} \N_{22} \\
	\end{bmatrix}
\end{equation*}
with $\N =  \ZZ_{\bm{\lambda}\bm{\lambda}^{\top}} - 1$.

As in Proposition~\ref{Prop:E_lN}, the following proposition demonstrates that $\E[\bm{N}_t \bm{\lambda}_t^{\top}]$ can be approximated by a linear function of time.

\begin{proposition}~\label{Prop:E_lN2}
	Under Assumption~\ref{Assum:g}, we use the following approximation for the marked model with constant matrices $\Aa$ and $\BB$:
	$$ \E[\bm{N}_t \bm{\lambda}_t^{\top} ] \approx \Aa t + \BB,$$
	then
	$\Aa$ and $\BB$ satisfy
	\begin{align}
		& \Aa (\bm{\alpha} - \bm{\beta})^{\top}+ (\Aa \circ  (\overline \ZZ_{\bm{N} \bm{\lambda}^{\top}} - 1) )\bm{\eta}^{\top} + \Dg(\overline \ZZ)   \E[\bm{\lambda}_t](\bm{\beta}\bm{\mu})^{\top} = 0 \label{eq:A}\\
		& \BB(\bm{\alpha} - \bm{\beta})^{\top} + 
		(\BB \circ  (\overline \ZZ_{\bm{N} \bm{\lambda}^{\top}} -1) )\bm{\eta}^{\top}+  \overline \ZZ^{\top}_{\bm{\lambda}\bm{\lambda}^{\top}}  \circ \E[\bm{\lambda}_t\bm{\lambda}_t^{\top}] \nonumber\\
		& \hspace{5cm} +\Dg(\E[\bm{\lambda}_t]) \left((\bm{\alpha} - \bm{\eta}) \circ \overline \ZZ  + \bm{\eta} \circ \overline \ZZ^{(2)}\right)^{\top} - \Aa = 0. \label{eq:B}
	\end{align}
	
\end{proposition}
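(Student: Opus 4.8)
\medskip
\noindent\textit{Proof strategy (sketch).}\quad The plan is to mirror the unmarked argument of Proposition~\ref{Prop:E_lN}: apply Lemma~\ref{Lemma:XY} with $\bm{X}=\bm{N}$ and $\bm{Y}=\bm{\lambda}$ to obtain a linear differential equation for $\E[\bm{N}_t\bm{\lambda}_t^{\top}]$ whose inhomogeneous term is affine in $t$, argue that the affine particular solution dominates for large $t$, and then substitute the ansatz $\E[\bm{N}_t\bm{\lambda}_t^{\top}]\approx\A t+\B$ and match the coefficient of $t$ and the constant part to read off Eqs.~\eqref{eq:A}~and~\eqref{eq:B}.

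First I would read off the coefficients of the two semimartingales in the form required by Lemma~\ref{Lemma:XY}. From $\bm{N}_t=\int_{(0,t]\times E}\Dg(z)\bm{M}(\D u\times\D z)$ the drift is $\bm{a}_t=\bm{0}$ and the jump coefficient is $\bm{f}_x(s,z)=\Dg(z)$; from Eq.~\eqref{eq:marklambda} the drift of $\bm{\lambda}$ is $\bm{b}_t=\bm{\beta}(\bm{\mu}-\bm{\lambda}_t)$ and its jump coefficient is $\bm{f}_y(s,z)=\bm{\alpha}+\bm{g}(z)$. Feeding these into Eq.~\eqref{eq:diff_xy} produces five terms, which I would evaluate one at a time. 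For the drift term of $\bm{N}$ I would use $\E[\bm{N}_t]=\Dg(\overline\Z)\E[\bm{\lambda}_t]t$ (obtained from the compensator of $\bm{N}$ and the definition of $\overline\Z$, exactly as in the unmarked case) to get $\E[\bm{N}_t\bm{b}_t^{\top}]=\Dg(\overline\Z)\E[\bm{\lambda}_t](\bm{\beta}\bm{\mu})^{\top}t-\E[\bm{N}_t\bm{\lambda}_t^{\top}]\bm{\beta}$. For the term carrying $\mathbb{K}_t[\bm{f}_y]$ I would write $\mathbb{K}_t[\bm{\alpha}+\bm{g}(z)]=\bm{\alpha}+\mathbb{K}_t[\bm{g}(z)]$ and apply the third identity of Lemma~\ref{Lemma:EK} with $\bm{H}_t=\bm{N}_t\bm{\lambda}_t^{\top}$, obtaining $\E[\bm{N}_t\bm{\lambda}_t^{\top}]\bm{\alpha}^{\top}+((\overline\Z_{\bm{N}\bm{\lambda}^{\top}}(t)-1)\circ\E[\bm{N}_t\bm{\lambda}_t^{\top}])\bm{\eta}^{\top}$. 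The term $\E[\bm{a}_t\bm{Y}_t^{\top}]$ vanishes. For $\E[\mathbb{K}_t[\bm{f}_x]\bm{\lambda}_t\bm{\lambda}_t^{\top}]$ I would note that $\mathbb{K}_t[\Dg(z)]=\Dg(\mathbb{K}_t[\Z])$ is diagonal, compute the resulting matrix entrywise, and use the definition of $\overline\Z_{\bm{\lambda}\bm{\lambda}^{\top}}$ together with the symmetry of $\E[\bm{\lambda}_t\bm{\lambda}_t^{\top}]$ to reduce it to $\overline\Z^{\top}_{\bm{\lambda}\bm{\lambda}^{\top}}\circ\E[\bm{\lambda}_t\bm{\lambda}_t^{\top}]$. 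Finally, the quadratic-covariation term $\frac{\D}{\D t}\E[\int_{(0,t]\times E}\Dg(z)\Dg(\bm{M}(\D s\times\D z))(\bm{\alpha}+\bm{g}(z))^{\top}]$ is the time derivative of the transpose of the left side of Eq.~\eqref{eq:e3}, hence equals $\Dg(\E[\bm{\lambda}_t])((\bm{\alpha}-\bm{\eta})\circ\overline\Z+\bm{\eta}\circ\overline\Z^{(2)})^{\top}$.

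Collecting these and using $\bm{\beta}^{\top}=\bm{\beta}$ to combine $\E[\bm{N}_t\bm{\lambda}_t^{\top}]\bm{\alpha}^{\top}-\E[\bm{N}_t\bm{\lambda}_t^{\top}]\bm{\beta}$ into $\E[\bm{N}_t\bm{\lambda}_t^{\top}](\bm{\alpha}-\bm{\beta})^{\top}$, I would arrive at
\begin{align*}
\frac{\D\E[\bm{N}_t\bm{\lambda}_t^{\top}]}{\D t}={}&\E[\bm{N}_t\bm{\lambda}_t^{\top}](\bm{\alpha}-\bm{\beta})^{\top}+((\overline\Z_{\bm{N}\bm{\lambda}^{\top}}-1)\circ\E[\bm{N}_t\bm{\lambda}_t^{\top}])\bm{\eta}^{\top}+\Dg(\overline\Z)\E[\bm{\lambda}_t](\bm{\beta}\bm{\mu})^{\top}t\\
&{}+\overline\Z^{\top}_{\bm{\lambda}\bm{\lambda}^{\top}}\circ\E[\bm{\lambda}_t\bm{\lambda}_t^{\top}]+\Dg(\E[\bm{\lambda}_t])((\bm{\alpha}-\bm{\eta})\circ\overline\Z+\bm{\eta}\circ\overline\Z^{(2)})^{\top}.
\end{align*}
This is a linear ODE in $\E[\bm{N}_t\bm{\lambda}_t^{\top}]$ whose forcing is affine in $t$, so it admits an affine particular solution; the homogeneous part is a combination of $\e^{\xi_i t}$ terms that decay under the sub-unit spectral-radius (stability) condition, exactly as argued after Proposition~\ref{Prop:E_lN}, which legitimizes the ansatz $\A t+\B$ (and is consistent with treating $\overline\Z_{\bm{N}\bm{\lambda}^{\top}}(t)$ as constant for large $t$, since numerator and denominator in its definition both grow linearly). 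Substituting $\E[\bm{N}_t\bm{\lambda}_t^{\top}]=\A t+\B$, the left-hand side equals $\A$; equating the coefficients of $t$ gives Eq.~\eqref{eq:A} (using commutativity of $\circ$), and equating the constant parts and moving $\A$ to the left gives Eq.~\eqref{eq:B}.

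The step I expect to be the main obstacle is the evaluation of $\E[\mathbb{K}_t[\Dg(z)]\bm{\lambda}_t\bm{\lambda}_t^{\top}]$: because $\mathbb{K}_t[\Dg(z)]=\Dg(\mathbb{K}_t[\Z])$ attaches the conditional mark expectation of the \emph{row} index rather than the column index, the entries of the resulting matrix are $\E[\lambda_i(t)\lambda_j(t)\int_E z_i k_i(t,\D z_i)]$, and one must re-index through the symmetry of $\E[\bm{\lambda}_t\bm{\lambda}_t^{\top}]$ to recognize this as $\overline\Z^{\top}_{\bm{\lambda}\bm{\lambda}^{\top}}\circ\E[\bm{\lambda}_t\bm{\lambda}_t^{\top}]$ — getting the transpose right here is precisely what distinguishes Eq.~\eqref{eq:B} from a naive guess. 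A secondary point requiring care is verifying that the quadratic-covariation term really is the transpose of the left side of Eq.~\eqref{eq:e3}, which relies only on $\Dg(z)$ and $\Dg(\bm{M}(\D s\times\D z))$ being symmetric so that transposition merely swaps $\Dg(z)$ with $(\bm{\alpha}+\bm{g}(z))^{\top}$; the remaining bookkeeping is routine.
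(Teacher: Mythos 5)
Your proposal is correct and follows essentially the same route as the paper: it applies Lemma~\ref{Lemma:XY} with $\bm{X}_t=\bm{N}_t$, $\bm{Y}_t=\bm{\lambda}_t$, $\bm{a}_t=0$, $\bm{b}_t=\bm{\beta}(\bm{\mu}-\bm{\lambda}_t)$, $\bm{f}_x=\Dg(z)$, $\bm{f}_y=\bm{\alpha}+\bm{g}(z)$, evaluates the resulting terms via Lemma~\ref{Lemma:EK} and Eq.~\eqref{eq:e3} to obtain exactly the ODE displayed in the paper's proof, and then matches the linear and constant parts of the affine ansatz $\A t+\B$ to obtain Eqs.~\eqref{eq:A} and~\eqref{eq:B}. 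Your extra care with $\E[\mathbb{K}_t[\Dg(z)]\bm{\lambda}_t\bm{\lambda}_t^{\top}]$ (the transpose on $\overline\Z_{\bm{\lambda}\bm{\lambda}^{\top}}$ via symmetry of $\E[\bm{\lambda}_t\bm{\lambda}_t^{\top}]$) and with transposing Eq.~\eqref{eq:e3} simply makes explicit steps the paper leaves implicit.
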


\begin{proof}
	See Appendix~\ref{Sect:proof}.
\end{proof}

Solving Eq.~\eqref{eq:A} is equivalent to solving
\begin{equation*}
	\left(
	\bm{\mathrm{I}} \otimes (\bm{\alpha} - \bm{\beta}) 
	+
	\begin{bmatrix}
		\bm{\eta} \Dg(\overline \ZZ^{\top}_{\bm{N}\bm{\lambda}^{\top}, 1} - 1)  & \bm{0}
		\bm{0} & \bm{\eta} \Dg(\overline \ZZ^{\top}_{\bm{N}\bm{\lambda}^{\top}, 2} - 1)
	\end{bmatrix}
	\right)
	\vc (\Aa^{\top}) \\
	+ \vc(\bm{\beta}\bm{\mu}\E[\bm{\lambda}_t]^{\top} \mathrm{Dg}(\overline \ZZ) ) = 0.
\end{equation*}
After computing $\Aa$, we obtain $\BB$ in a similar manner by solving
\begin{multline*}
	\left(
	\bm{\mathrm{I}} \otimes (\bm{\alpha} - \bm{\beta}) 
	+
	\begin{bmatrix}
		\bm{\eta} \Dg(\overline \ZZ^{\top}_{\bm{N}\bm{\lambda}^{\top}, 1} - 1)  & \bm{0} \\
		\bm{0} & \bm{\eta} \Dg(\overline \ZZ^{\top}_{\bm{N}\bm{\lambda}^{\top}, 2} - 1)
	\end{bmatrix}
	\right)
	\vc (\BB^{\top}) \\
	+ \vc \left( \overline \ZZ_{\bm{\lambda}\bm{\lambda}^{\top}}  \circ \E[\bm{\lambda}_t\bm{\lambda}_t^{\top}]  + \left((\bm{\alpha} - \bm{\eta}) \circ \overline \ZZ  + \bm{\eta} \circ \overline \ZZ^{(2)}\right) \Dg(\E[\bm{\lambda}_t])  - \Aa^{\top} \right) = 0.
\end{multline*}

The variance formula now be calculated using the marked model.
\begin{theorem}\label{Thm:var}
	Under Assumption~\ref{Assum:g}, 
	the variance of the difference between the two counting processes is 
	\begin{align*}
		\Var(N_1(t) - N_2(t)) &= \U^{\top} \E[ \bm{N}_t \bm{N}_t^{\top}] \U - \left( \U^{\top} \Dg(\overline \ZZ) \E[\bm{\lambda}_t] t \right)^2
	\end{align*}
	where
	\begin{align*}
		\E[ \bm{N}_t \bm{N}_t^{\top}] \approx{}&  \mathcal{T} \left\{\overline \ZZ_{\bm{N}\bm{\lambda}^{\top}} \circ 
		\left(\frac{1}{2}\Aa t^2 + \BB t \right) \right\}  +  \overline \ZZ^{(2)} \circ \Dg (\E [\bm{\lambda}_t]) t.
	\end{align*}
\end{theorem}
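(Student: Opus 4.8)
The plan is to reduce everything to the scalar identity $\Var(N_1(t)-N_2(t)) = \U^{\top}\E[\bm{N}_t\bm{N}_t^{\top}]\U - \big(\U^{\top}\E[\bm{N}_t]\big)^2$, which is immediate since $N_1(t)-N_2(t)=\U^{\top}\bm{N}_t$ with $\U=[1,\,-1]^{\top}$. First I would pin down the mean vector: from $\bm{N}_t=\int_{(0,t]\times E}\Dg(z)\,\bm{M}(\D s\times\D z)$, subtracting the compensator $\int_0^t\mathbb{K}_s[\Dg(z)]\bm{\lambda}_s\,\D s$ leaves a martingale by Eq.~\eqref{Eq:comp} (here $\mathbb{K}_s[\Dg(z)]$ is the diagonal matrix carrying the diagonal entries of $\mathbb{K}_s[\Z]$), so $\E[\bm{N}_t]=\int_0^t\E\big[\mathbb{K}_s[\Dg(z)]\bm{\lambda}_s\big]\D s$. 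Its $i$-th coordinate is $\E\big[\lambda_i(s)\int_{E_i}z_i\,k_i(s,\D z_i)\big]=[\overline\Z]_{i,i}\,\E[\lambda_i]$ by Definition~\ref{def:matrix}, the steady-state assumption, and the time-invariance of $\overline\Z$; hence $\E[\bm{N}_t]=\Dg(\overline\Z)\,\E[\bm{\lambda}_t]\,t$, which produces the subtracted square in the statement.

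Second I would obtain $\E[\bm{N}_t\bm{N}_t^{\top}]$ by applying Lemma~\ref{Lemma:XY} with $\bm{X}=\bm{Y}=\bm{N}$, so that $\bm{a}=\bm{b}=\bm{0}$ and $\bm{f}_x=\bm{f}_y=\Dg(z)$ (which has the required coordinate dependence, $f_{11}=z_1$, $f_{22}=z_2$, off-diagonals $0$). Equation~\eqref{eq:diff_xy} then collapses to $\D\E[\bm{N}_t\bm{N}_t^{\top}]/\D t = \E[\bm{N}_t\bm{\lambda}_t^{\top}\mathbb{K}_t[\Dg(z)]^{\top}] + \E[\mathbb{K}_t[\Dg(z)]\bm{\lambda}_t\bm{N}_t^{\top}] + \frac{\D}{\D t}\E\big[\int_{(0,t]\times E}\Dg(z)\Dg(\bm{M}(\D s\times\D z))\Dg(z)\big]$. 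Because every row of $\bm{K}(t,\D z)$ is the same, $\mathbb{K}_t[\Z]$ has constant columns and $\mathbb{K}_t[\Dg(z)]$ equals its diagonal; an entrywise check then turns the first expectation into $\E[\mathbb{K}_t[\Z]\circ(\bm{N}_t\bm{\lambda}_t^{\top})]=\overline\Z_{\bm{N}\bm{\lambda}^{\top}}\circ\E[\bm{N}_t\bm{\lambda}_t^{\top}]$ via Definition~\ref{def:matrix}, and the second into its transpose, so they combine to $\mathcal{T}\big(\overline\Z_{\bm{N}\bm{\lambda}^{\top}}\circ\E[\bm{N}_t\bm{\lambda}_t^{\top}]\big)$. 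The last term is read straight off Eq.~\eqref{eq:e4} of Lemma~\ref{Lemma:useful} as $\overline\Z^{(2)}\circ\Dg(\E[\bm{\lambda}_t])$.

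Third, I would plug in the affine approximation $\E[\bm{N}_t\bm{\lambda}_t^{\top}]\approx\A t+\B$ from Proposition~\ref{Prop:E_lN2}, treat $\overline\Z_{\bm{N}\bm{\lambda}^{\top}}$, $\overline\Z^{(2)}$, $\A$, $\B$ and $\E[\bm{\lambda}_t]$ as time-constant (steady state plus the standing assumptions on covariance-adjusted expectations), and integrate from $0$ to $t$ with initial condition $\E[\bm{N}_0\bm{N}_0^{\top}]=\bm{0}$. Using linearity of $\mathcal{T}$ and $\circ$ together with $\int_0^t(\A s+\B)\,\D s=\frac{1}{2}\A t^2+\B t$, this gives $\E[\bm{N}_t\bm{N}_t^{\top}]\approx\mathcal{T}\big\{\overline\Z_{\bm{N}\bm{\lambda}^{\top}}\circ(\frac{1}{2}\A t^2+\B t)\big\}+\overline\Z^{(2)}\circ\Dg(\E[\bm{\lambda}_t])t$; substituting this and $\E[\bm{N}_t]$ into the scalar identity completes the proof.

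The step I expect to be the main obstacle is the bookkeeping of how the Hadamard product interacts with the operator $\mathbb{K}_t[\cdot]$ and with transposition — concretely, confirming that $\E[\bm{N}_t\bm{\lambda}_t^{\top}\mathbb{K}_t[\Dg(z)]^{\top}]$ collapses exactly to $\overline\Z_{\bm{N}\bm{\lambda}^{\top}}\circ\E[\bm{N}_t\bm{\lambda}_t^{\top}]$, which relies on the constant-column structure of $\mathbb{K}_t[\Z]$ inherited from $\bm{K}$. A secondary point worth a remark is controlling the accumulated error: both dropping the exponentially decaying homogeneous part of $\E[\bm{N}_t\bm{\lambda}_t^{\top}]$ (Proposition~\ref{Prop:E_lN2}) and treating $\overline\Z_{\bm{N}\bm{\lambda}^{\top}}(t)$ as constant perturb only lower-order-in-$t$ terms once the spectral radius of $|\int_0^{\infty}\bm{h}(t)\,\D t|$ is below one, so the leading $t^2$ and $t$ growth of the variance is unaffected.
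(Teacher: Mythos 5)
Your proposal is correct and follows essentially the same route as the paper's proof: apply Lemma~\ref{Lemma:XY} with $\bm{X}=\bm{Y}=\bm{N}$ and $\bm{f}_x=\bm{f}_y=\Dg(z)$, identify the cross terms as $\overline\Z_{\bm{N}\bm{\lambda}^{\top}}\circ\E[\bm{N}_t\bm{\lambda}_t^{\top}]$ and its transpose, read the quadratic-variation term off Eq.~\eqref{eq:e4}, integrate using the affine approximation of Proposition~\ref{Prop:E_lN2}, and subtract the squared mean $\U^{\top}\Dg(\overline\Z)\E[\bm{\lambda}_t]t$. Your explicit verification of the mean via the compensator and the remark on the neglected homogeneous (exponentially decaying) part only spell out details the paper leaves implicit.
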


\begin{proof}
	See Appendix~\ref{Sect:proof}.
\end{proof}

However, this formula has one drawback.
When we compute the $\overline \ZZ$s non-parametrically using conditional sample means, the estimated variance can be negative.
To avoid possible negative values in the non-parametric approach, 
the following restricted version which is similar to Theorem~\ref{Thm:simple_vol}, is preferable.
Note that the variance in Theorem~\ref{Thm:var} is represented as a quadratic function of $t$; however, in the following corollary, the variance is expressed as a linear function of $t$.

\begin{corollary}~\label{Cor:var}
	If $\overline \ZZ = \overline \ZZ_{\bm{N}\bm{\lambda}^{\top}}$,
	\begin{equation}
	\Var(N_1(t) - N_2(t)) = \bm{\mathrm{u}}^{\top} \left[ \mathcal{T} \left\{\overline \ZZ \circ \BB  \right\}  +  \overline \ZZ^{(2)}\circ \Dg (\E [\bm{\lambda}_t]) \right] \bm{\mathrm{u}} t \label{Eq:var_cor}
	\end{equation}
	where $\BB$ satisfies
	\begin{align}
		\BB(\bm{\alpha} - \bm{\beta})^{\top} + 
		(\BB \circ  (\overline \ZZ -1) )\bm{\eta}^{\top}+  \overline \ZZ^{\top}_{\bm{\lambda}\bm{\lambda}^{\top}}  \circ \E[\bm{\lambda}_t\bm{\lambda}_t^{\top}] \nonumber
		+\Dg(\E[\bm{\lambda}_t]) \left((\bm{\alpha} - \bm{\eta}) \circ \overline \ZZ  + \bm{\eta} \circ \overline \ZZ^{(2)}\right)^{\top} - \Aa = 0. 
	\end{align}
	and
	\begin{equation}
		\Aa = \Dg(\overline \ZZ) \E[\bm{\lambda}_t] \E[\bm{\lambda}_t]^{\top}. \label{eq:A2}
	\end{equation}
\end{corollary}

\begin{proof}
	See Appendix~\ref{Sect:proof}.
\end{proof}

When we further assume the independence of mark sizes, we can use the following corollary.
In this case,
$ \overline \ZZ$ and $ \overline \ZZ^{(2)}$ are simply first and second moments of mark size distribution, respectively.

\begin{corollary}~\label{Cor:ind}
	If we assume that the mark size and underlying processes are independent, then we have
	$$  \overline \ZZ = \overline \ZZ_{\bm{N} \bm{\lambda}^{\top}} = \overline \ZZ_{\bm{\lambda} \bm{\lambda}^{\top}}.$$
	Therefore,
	\begin{equation}
	\Var(N_1(t) - N_2(t)) = \bm{\mathrm{u}}^{\top} \left[ \mathcal{T} \left\{\overline \ZZ \circ \BB  \right\}  +  \overline \ZZ^{(2)}\circ \Dg (\E [\bm{\lambda}_t]) \right] \bm{\mathrm{u}} \label{Eq:mark_ind_Var} 
	\end{equation}
	and $\BB$ satisfies
	\begin{equation}
		\BB(\bm{\alpha} - \bm{\beta})^{\top} + 
		(\BB \circ  (\overline \ZZ -1) )\bm{\eta}^{\top}+  \overline \ZZ^{\top}  \circ \E[\bm{\lambda}_t\bm{\lambda}_t^{\top}] 
		+\Dg(\E[\bm{\lambda}_t]) \left((\bm{\alpha} - \bm{\eta}) \circ \overline \ZZ  + \bm{\eta} \circ \overline \ZZ^{(2)}\right)^{\top} - \Aa = 0 \label{Eq:B_cor2}
	\end{equation}
	with $\Aa$ in Eq.~\eqref{eq:A2}.
\end{corollary}

\begin{remark}\label{Remark:simple2}
	The formula for the marked Hawkes variance is simplified when an independent mark is added to the symmetric Hawkes model discussed in Remark~\ref{Remark:simple}, and no additional impact is assumed; that is, $\bm{\eta} = \bm{0}$. 
	Additionally, for symmetry, it is assumed that the distributions of the up and down jump sizes are identical.
	In this case, the mean of the intensity processes satisfies Eq.~\eqref{Eq:E_lambda_simple}.
	Furthermore, $\E[\bm{\lambda}_t \bm{\lambda}_t^{\top}]$ in Proposition~\ref{Prop:E_ll2} is simplified and satisfies Eq.~\eqref{Eq:Ell_reduce}.
	Using Eq.~\eqref{Eq:B_cor2}, we obtain:
	\begin{align*}
		\BB &= (\bm{\alpha} - \bm{\beta} )^{-1} \{\Aa - \overline \ZZ \circ \E[\bm{\lambda}_t \bm{\lambda}_t^{\top} ] - (\bm{\alpha} \circ \overline \ZZ) \mathrm{Dg}(\E[\bm{\lambda}_t]) \} \\
		&= \bar Z \E[\lambda_t] \left( \frac{1}{2 (\beta - \alpha_1 - \alpha_2)^2 (\beta - \alpha_1 + \alpha_2)^2}
		\begin{bmatrix} (\alpha_1 - \beta)^2 + \alpha_2^2 & -2 \alpha_2 (\alpha_1 - \beta) \\ -2 \alpha_2 (\alpha_1 - \beta) & (\alpha_1 - \beta)^2 + \alpha_2^2 \end{bmatrix}
		\begin{bmatrix} \alpha_1^2 + \alpha_2^2 & 2\alpha_1\alpha2 \\
			2 \alpha_1 \alpha_2 & \alpha_1^2 + \alpha_2^2 \end{bmatrix}  \right. \\
		& \left. \phantom{\E[\lambda_t]abc} - \frac{1}{(\beta - \alpha_1 - \alpha_2) (\beta - \alpha_1 + \alpha_2)} 
		\begin{bmatrix} \alpha_1 - \beta & -\alpha_2 \\ -\alpha_2 & \alpha_1 - \beta \end{bmatrix}
		\begin{bmatrix} \alpha_1 & \alpha_2 \\ \alpha_2 & \alpha_1 \end{bmatrix} \right)
	\end{align*}
	where $\bar Z$ is the mean of the mark distribution and $\Aa$ equals Eq.~\eqref{eq:A2}.
	Thus,
	\[
	\U^{\top} (\overline \ZZ \circ \BB) \U  = \bar Z^2 \E[\lambda_t]  \left( \frac{(\alpha_1 - \alpha_2)^2}{(\beta - \alpha_1 + \alpha_2)^2} + \frac{2(\alpha_1 - \alpha_2)}{\beta - \alpha_1 + \alpha_2} \right),
	\]
	and
	the variance formula in Eq.~\eqref{Eq:mark_ind_Var} is reduced to
	\begin{align*}
		\Var(N_1(t) - N_2(t)) &= \bm{\mathrm{u}}^{\top} \left[ \mathcal{T} \left\{\overline \ZZ \circ \BB  \right\}  +  \overline \ZZ^{(2)}\circ \Dg (\E [\bm{\lambda}_t]) \right] \bm{\mathrm{u}} t \\
		&= \frac{2\mu\beta \bar Z^2 t}{\beta - \alpha_1 - \alpha_2}\left(\frac{(\alpha_1 - \alpha_2)^2}{(\beta - \alpha_1 + \alpha_2)^2} + \frac{2(\alpha_1 - \alpha_2)}{\beta - \alpha_1 + \alpha_2}  + \frac{\bar Z^{(2)}}{\bar Z^2}\right)
	\end{align*}
	where $\bar Z^{(2)}$ denotes the second moment of the mark distribution.
\end{remark}

\begin{remark}
	One may use a parametric assumption for the mark distribution.
	Then, the estimations of $\overline \ZZ$ and $\overline \ZZ^{(2)}$ depend on the assumed distribution.
	For example, we can assume that the jump size follows a geometric distribution.
	In this case, under the independent assumption, $\overline \ZZ$ can be estimated using the sample means of the jump sizes and
	$$ \bar \ZZ^{(2)} = 2  \bar \ZZ^{\circ 2} - \bar \ZZ .$$
\end{remark}

\section{Empirical study}~\label{Sec:application}

In this section, the proposed marked Hawkes model is estimated using mid-price processes derived from National Best Bid and Offer (NBBO) data for the US stock market 
and Emini S\&P 500 futures prices.
We applied the Hawkes model to numerous symbols, such as AALP, ADBE, AMZN, NVDA, FB, MCD, NKE, XOM, and others. 
Daily and intraday volatilities are computed using the estimates and formulas derived in the previous section.

\subsection{Filtering and robust test}\label{Subsec:filter}

Filtered data were used instead of raw data
for the Hawkes volatility performance test.
The raw data record all stock price movements and contain information on numerous price changes, even within very short time intervals.
For example, the data of AMZN on June 20, 2018, had an average of 3.23 events per second, and in the most extreme case, 60 events occurred per second.
If the observation interval is 0.1 seconds, then the maximum number of events is 36. 
Even within extremely short time intervals, there may be an innumerable number of records.
The large capacity of high-frequency data is a great advantage because it provides a large amount of information; however, there are also disadvantages.
This is because many records only provide information on very short-term changes that do not significantly impact long-term volatility estimates.

Therefore,  caution is required in the direct use of raw data for estimation.
For example, the estimates of $\alpha$ or $\beta$ in the Hawkes model would be very large compared with $\mu$ because of the ultra-high-frequency activities that might be from automated machines.
In addition, raw price dynamics data may not fit well with the single-kernel exponential Hawkes model.
Furthermore, additional kernels may be required, which complicate the model and quantitative analysis.
It would be desirable to use raw data directly to study the nature of ultra-high-frequency movements of stock price changes.
However, as we focus on daily volatility, 
it is useful to utilize the filtered data described below.
This filtering simplifies the data but does not affect the measurement of daily price variations.

First, we extract the mid-price dynamics, 
which is the average of the bid and ask prices from the NBBO.
We observe the mid-price at fixed intervals of $\Delta t = 0.1$ seconds, 
specifically at each time point $\Delta t, 2\Delta t, 3\Delta t, \ldots$.
If this differs from the previously observed mid-price 0.1 seconds before, 
we then find the time of the last change to that price and record the time and changed price.
If the price is the same as the previously observed price, 
move to the next time point, 0.1 seconds later, and repeat the above procedure.
Through filtering, unnecessary movements observed at ultra-high-frequency,
which is generally known as microstructure noise, can be removed.

Of course, there is no clear standard as to how long the filtering time interval should be, but through numerous empirical studies, it has been verified that 0.1 second filtering produces stable results.
Further research may be needed to develop a more accurate filtering method.
In Figure~\ref{Fig:mid}, we compare the mid-price movement in the raw data (top) with the filtered mid-price movement (bottom), which is smoother.

\begin{figure}
	\centering
	\includegraphics[width=0.65\textwidth]{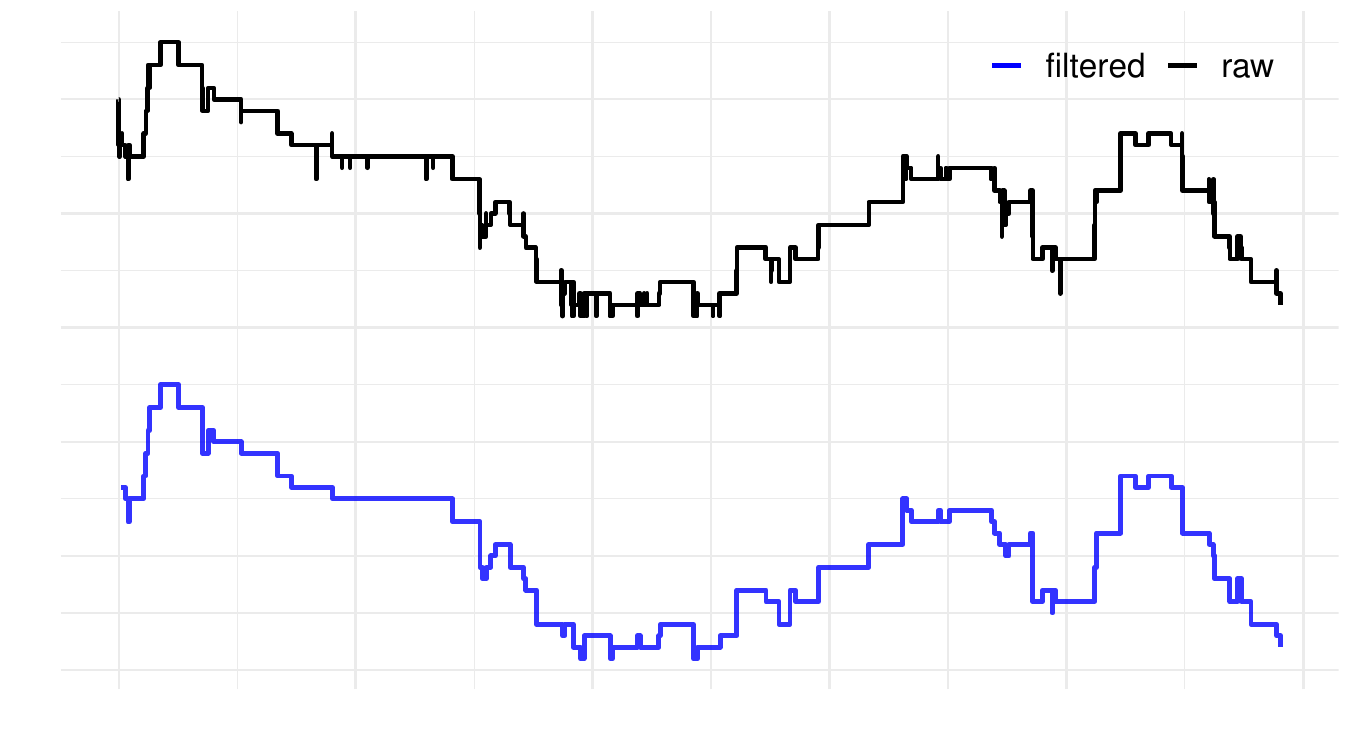}
	\caption{An example of a raw mid price and a filtered version with 0.1 seconds time interval}\label{Fig:mid}
\end{figure}

We conduct a robustness test to assess whether the established marked Hawkes model and filtering methods yield reliable computational outcomes, even when operating under the condition that the model’s assumptions differ from the actual distribution of the tick price process.
Alternatively, we employ the Hawkes process with power-law kernels, 
which have also been used to model high-frequency financial data \citep{Bacry2016,Zhang2016}.
In the two-dimensional power-law model, the kernel for intensity is represented by
\[ \bm{h}(t) = 
\bm{\alpha}  \circ
\begin{bmatrix}
	(c + t)^{-\beta_{11}} & (c + t)^{-\beta_{12} t} \\
	(c + t)^{-\beta_{21} t} & (c + t)^{-\beta_{22} t}
\end{bmatrix}	
\]
with the following presumed parameters:
\[
\bm{\mu} = 
\begin{bmatrix}
	0.41 \\
	0.40 
\end{bmatrix},
\quad
\bm{\alpha} = 
\begin{bmatrix}
230 & 320 \\
300 & 230
\end{bmatrix},
\quad
\begin{bmatrix}
	\beta_{11} & \beta_{12} \\
	\beta_{21} & \beta_{22}
\end{bmatrix}
=
\begin{bmatrix}
	14.7 & 16.5 \\
	15.5 & 15.4
\end{bmatrix},
\quad
c = 1.2.
\]

To assess the robustness of the volatility formula, 
we generate 10,000 paths based on the power-law kernel model, each with a one-hour time horizon. 
After generation, the samples are processed using the previously described filtering method.
We then apply the exponential marked Hawkes model to estimate these parameters.
From the formula specified in Theorem~\ref{Thm:var}, 
we calculate the volatility estimates from the square roots of Eq.~\eqref{Eq:var_cor} for each path to obtain the sampling distribution.
In addition, we compute the difference $N_1(t) - N_2(t)$ directly from each path and utilize these 10,000 values to determine the sample standard deviation of $N_1(t) - N_2(t)$, which serves as the basis for comparison.

In this simulation study, 
the sample standard deviation of $N_1(t) - N_2(t)$ is 167.35, 
and the sample mean of the Hawkes volatilities is 166.53. 
The sampling distribution of the Hawkes volatility is illustrated  in Figure~\ref{Fig:dist}. 
In this case, the Hawkes volatility appears to function nearly as an unbiased estimator.
Although this experiment does not encompass all the potential model assumptions for the tick structure,
it shows that our method maintains a degree of robustness even under model assumptions that do not rely on an exponential kernel or symmetry.

\begin{figure}
	\centering
	\includegraphics[width=0.45\textwidth]{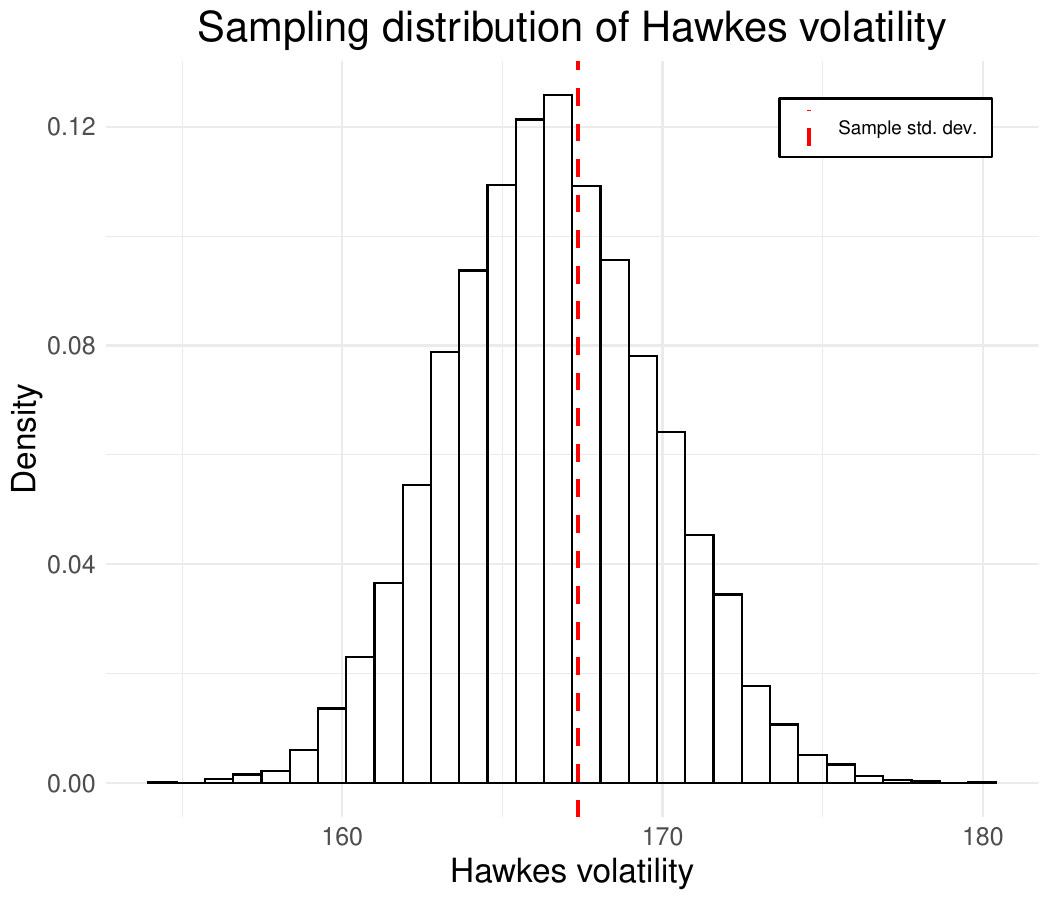}
	\caption{Sampling distribution of Hawkes volatility and sample standard deviation}\label{Fig:dist}
\end{figure}

\subsection{Basic result}

The estimates were computed using the maximum likelihood method, based on the filtered data described in the previous subsection.
The dynamics of the daily estimates of the filtered data from 2016 to 2019 are shown in Figures~\ref{Fig:mu}-\ref{Fig:eta}.
Examples of these estimates are presented in \ref{Sect:estimate}.
Figure~\ref{Fig:mu} shows the dynamics of $\mu_1$ and $\mu_2$ which are the daily estimates from NVDA on the left.
The overall trends of the two estimates are similar, and on some days they are almost the same.
However, there are also many days when the difference between $\mu_1$ and $\mu_2$ is significant, as shown in the right panel of the figure,
where $\mu_1 - \mu_2$ is plotted.
Note that our formula for variances in the previous section does not exclude the case in which $\mu_1 \neq \mu_2$.

\begin{figure}
	\centering
	\includegraphics[width=0.45\textwidth]{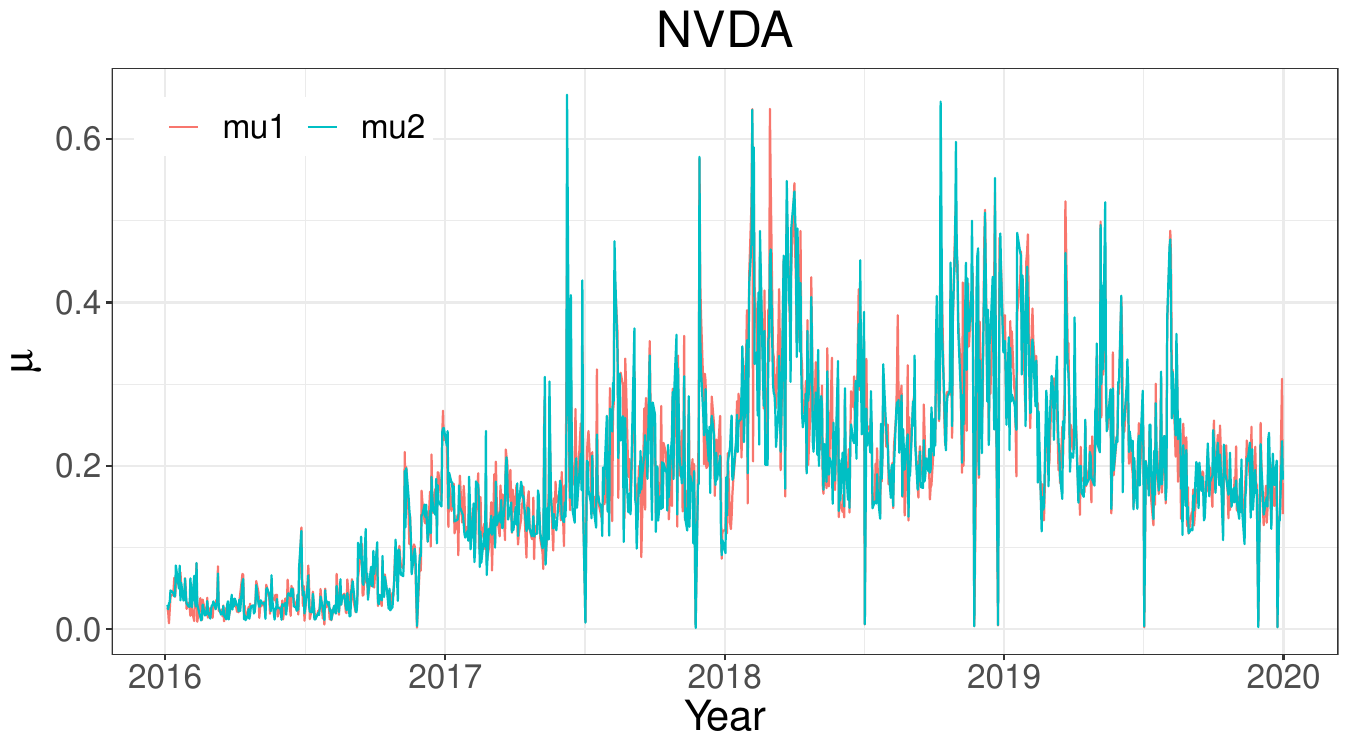} \qquad
	\includegraphics[width=0.45\textwidth]{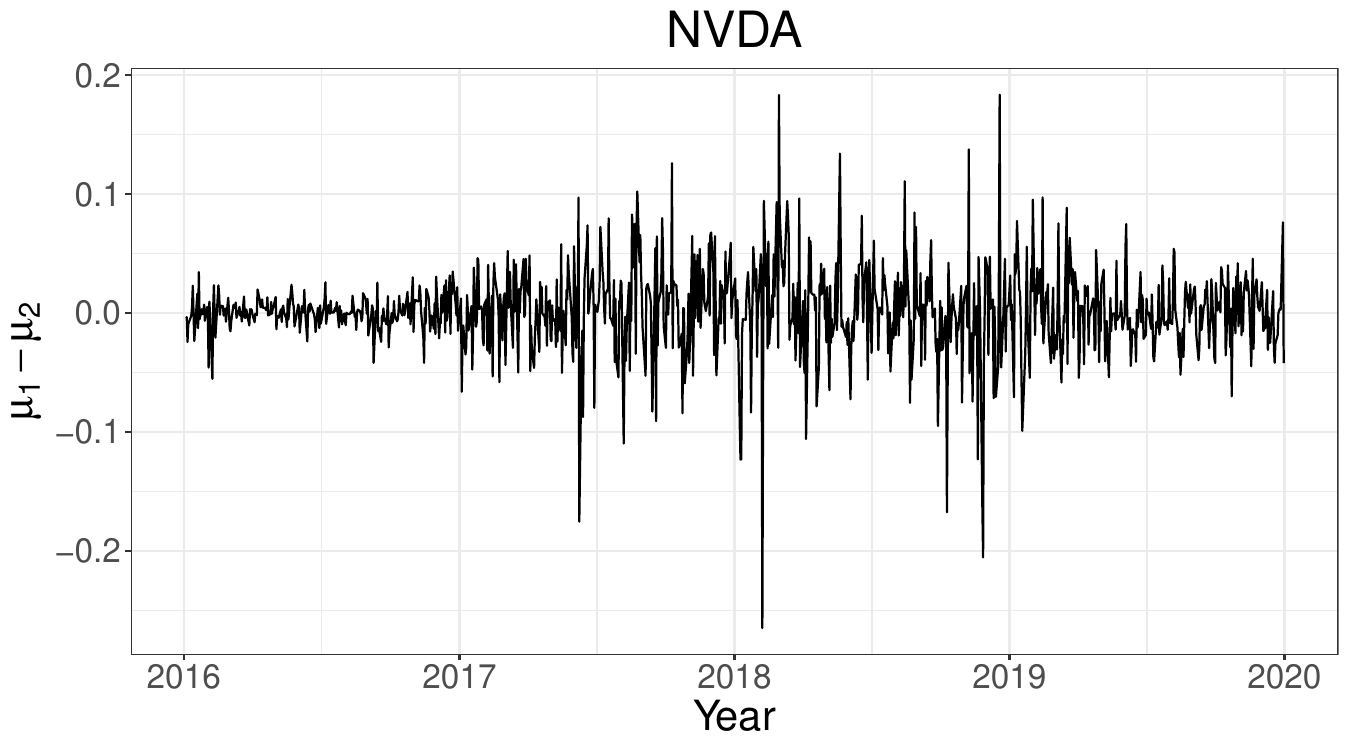} 
	\caption{The dynamics of daily estimated $\mu$ for NVDA from 2016 to 2019 }\label{Fig:mu}
\end{figure}

There are four parameters in the $\bm \alpha$ matrix and the dynamics of the daily estimates are plotted in Figure~\ref{Fig:alpha}. 
On the left, we plot the dynamics of $\alpha_{1,1}$ and $\alpha_{2,2}$, which are also known as the self-excitement parameters.
On the right, we plot the dynamics of $\alpha_{1,2}$ and $\alpha_{2,1}$, which are also known as the mutual excitement or cross-excitement parameters.
In some studies, $\alpha_{1,1} = \alpha_{2,2}$ and $\alpha_{1,2} = \alpha_{2,1}$, that is the symmetric kernel matrix, is assumed to model parsimony.
To some extent, these assumptions are reasonable because of the similar trends in each group.
However, as in the case of $\mu$, we assume a general version of the excitement matrix $\bm \alpha$.
In other words, there are no equality constraints for the elements of the matrix.

\begin{figure}
	\centering
	\includegraphics[width=0.45\textwidth]{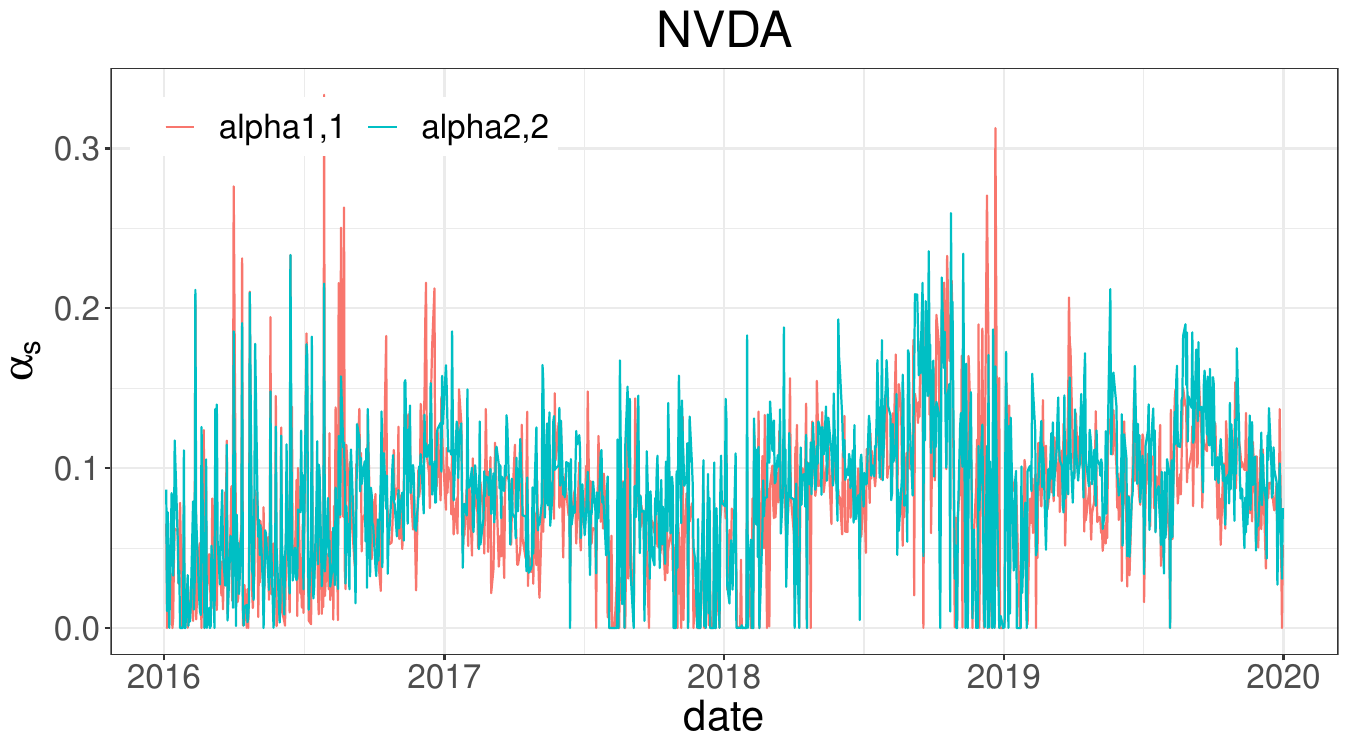} \qquad
	\includegraphics[width=0.45\textwidth]{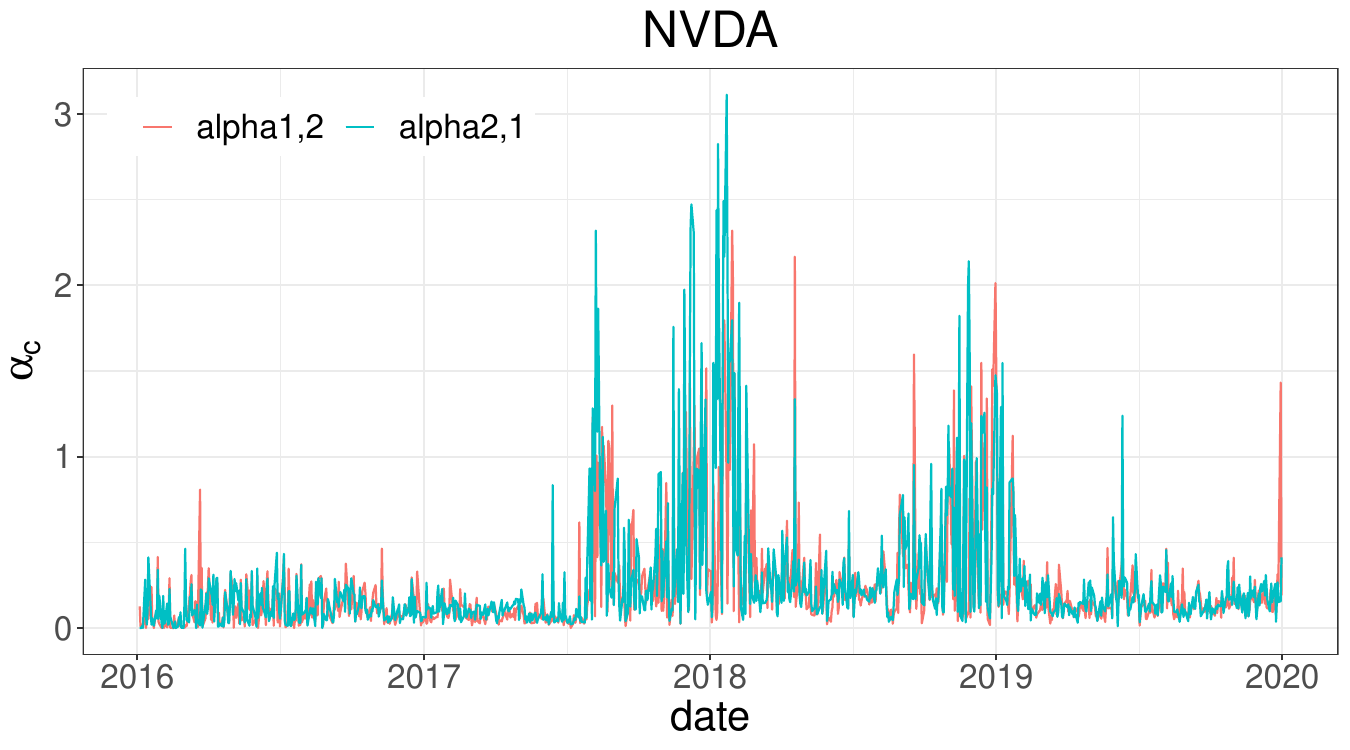} 
	\caption{The dynamics of daily estimated $\alpha$, NVDA from 2016 to 2019}\label{Fig:alpha}
\end{figure}

Figure~\ref{Fig:beta} shows the dynamics of $\beta_1$ and $\beta_2$ on the left side.
Similarly, for $\mu$, $\beta_1$ is similar to $\beta_2$.
However, there are often days on which $\beta_1$ and $\beta_2$ differ significantly.
Therefore, we cannot rule out the possibility that $\beta_1$ and $\beta_2$ differ from each other.

\begin{figure}
	\centering
	\includegraphics[width=0.45\textwidth]{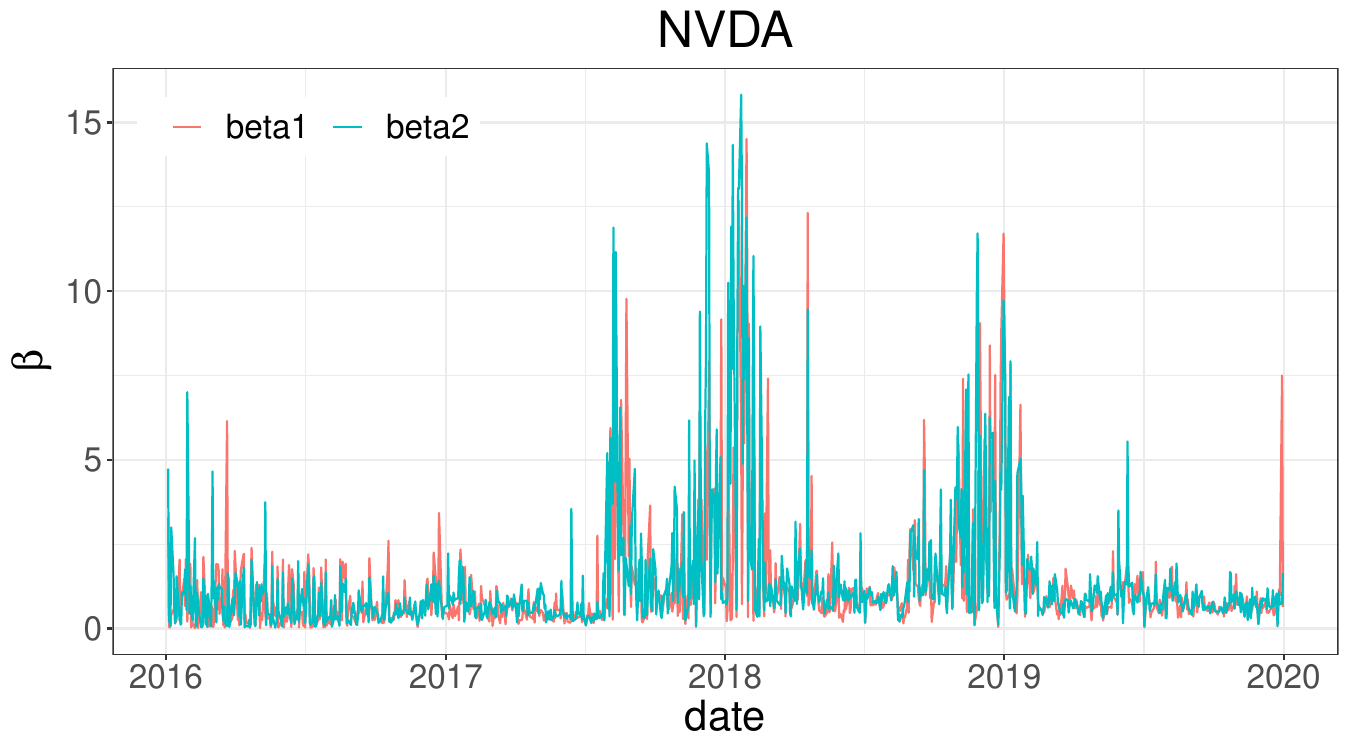} \qquad
	\includegraphics[width=0.45\textwidth]{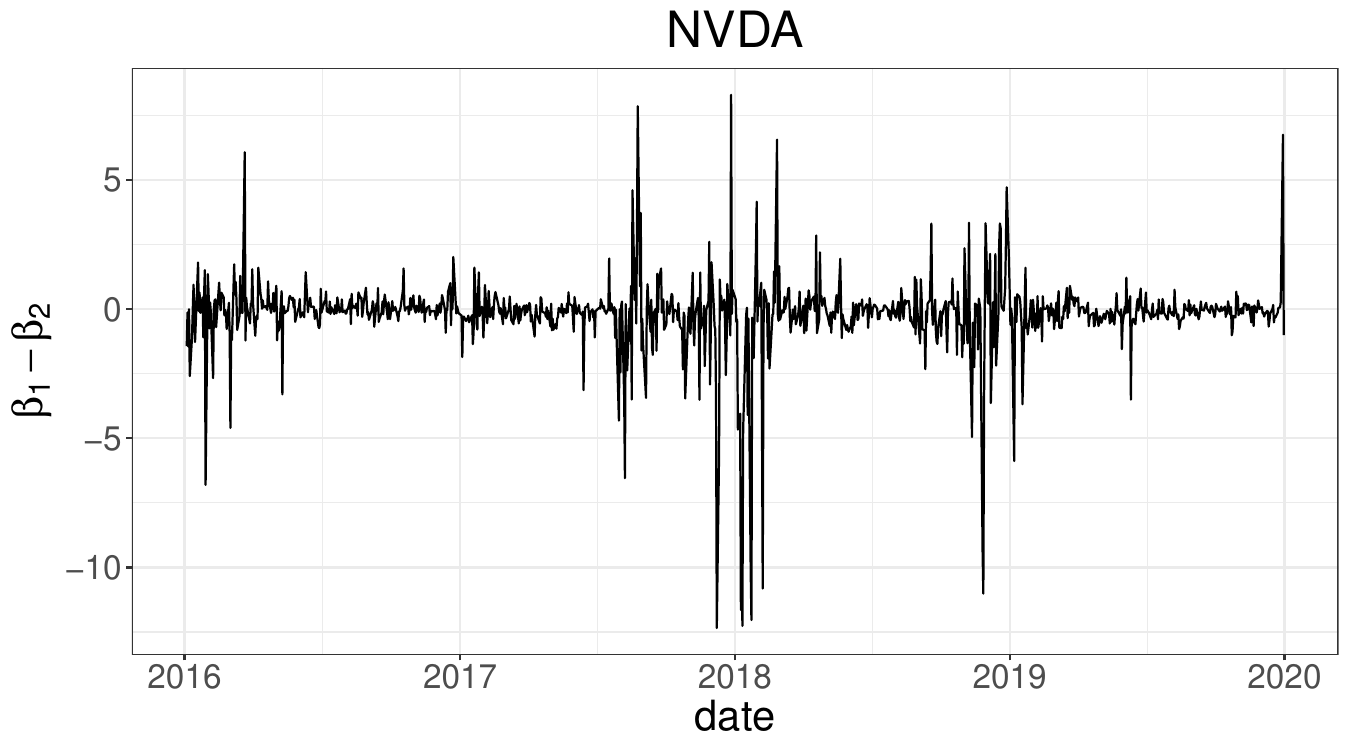} 
	\caption{The dynamics of daily estimated $\beta$ for NVDA from 2016 to 2019}\label{Fig:beta}
\end{figure}

Finally, Figure~\ref{Fig:eta} illustrates the dynamics of the four parameters in $\bm \eta$.
As in the case of $\bm \alpha$, $\eta_{1,1}$ is similar to $\eta_{2,2}$ and $\eta_{1,2}$ is similar to $\eta_{2,1}$.
For simplicity, the assumption of the symmetric $\bm \eta$ matrix is reasonable,
especially when a small amount of data results in a low statistical power.
As the data has sufficient observations, even for the filtered data, 
we do not assert any parameter constraints, and 
estimate the volatility using a general model.
For the intraday estimations to be discussed later, 
the amount of data available for estimations is relatively small, 
and symmetric matrices are assumed for $\bm\alpha$ and $\bm\eta$.

\begin{figure}
	\centering
	\includegraphics[width=0.45\textwidth]{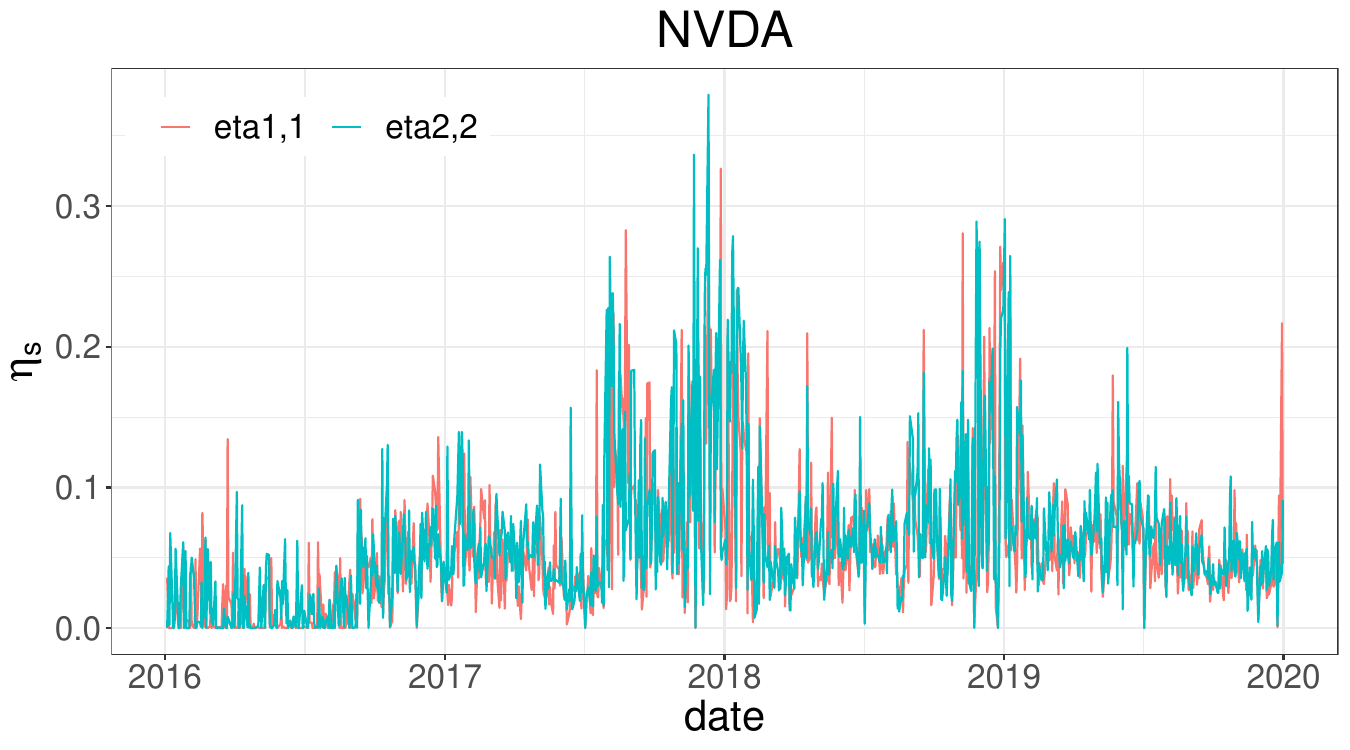} \qquad
	\includegraphics[width=0.45\textwidth]{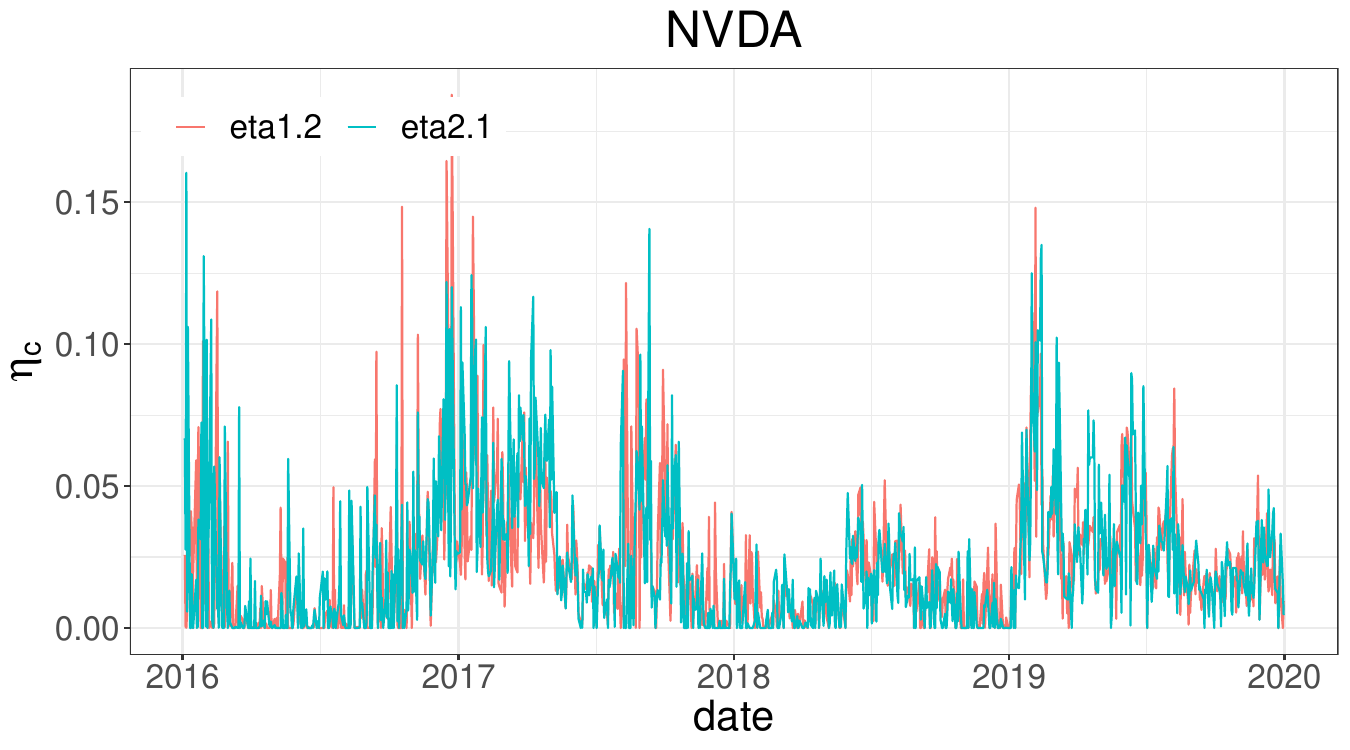} 
	\caption{The dynamics of daily estimated $\eta$, NVDA from 2016 to 2019}\label{Fig:eta}
\end{figure}

We present quantile-quantile (Q-Q) plots of the residuals
to check whether the model fits the filtered data well.
Since the estimations are performed on a daily basis, one Q-Q plot can be generated for each day and stock.
Examples of the Q-Q plots, the residuals versus the unit exponential distribution, are presented in Figure~\ref{Fig:qq}.
Using the estimates obtained from the estimation procedure, 
the fitted intensities, $\hat \lambda_1$ and $\hat \lambda_2$, are computed.
Then, the residuals are computed by
$$\bigcup_{i=1,2} \left\{ \int_{t_{i, j}}^{t_{i, j+1}} \hat \lambda_i (u) \D u \right\},$$
where $t_{i,j}$ denotes the corresponding event times.

\begin{figure}
	\centering
	\includegraphics[width=0.4\textwidth]{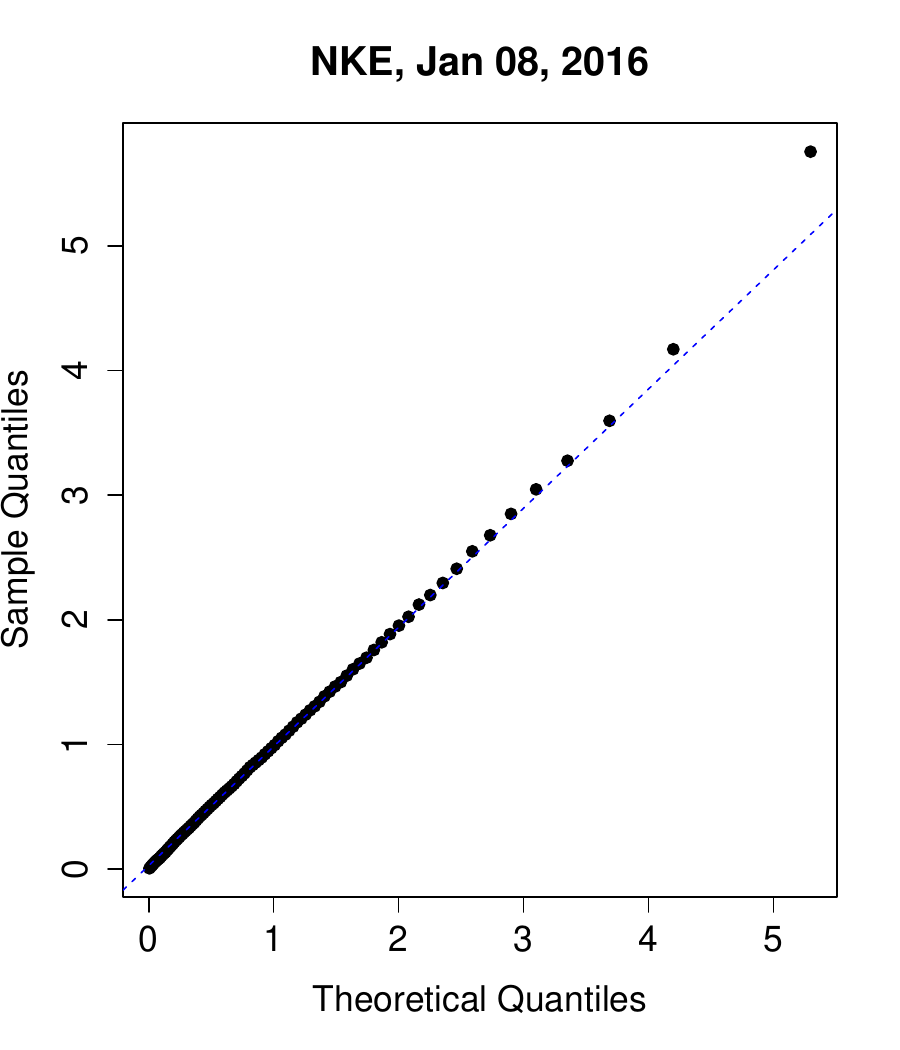} \qquad
	\includegraphics[width=0.4\textwidth]{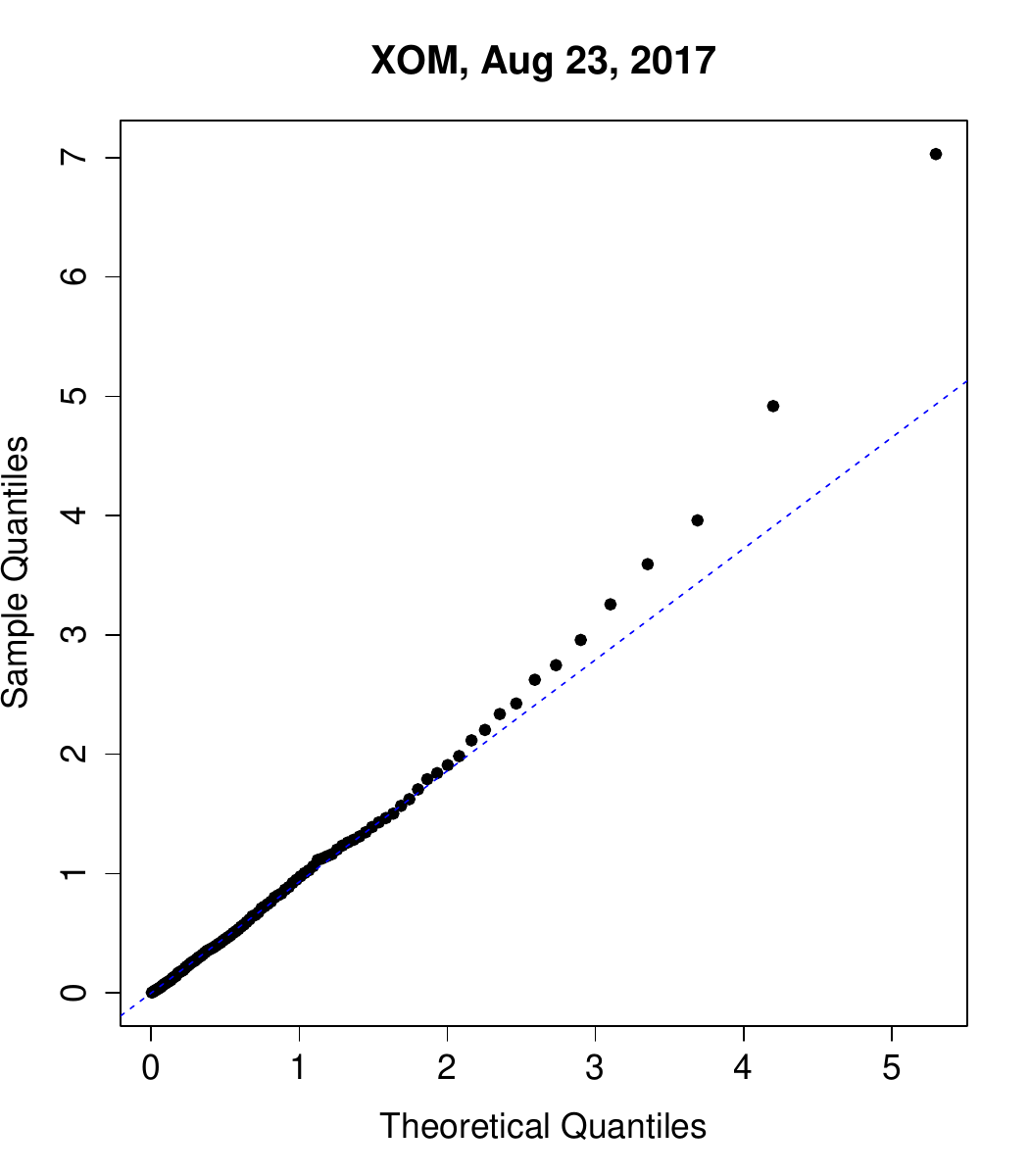} 
	\caption{Q-Q plots of residuals}\label{Fig:qq}
\end{figure}

It is not possible to show the Q-Q plots for all the estimated data.
However, the following two types were observed.
First, the points in the Q-Q plot are very close to the straight line, indicating that the model fits the data well.
In this case, we cannot reject the null hypothesis that the residual distribution follows the unit exponential distribution.
Second, the points in the Q-Q plot are spread slightly above the straight line, and the estimated residuals have slightly thicker tails than the exponential distribution.
In this case, the null hypothesis that the residual follows the unit exponential distribution is rejected at a typical significance level.
However, we believe that this does not significantly affect the estimation of volatility.

\subsection{Volatility dynamics}

The daily volatilities are measured using estimates and formulas in Section~\ref{Sec:mark}.
We present graphs to visually check whether the Hawkes volatility adequately measures stock price variability.
Figure~\ref{fig:abs} shows the absolute daily price changes in NVDA stock from 2016 to 2019.
Figure~\ref{fig:ind} shows the daily Hawkes volatilities computed under the assumption that the marks and intensities are independent, as in Corollary~\ref{Cor:ind}.
In Figure~\ref{fig:dep}, the daily Hawkes volatilities are calculated without excluding the possibility that the marks and intensities are dependent, as in Corollary~\ref{Cor:var}.
The overall trends in the two estimated volatilities are almost identical.
This is a good way to assume independence between the marks and the underlying counting processes and intensities to simplify the procedure.
The realized volatility, calculated following \cite{ABDL} as a benchmark, is plotted in Figure~\ref{fig:real}.
The three types of volatilities show similar dynamics.
The time interval for computing the realized volatility was set to 5 minutes.

\begin{figure}[]
	\begin{subfigure}{0.47\textwidth}
		\centering
		\includegraphics[width=\textwidth]{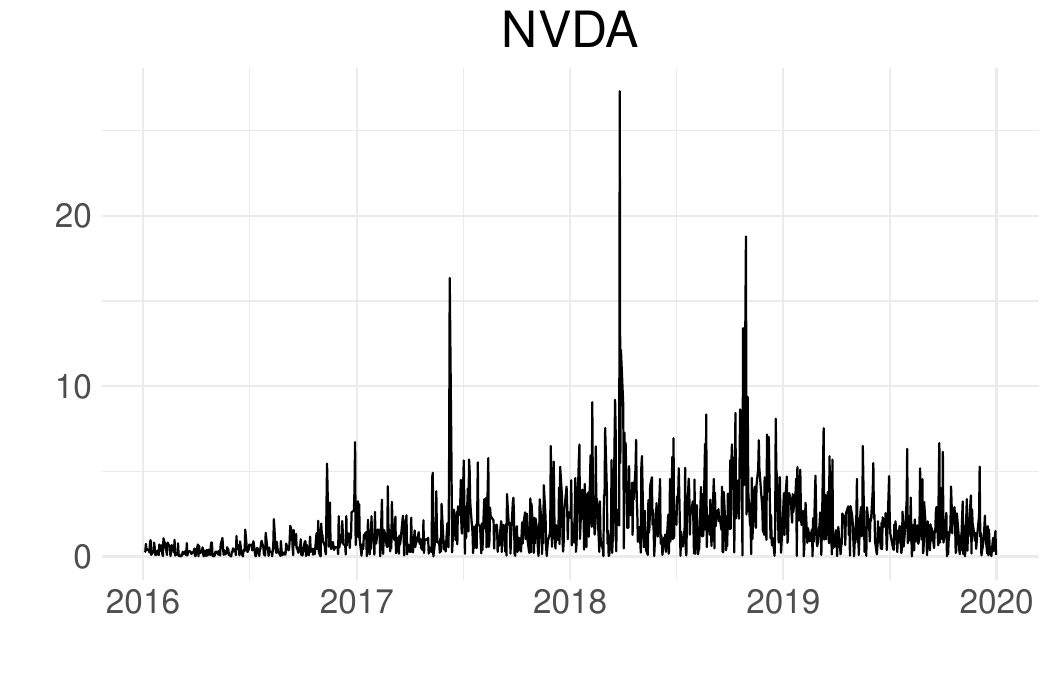}
		\caption{Absolute daily price change}
		\label{fig:abs}
	\end{subfigure}
	\quad
	\begin{subfigure}{0.47\textwidth}
		\centering
		\includegraphics[width=\textwidth]{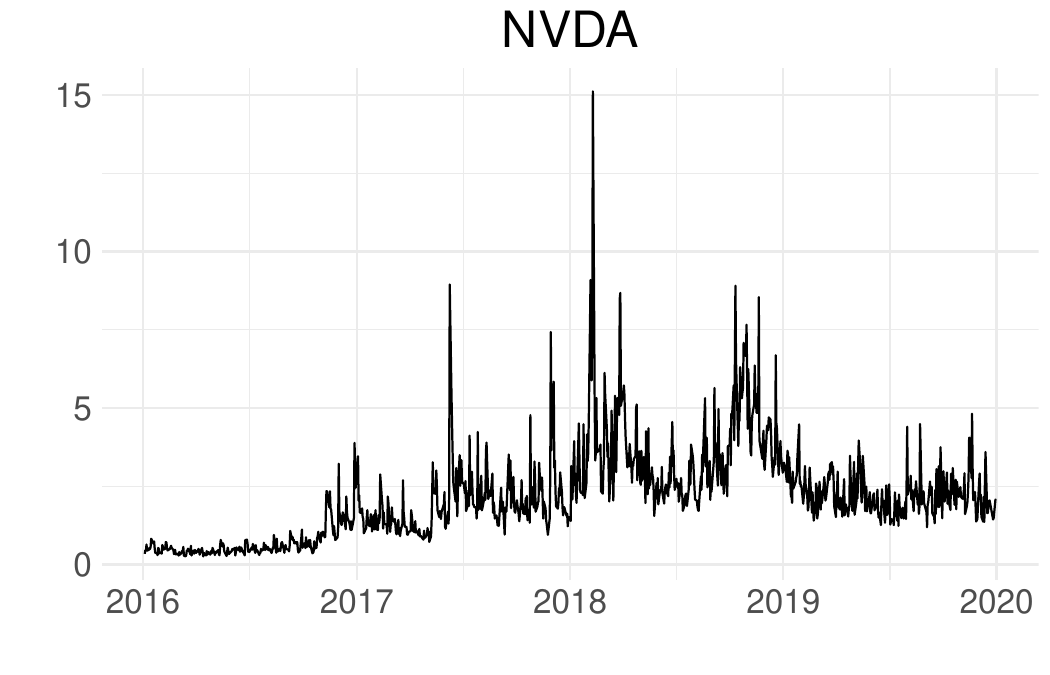}
		\caption{Daily realized volatility}
		\label{fig:real}
	\end{subfigure}
	
	\begin{subfigure}{0.47\textwidth}
		\centering
		\includegraphics[width=\textwidth]{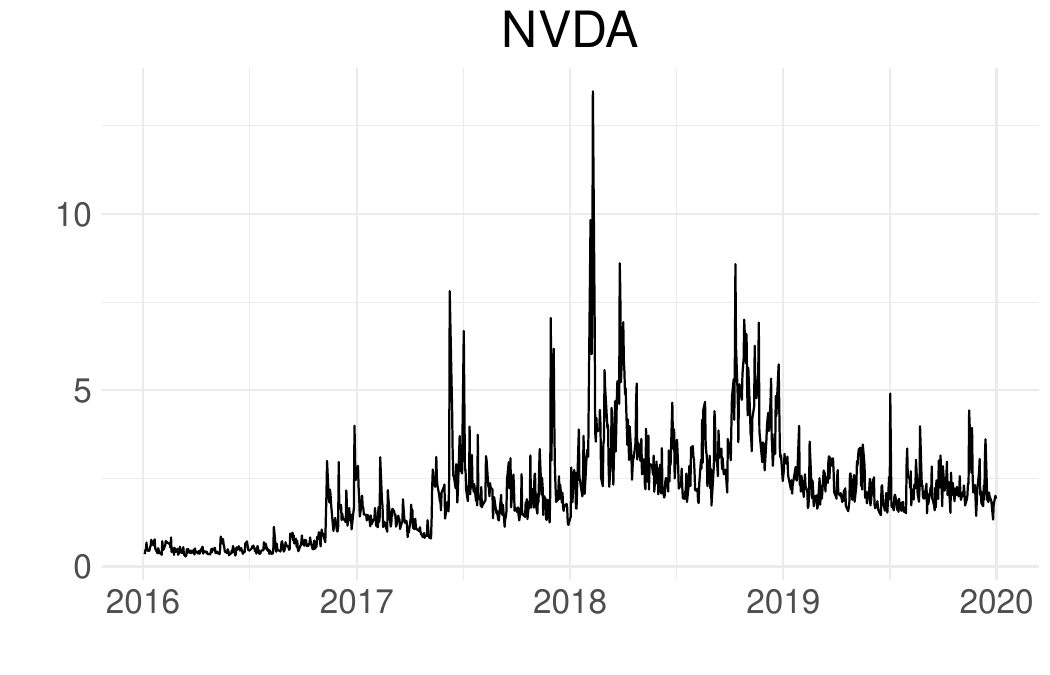}
		\caption{Daily Hawkes volatility under independence}
		\label{fig:ind}
	\end{subfigure}
	\quad
	\begin{subfigure}{0.47\textwidth}
		\centering
		\includegraphics[width=\textwidth]{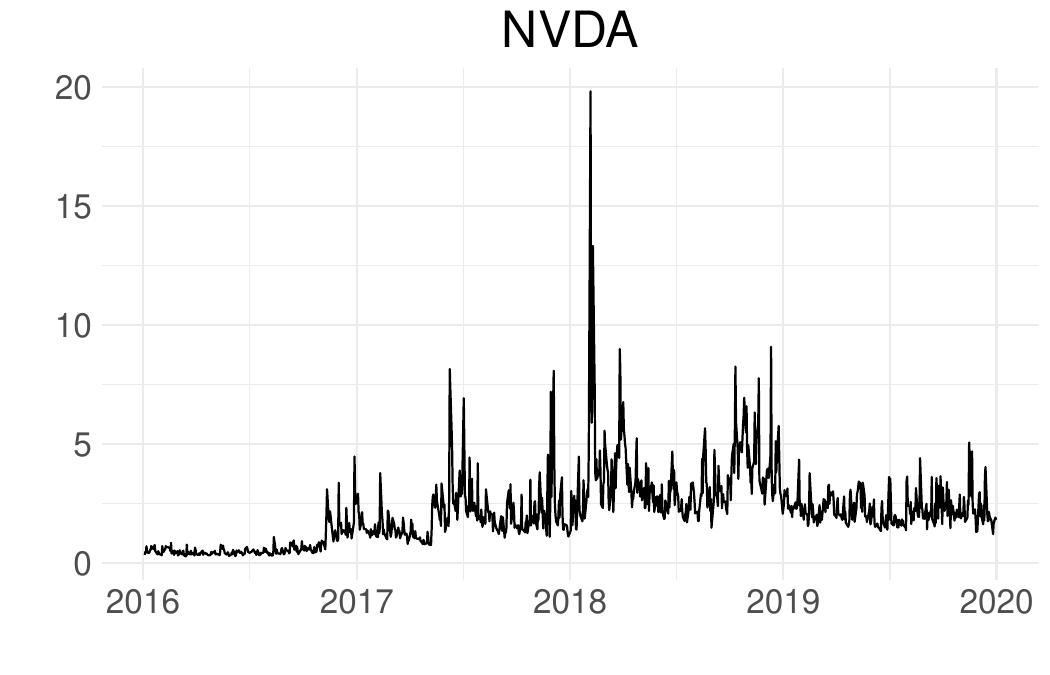}
		\caption{Daily Hawkes volatility under dependence}
		\label{fig:dep}
	\end{subfigure}
	\caption{Comparison of the dynamics of the daily estimated volatilities and the absolute price change}~\label{Fig:compare}
\end{figure}

The backtesting method, which is often used in VaR analysis, is applied to test the performance of the Hawkes volatility.
This provides a graphical guideline for the performance of the volatility measures.
In Figure~\ref{Fig:exceed}, the dynamics of absolute daily price changes from 2016 to 2017 are plotted as a black line.
Twice the daily estimated Hawkes volatility, with the dependence assumption computed by Corollary~\ref{Cor:var}, is plotted as a red line.
Both lines exhibit the same trend.
In 2016, both the red and black lines have small values;
in 2017, both lines exhibit relatively large values.
From a graphical perspective,
the Hawkes volatility seems to adequately capture the variability in price changes.

The days on which the absolute daily price difference exceeds twice the Hawkes standard deviation are represented by blue circles.
Although omitted from the figure, the absolute price difference exceeds twice the Hawkes volatility on 50 days, which is around 5\% of all observed days.
Interestingly, this is consistent with the rule that approximately 95\% of the data are within two standard deviations of the normal distribution.
Furthermore, the distribution of exceedances is spread uniformly over the entire period.
This implies that the Hawkes model captures the volatility trend well.

\begin{figure}
	\centering
	\includegraphics[width=0.8\textwidth]{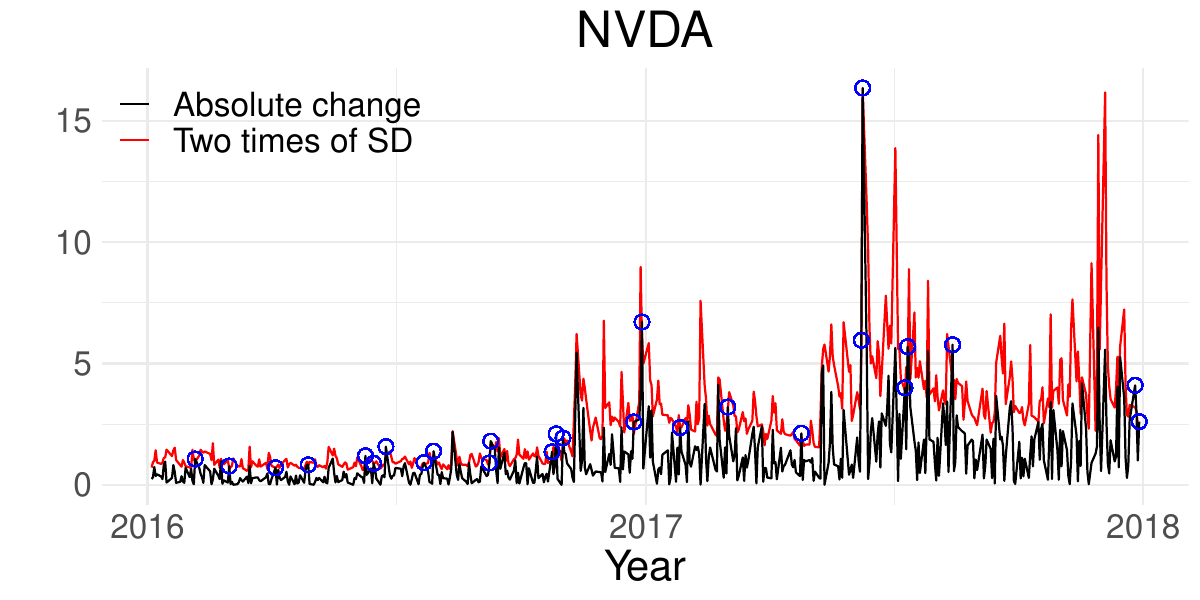} 
	\caption{The exceeded absolute price change compared with two times of measured Hawkes volatility}\label{Fig:exceed}
\end{figure}

We extend this discussion to arbitrary multiples of the standard deviation.
Note that 68\%, 95\%, and 99.7\% of the values lie within one, two, and three standard deviations of the mean zero normal distribution, respectively.
In Figure~\ref{Fig:percent}, the x-axis denotes the number of multiples of the standard deviation.
The y-axis represents the proportion of price changes that do not exceed the corresponding multiples of the standard deviation.
The blue line represents the Hawkes volatility.
The red line represents the probability that the data lie within a multiple of the standard deviation of the standard normal distribution.
Interestingly, the two curves are almost identical.
The Hawkes volatility is based on a model that doese not exclude the possibility of dependence.
However, the volatility under the assumption of independence shows similar results.
These results support the Hawkes volatility as a reliable measure of price variability.

\begin{figure}[]
	\begin{subfigure}{0.47\textwidth}
		\centering
		\includegraphics[width=\textwidth]{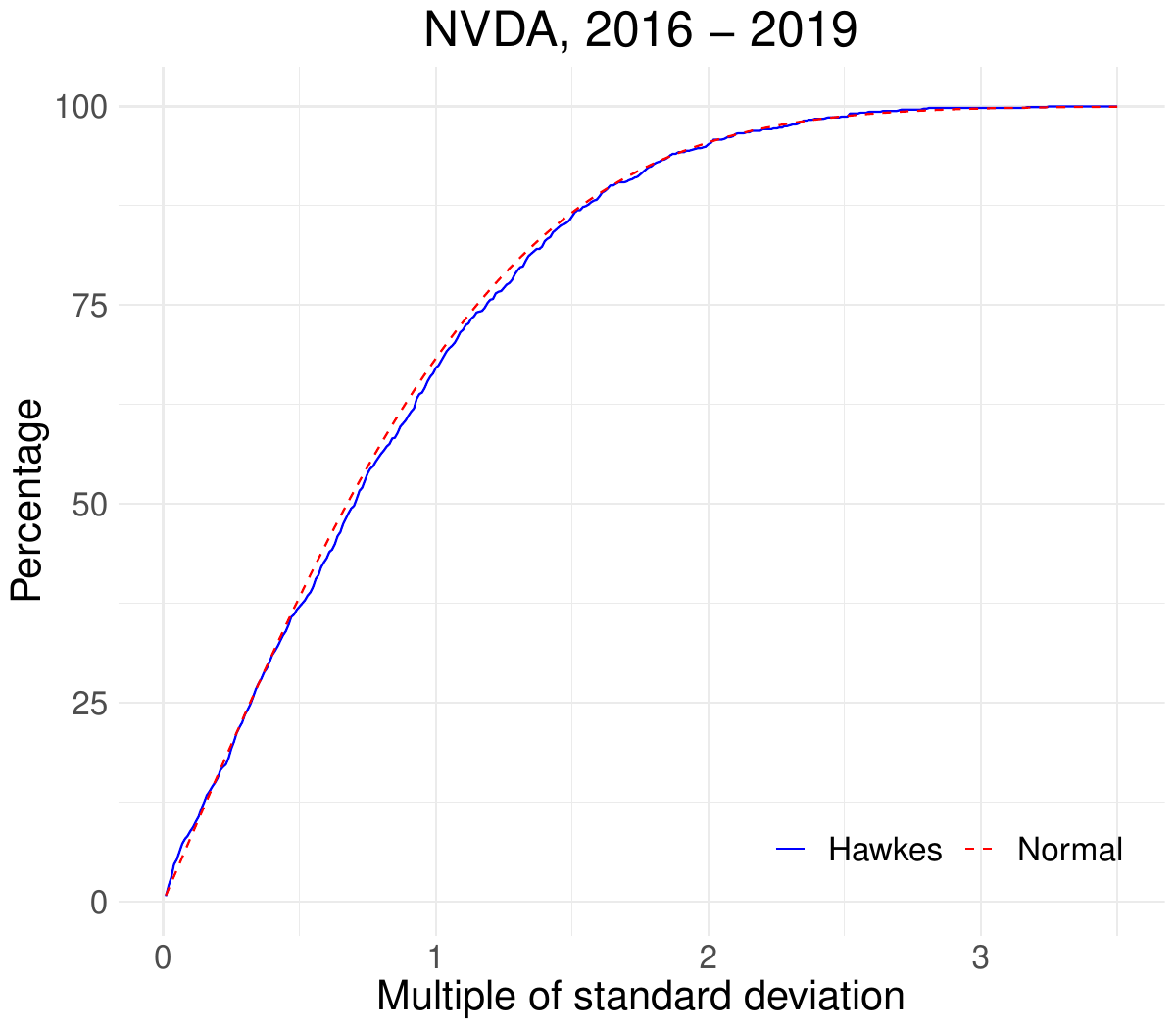}
		\caption{}
		\label{fig:percent_NVDA}
	\end{subfigure}
	\quad
	\begin{subfigure}{0.47\textwidth}
		\centering
		\includegraphics[width=\textwidth]{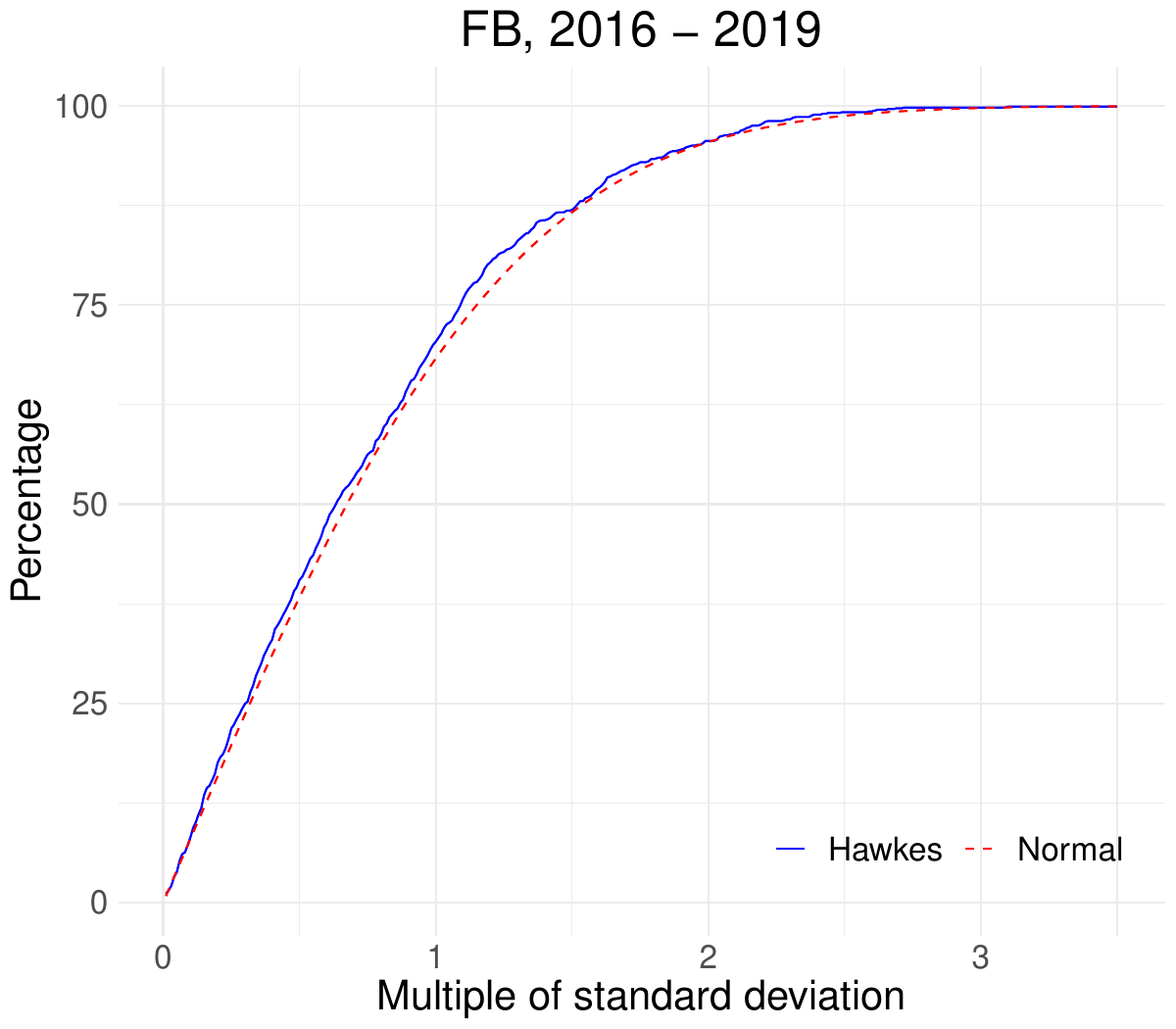}
		\caption{}
		\label{fig:percent_FB}
	\end{subfigure}
	\caption{Percentage of absolute price change that lies within the multiple of standard deviation}\label{Fig:percent}
\end{figure}

\subsection{Intraday estimation}

An important application of Hawkes volatility is the ability to observe time-varying volatility in near real-time.
In other words, we can observe the dynamics of intraday volatility.
This is because of the trading nature of the stock market
that provides sufficient data for estimation, even in a short period.
Under a 30-minute time window, the marked Hawkes model was adequately fitted, and 
volatility was calculated using the estimated parameters.
For small amounts of data, 
we assume that $\bm \alpha$ and $\bm \eta$ are symmetric to achieve model parsimony.
The time interval window for the estimation is reset after 10 seconds.

In Figure~\ref{Fig:intraday1}, we plot examples of intraday volatilities from January 15, 2016, with a timeline from 10:00 AM to 15:00 PM.
In the left panel, it is noteworthy that 
volatility rebounds immediately after noon.
This may indicate that market participants return to the stock market after lunch.
The right panel shows the well-known U-shaped volatility form more precisely.
The real-time Hawkes volatility measure is expected to be useful for traders managing intraday price risk.
This allows them to observe the market volatility of each stock in real time.

\begin{figure}[]
	\begin{subfigure}{0.47\textwidth}
		\centering
		\includegraphics[width=\textwidth]{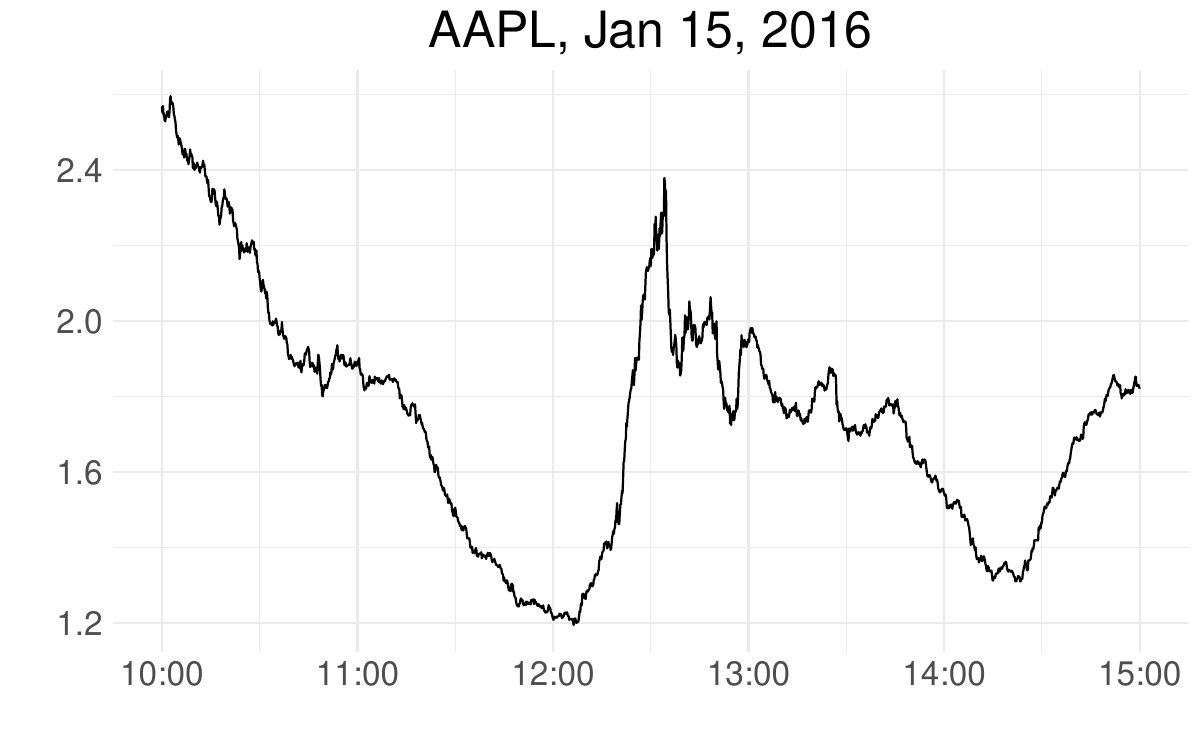}
		\caption{AAPL}
	\end{subfigure}
	\quad
	\begin{subfigure}{0.47\textwidth}
		\centering
		\includegraphics[width=\textwidth]{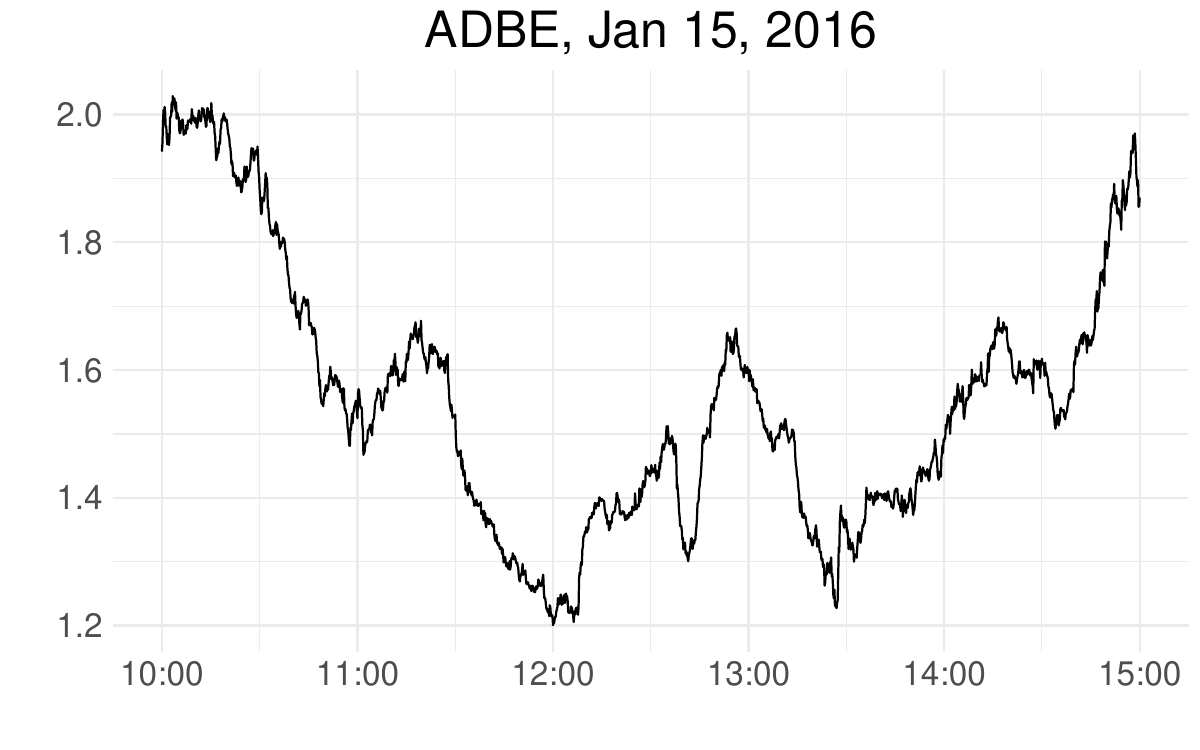}
		\caption{ADBE}
	\end{subfigure}
	\caption{Intraday volatility dynamics in a typical day}\label{Fig:intraday1}
\end{figure}

We apply the simple Hawkes model discussed in Remarks~\ref{Remark:simple}~and~\ref{Remark:simple2} to the unfiltered price processes to estimate intraday volatility. 
Figure~\ref{Fig:intraday_simple} provides further details.
Because no filtering was applied, the raw data retains microstructure noise, 
resulting in volatility estimates that exhibit unstable fluctuations.
Comparing these estimates with those in Figure~\ref{Fig:intraday1}, 
the overall trends in intraday volatilities are similar.

\begin{figure}[]
	\begin{subfigure}{0.47\textwidth}
		\centering
		\includegraphics[width=\textwidth]{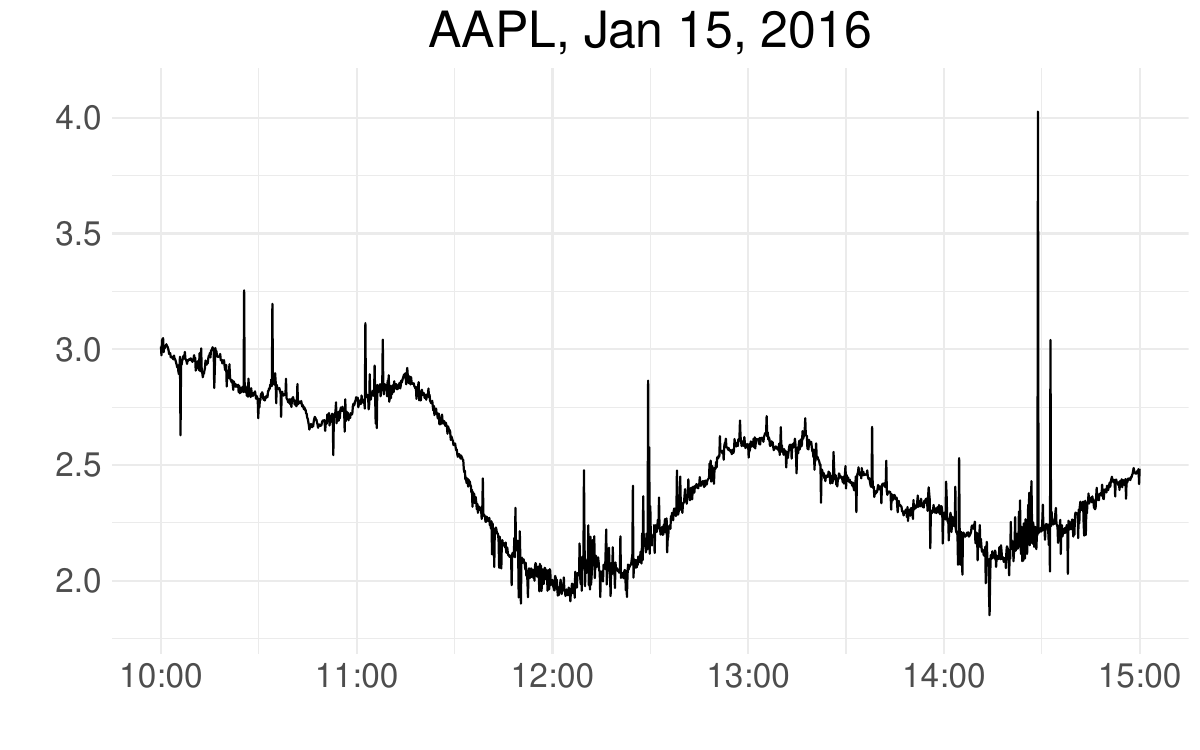}
		\caption{AAPL}
	\end{subfigure}
	\quad
	\begin{subfigure}{0.47\textwidth}
		\centering
		\includegraphics[width=\textwidth]{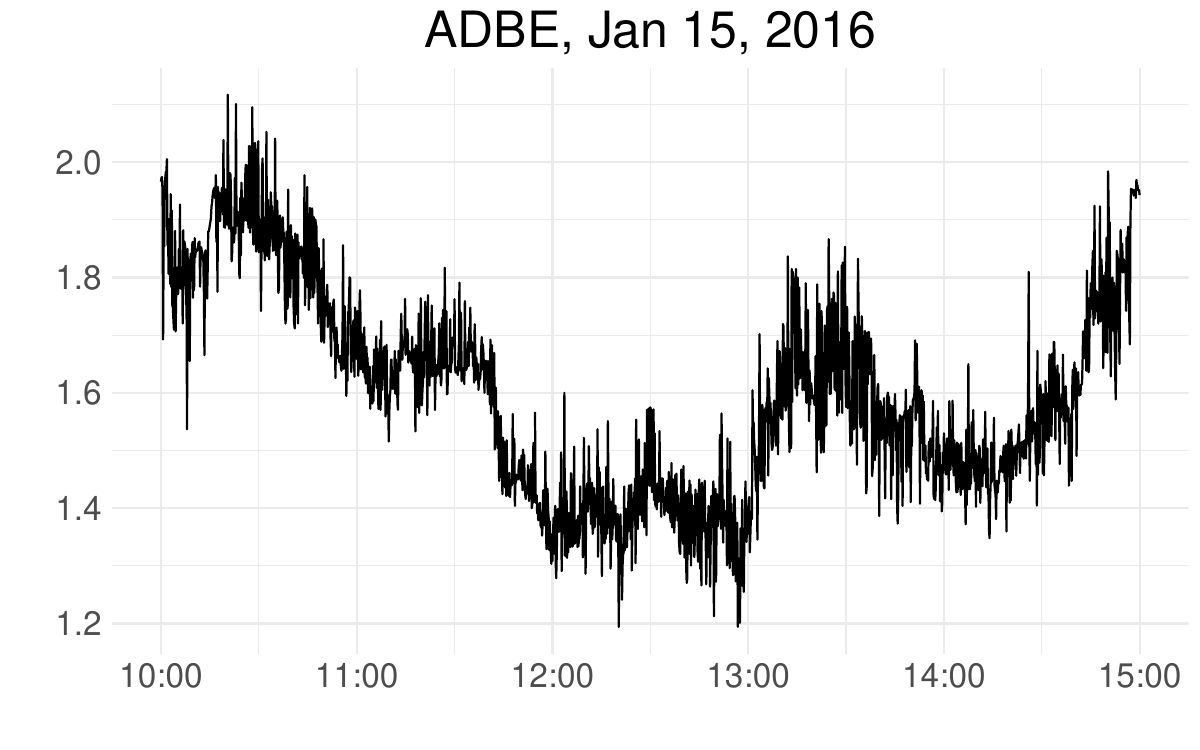}
		\caption{ADBE}
	\end{subfigure}
	\caption{Intraday volatility dynamics in a typical day}\label{Fig:intraday_simple}
\end{figure}

Figure~\ref{Fig:intraday2} shows another example in this regard, observed on February 20, 2020.
On that day, the stock market, which had risen at the start as usual, 
suddenly declined from 10:45 AM for approximately one hour.
Subsequently, it recovered gradually.
Although the exact reason for this decline is unknown, 
the market is suspected to have been highly volatile for some time, 
following the rapid spread of COVID-19.
The Hawkes volatility illustrates the real-time volatility changes observed on this day,
showing that the volatility increased rapidly until approximately 11:30 AM and then gradually stabilized.

\begin{figure}[]
	\begin{subfigure}{0.47\textwidth}
		\centering
		\includegraphics[width=\textwidth]{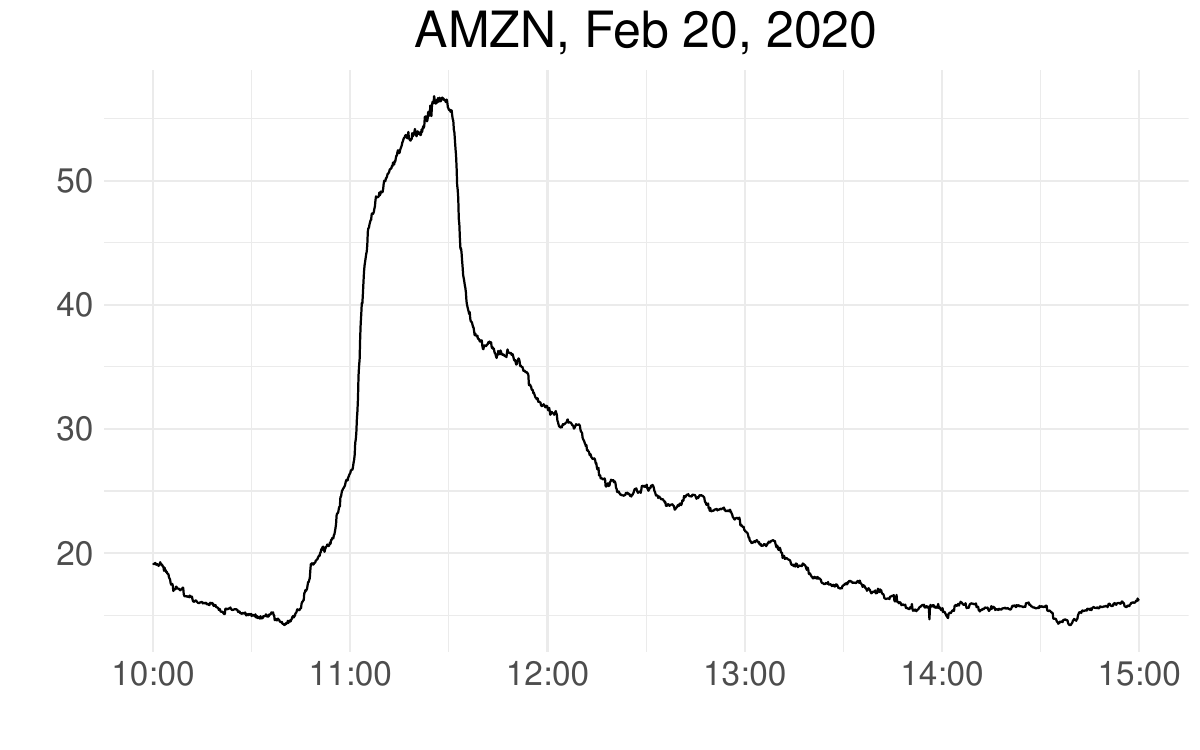}
		\caption{AMZN}
	\end{subfigure}
	\quad
	\begin{subfigure}{0.47\textwidth}
		\centering
		\includegraphics[width=\textwidth]{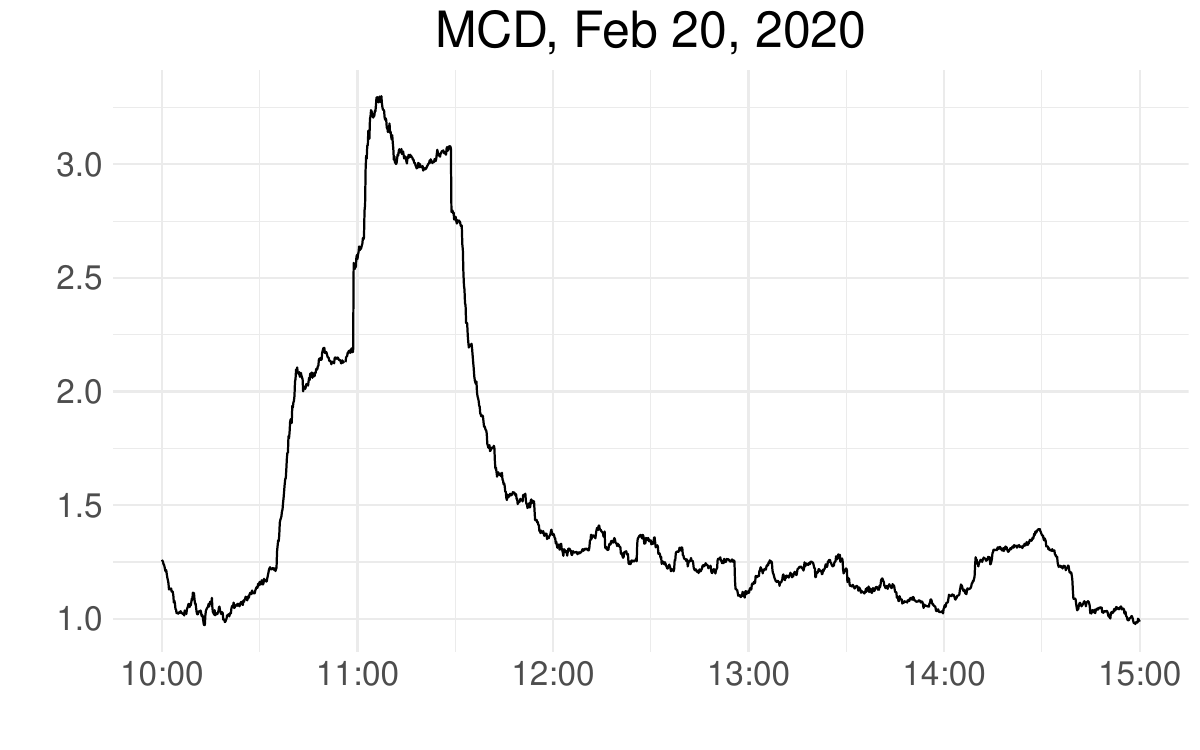}
		\caption{MCD}
		\label{fig:intra_adbe}
	\end{subfigure}
	\caption{Intraday volatility dynamics with sudden change in volatility}\label{Fig:intraday2}
\end{figure}

The typical volatility forecasting used in risk management is performed using a discrete time framework.
Tomorrow's volatility is predicted based on the historical return data up to today.
The introduction of intraday Hawkes volatility and the use of high-frequency data in this study enable real time volatility estimation throughout the day.
We investigate whether intraday volatility, which is updated in real time after the market opens, contains more information about the total real variation of the day's returns.
Does the prediction, including updated intraday volatility,
result in a better volatility forecast than that calculated solely on the return information up to the previous day?
This new information is expected to yield better volatility prediction results.
If so, when does the intraday volatility of the day contain more accurate information than the history up to the previous day?
The Hawkes intraday volatility provides an answer to this question.

The opening and closing times of the stock market are denoted by $t_n$ and $u_n$, respectively, expressed in discrete times for $1 \leq n \leq N$.
Then, the daily return process is 
$$R_n = \frac{S(u_n) - S(t_n)}{S(t_n)}.$$

First, we consider the GJR-GARCH model \citep{glosten1993relation} for traditional volatility prediction in a discrete time setting.
A simple variance model is used to simplify the test.
However, there is no significant difference in the results, 
even when the GARCH model with the mean is used.
In this model, the GARCH volatility on day $n$ is represented by
\begin{equation}
	g^2_n = \omega_g + (\alpha_g + \gamma_g I_{\{R_{n-1} < 0\}} ) \varepsilon_{n-1}^2 + \beta_g g^2_{n-1} \label{Eq:GJR}
\end{equation}
with parameters $\omega_g, \alpha_g, \gamma_g, \beta_g$
and $\varepsilon_{n} = R_n / g_n$.
The estimates of parameters $\hat \omega_g, \hat \alpha_g, \hat \gamma_g, \hat \beta_g$ 
are based on the maximum likelihood estimation with $m$ previous returns $R_{n-m}, \cdots, R_{n-1}$.
We use $m=1,500$ days for the empirical study.
The one step ahead volatility prediction is accomplished by
$$ \hat g^2_{n+1} = \hat \omega_g + (\hat \alpha_g + \hat \gamma_g I_{\{ R_{n} < 0 \}} ) \varepsilon_{n}^2 + \hat \beta_g \hat g^2_{n}.$$
This prediction is assumed to be computed before the opening of the stock market
and can be used for daily risk management from a practical point of view.

After predicting today's volatility in advance using a GARCH-based volatility model and historical data, 
the risk manager watches as the market opens and new tick information begins to flow in.
With this new influx of information, the manager begins to estimate today's Hawkes volatility in real time.
Let
$\hat h_n(T)$ be the estimated Hawkes volatility defined by Corollary~\ref{Cor:var} based on data from the opening time $t_n$ to $t_n + T \leq u_n$ under the symmetric kernel.
We now establish a simple, linearly combined volatility measure:
\begin{equation}
	\sigma_n (T) = \theta_1(T)  g_n + \theta_2(T)  h_n(T) \label{Eq:combined_vol}
\end{equation}
This equation is established to determine whether volatility measurement performance improves when intraday Hawkes volatility up to time $t_n + T$ is added to the traditional GARCH volatility measure.
When applied to empirical studies, the estimates of $g_n$ and $h_n$ are used:
$$ \hat \sigma_n (T) = \theta_1(T)  \hat g_n + \theta_2(T)  \hat h_n(T).$$
Specifically, $\sigma_n (T)$ is the $n$-day volatility, 
a linear combination of the predicted GARCH volatility based on past $m$-return data and 
the Hawkes volatility measured from opening time $t_n$ to $t_n + T$.

Next, we estimate $\theta_1(T)$ and $\theta_2(T)$ using $\hat g_n$ and $\hat \theta_n$, for $1 \leq n \leq N$.
As $n$ varies, the parameters in Eq.~\eqref{Eq:GJR} are re-estimated.
With a fixed $T$, we find $\hat \theta(T)$ such that
\begin{equation}
	\hat \theta_1(T), \hat \theta_2(T) = \argmax_{\theta_1, \theta_2} \ell_T (R_1, \cdots, R_N ; \hat \sigma_1(T), \cdots, \hat \sigma_N(T) ) \label{Eq:llh}
\end{equation}
where $\ell_T$ is the log-likelihood function of $N$ independent mean-zero multivariate normal variables with a vector of standard deviations $\hat \sigma_1(T), \cdots, \hat \sigma_N(T)$.

Using data from AAPL, 2016-2019, the GARCH model is estimated using daily return time series data, and intraday data are used to compute the intraday Hawkes volatility.
As shown on the left side of Figure~\ref{Fig:prediction}, 
the log-likelihood $\ell_T$ in Eq.~\eqref{Eq:llh} monotonically increases with time $T$ after the market opens.
This finding implies that the influx of new information on intraday price changes increases the accuracy of daily volatility predictions.
Intraday changes in $\theta(T)$ over time $T$ are compared on the right-hand side of Figure~\ref{Fig:prediction}.
During the initial market stages, 
$\theta_1$ of GARCH is larger than $\theta_2$. 
That is,
the GARCH volatility predicted before the market opening has more predictive power than the intraday Hawkes volatility.
However, the $\theta_2$ of the Hawkes volatility dominates the role of volatility prediction after approximately 10:00 AM.
Approximately $T = 30$ minutes after the market opens, the information from $t_n$ to $t_n+T$ for predicting daily volatility is very rich, and the role of volatility information from the past days is rapidly reduced.

\begin{figure}[]
	\begin{subfigure}{0.47\linewidth}
		\centering
		\includegraphics[width=\textwidth]{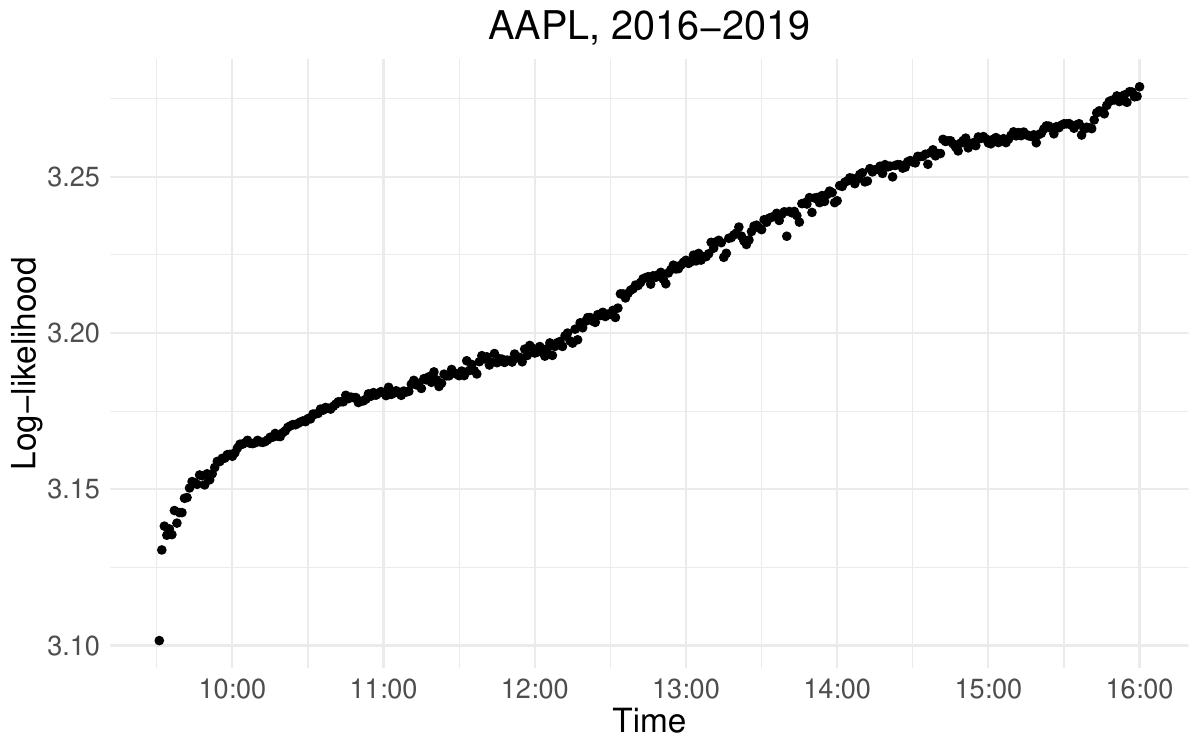}
		\caption{Log-likelihood}
	\end{subfigure}
	\quad
	\begin{subfigure}{0.47\linewidth}
		\centering
		\includegraphics[width=\textwidth]{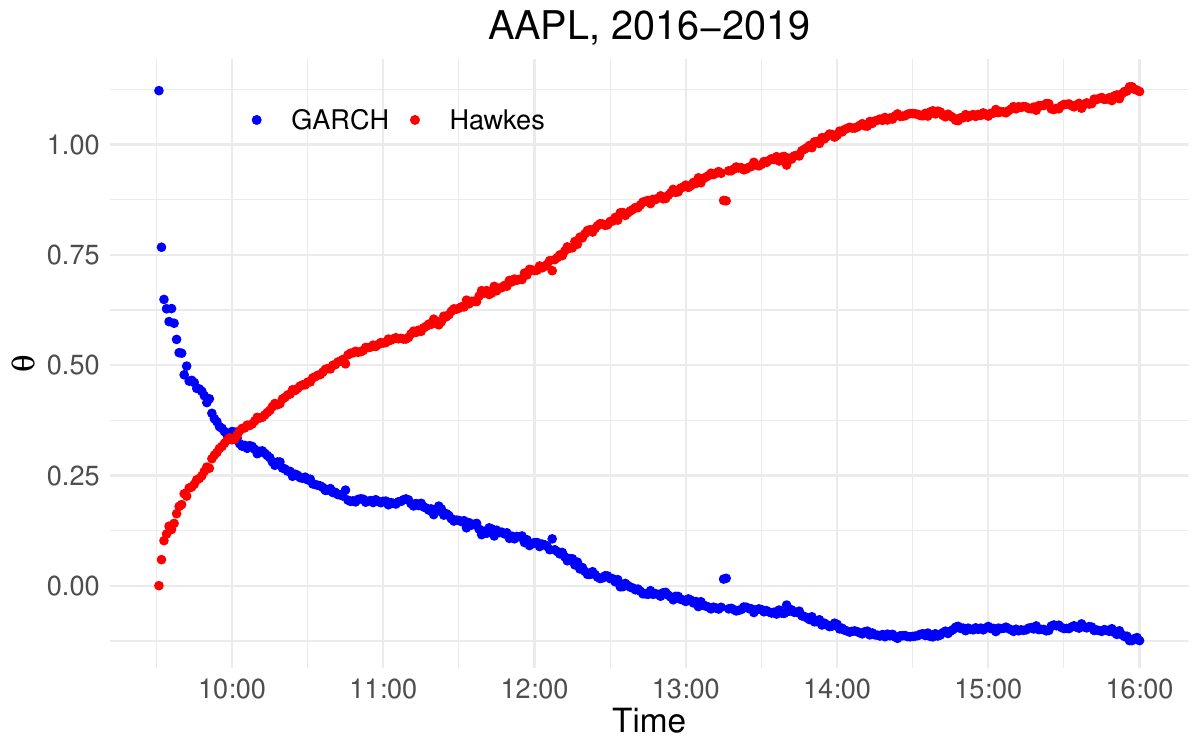}
		\caption{$\theta$}
	\end{subfigure}
	\caption{The predictive power of Hawkes volatility grows stronger over time during a day}\label{Fig:prediction}
\end{figure}

\subsection{Futures and stock volatility}

This subsection examines the relationship between the pre-market volatility of the E-Mini S\&P 500 futures market and stock price volatility during regular market sessions.
The regular stock market opens from 9:30 a.m. to 4 p.m.
However, E-Mini S\&P 500 futures trade from 6:00 p.m. to 5:00 p.m. of the following day.
E-mini S\&P 500 futures are almost continuously traded, even when stocks are not traded on regular exchanges. 
The Hawkes volatility of E-Mini futures can be estimated using data from this period of non-regular market sessions.
This study investigates how much explanatory power the volatility of E-Mini S\&P 500 futures, observed before the market opening (nonregular trading time), 
has for stock market volatility after the market opens.

As in the previous subsection, let $\hat h^s_n(T_1)$ be the estimated intraday Hawkes volatility defined by Corollary~\ref{Cor:var} based on stock prices observed from the opening time, $t_n$, to $t_n + T_1$ of the $n$-th day,
where $t_n + T_1$ is the time between openning and closing of a regular trading session.
In addition,
let $\hat h^f_n(T_2)$ be the estimated Hawkes volatility based on the E-Mini S\&P 500 futures mid-price process from $t_n - T_2$ to opening time $t_n$.
This esitmation is performed prior to the regular stock trading period, 
ensuring that the interval from $t_n - T_2$ does not extend beyond the futures trading halt time of 6:00 p.m.
We then establish a linear model such that
\begin{equation}
	\hat h^s_n(T_1) = \beta_0 + \beta_1 \hat h^f_n(T_2) + \epsilon_n, \label{Eq:reg_fut_stock_1}
\end{equation}
examining the relationship between stock volatility $\hat h^s_n(T_1)$ after market opening and futures volatility $\hat h^f_n(T_2)$ before market opening.
If more than one futures price data point is available on a specific date, the volatility of futures with a closer expiration date is used.

Since the regression analysis in Eq.~\eqref{Eq:reg_fut_stock_1} can be performed for any $0 < T_1 \leq 6.5 \textrm{ hours}$ and $0 < T_2 \leq 16.5 \textrm{ hours}$,
the adjusted R-squared value $R^2(T_1, T_2)$, of the linear model is considered a function of $T_1$ and $T_2$.
By examining the available $T_1$ and $T_2$, we can obtain the overall shape of the R-squared.
The empirical result is presented in Figure~\ref{Fig:Rsq}, 
which displays the R-squared surface as a function of $T_1$ and $T_2$. 
This analysis utilizes the intraday price process of AAPL for stock volatility estimation, 
noting that AAPL constitutes approximately 7\% of the S\&P 500 index.
The data for stock and futures for the estimation of the linear model are for 2019.

\begin{figure}
	\centering
	\includegraphics[width=0.80\textwidth]{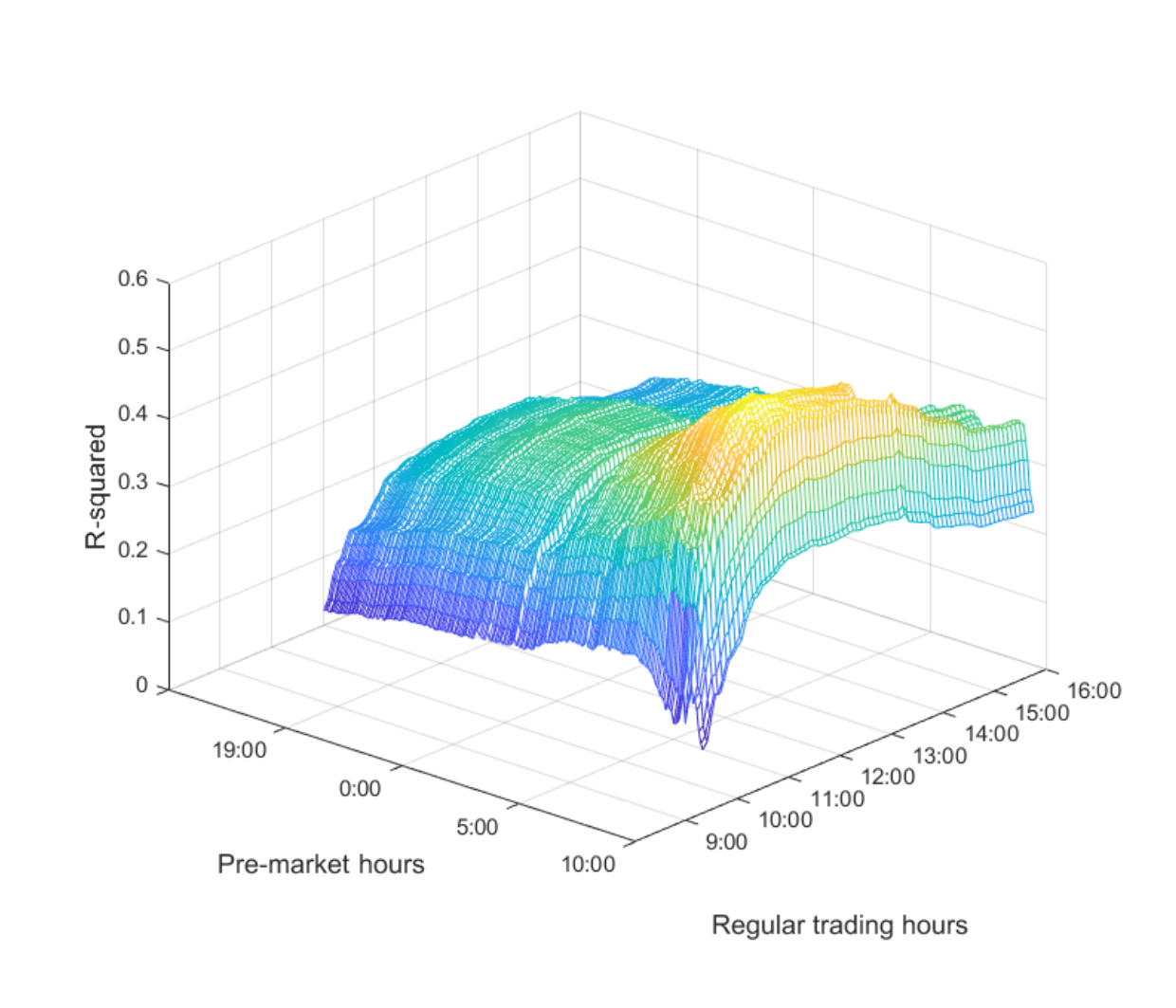}
	\caption{The adjusted R-squared that explains the relationship between the pre-market volatility of E-Mini S\&P 500 futures and stock price volatility during the regular trading time}\label{Fig:Rsq}
\end{figure}

According to the figure, the best pre-market predictor in terms of the R-squared of the Hawkes volatility of AAPL based on E-mini data is E-mini's volatility, calculated using the information accumulated from approximately 7-8 AM to the market opening time of 9:30 AM.
As a target variable, the AAPL intraday volatilities are best predicted based on the period from the market openning to around noon.
More precisely, in this example,
the best predictor is the pre-market E-mini Hawkes volatility estimated from the interval between 7:45 AM and 9:30 AM.
The best-predicted period of the AAPL intraday volatility ranges from 9:30 AM to 11:30 AM, achieving an adjusted R-squared value of 0.52.
In summary, the explanatory power of the pre-market futures volatility for AAPL volatility increases gradually after the market opens, peaks at around 11:30, and gradually decreases over time.

For comparison, we conduct the same analysis using the two-scale realized spot variance (TSRSV) \citep{zu2014estimating}. 
We observe the underlying return processes at 1-second intervals.
From these data, we calculate the subsampled and averaged realized variance using the scale parameter $K$, 
and adjust for noise effects using the conventional realized variance formula to estimate parameters for both futures and stocks.
We test various values of the scale parameters, including 5, 10, 20, 30, and 60, and the results are similar.
Among these, $K=5$ yields the maximum R-squared value of 0.27. 
The best predictor is the TSRSV of futures obtained from from 8:30 AM to 9:30 AM. 
The best-predicted period of AAPL's TSRSV is from 9:30 AM to 10:45 AM.
To save space, we have omit the R-square surface for the TSRSV;
however, the results are similar to those shown in Figure~\ref{Fig:Rsq}.

\subsection{Forecasting}

In the previous section, 
the empirical study shows that market futures volatility explains regular trading-time stock volatility well.
In this section, 
we examine whether pre-market futures volatility can improve stock volatility forecasting more
than forecasting based solely on previous stock market data in terms of the Hawkes volatility.

First, if there is no information on pre-market futures volatility
and the only information about the Hawkes daily volatility of stocks is available,
the following traditional AR(2) model for the daily Hawkes volatility can be considered:
\begin{equation}
	h^s_n = \phi_0 + \phi_1 h^s_{n-1} + \phi_2 h^s_{n-2} + \epsilon_n \label{Eq:AR2}
\end{equation}
where $h^s_n$ is the daily Hawkes volatility on the $n$-th day.
Among the various linear time series models, the AR(2) model is selected using the algorithm \citep{Hyndman}.


More precisely, the daily Hawkes volatility estimation based on tick data for each day preceds as shown in Figure~\ref{Fig:compare}.
Thus, before the $n$-th day, we have $M$ estimated daily Hawkes volatilities $\hat h^s_i$, for $n-M+1 \leq i  \leq n$.
The estimated Hawkes volatilities are used to fit the linear model in Eq.~\eqref{Eq:AR2}
and find the estimates of $\phi_0, \phi_1$ and $\phi_2$.
These estimates are updated daily as new data becomes available.
The one-step ahead forecasting of the Hawkes volatility for day of $n+1$ on the $n$-th day is 
$$ \check h^s_{n+1} = \hat \phi_0 + \hat \phi_1 \hat h^s_{n} + \hat \phi_2 \hat h^s_{n-1}(T).$$ 

Second, we consider the following model, including pre-market futures volatility, as a predictor variable:
\begin{equation}
	h^s_n = \psi_0 + \psi_1 h^s_{n-1} + \psi_2 h^s_{n-2}  + \psi_3 h^f_n(T) + \varepsilon_n \label{Eq:LM}
\end{equation}
where 
$h^f_n(T)$ denotes the pre-market futures Hawkes volatility with time interval $T$, 
measured immediately before the market opens;
that is, from $t_n - T$ to $t_n$.
Compared with Eq.~\eqref{Eq:AR2}, the term $h^s_{n-2}$ is removed.
By introducing $h^f_n(T)$, $h^s_{n-2}$ is no logner significant according to our empirical study.
Similar to before, daily Hawkes volatility estimations for the futures and stock precede.
Subsequently, the model in Eq.~\eqref{Eq:LM} is fitted to determine $\hat \psi_0, \hat \psi_1$ and $\hat \psi_1$  
based on the volatilities estimated before the $n$-th day.
The Hawkes volatility of stock at $(n+1)$-day is forecasted by
$$ \tilde h_{n+1}(T) = \hat \psi_0 + \hat \psi_1 \hat h^s_{n} + \hat \psi_2 \hat h^f_{n+1}(T).$$

In both cases, the root mean square relative errors are defined.
For Eq.~\eqref{Eq:AR2}, we have
$$  \mathrm{RMSRE}_s = \sqrt{ \frac{1}{N} \sum_{n=1}^{N} \left( \frac{\check h^s_{n+1} - \hat h^s_{n+1}}{\check h^s_{n+1}} \right)^2} $$
and for Eq.~\eqref{Eq:LM}, we have
$$  \mathrm{RMSRE}_f(T) = \sqrt{ \frac{1}{N} \sum_{n=1}^{N} \left( \frac{\tilde h^s_{n+1}(T) - \hat h^s_{n+1}}{\tilde h^s_{n+1}(T)} \right)^2} .$$
Note that $\check h^s$ and $\tilde h^s$ are different from $\hat h^s$ such that
$\check h^s_n$ and $\tilde h^s_n$ are computed and forecasted based on past data 
but $\hat h^s_n$ is estimated using the tick data of the $n$-th day.

The empirical results show a smaller error when forecasting is performed using a model that includes the pre-market futures volatility.
Figure~\ref{Fig:predict} illustrates that 
the prediction with futures volatility as a predictor variable improves the RMSRE
using any pre-market time interval $[t_n-T, t_n]$ where $t_n-T$ is represented by the values on the $x$-axis.
In the figure, the dots represent the $\mathrm{RMSRE}_f(T)$
and the solid line represents the $\mathrm{RMSRE}_s$, which does not depend on $T$.
The data used for forecasting spans the years 2018 to 2019, 
and the training for each linear model on the $n$-th day utilizes the previous 400 data points, starting from 2016.
The figure also indicates that pre-market futures volatility, 
measured just before the market opens, further reduces the error.

\begin{figure}
	\centering
	\includegraphics[width=0.75\textwidth]{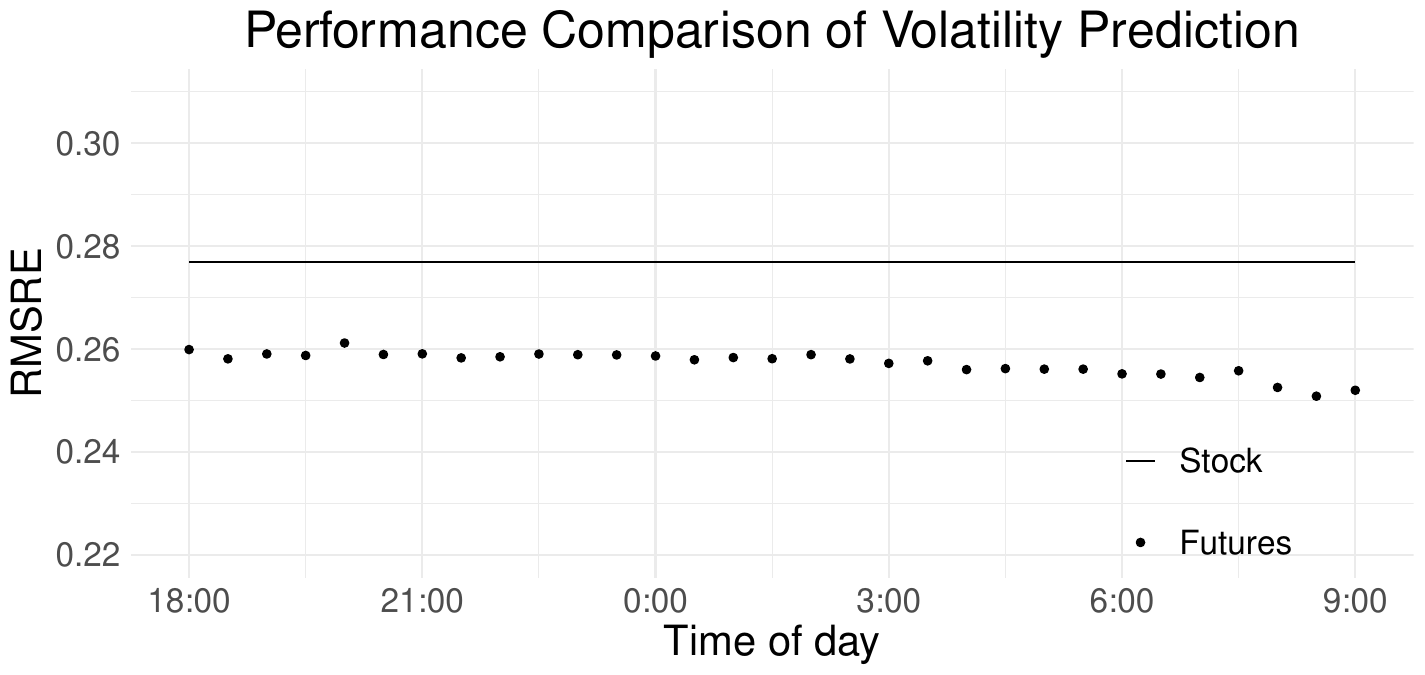}
	\caption{Comparison between $\mathrm{RMSRE}_s$ (solid line) and $\mathrm{RMSRE}_f(T)$ (dotted line)}\label{Fig:predict}
\end{figure}

\section{Conclusion}~\label{Sect:concl}

In this study, the upward and downward movements of price dynamics at the tick level are modeled by self and mutually exciting Hawkes processes.
The variance formulas for the difference between the number of ups and downs are derived for both the unmarked and marked models.
Under the marked Hawkes model, a formula is presented for both dependent and independent mark distributions considering the underlying intensity and counting processes.
The variance formula is applied to the filtered high-frequency data observed every 0.1 second.
The single kernel Hawkes model fits these data well.

By comparing with realized volatility, we confirm that the Hawkes model adequately measures daily volatility.
The biggest advantage of Hawkes volatility is intraday volatility, which allows the precise measurement of real-time instantaneous price variability.
The intraday Hawkes volatility has stronger predictive power for daily volatility, which increases with time.
According to our findings, the importance of the intraday Hawkes volatility approximately 10 AM in the stock market outweighs GARCH volatility, 
a discrete-time volatility prediction model that depends on a time series of daily returns up to the previous day.
The intraday Hawkes  volatility makes it possible to precisely examine the predictive power of volatility in the futures market
before the opening of the stock market, to stock price volatility after market opening.
In addition, using the volatilities of pre-market futures and stocks improves stock market volatility forecasting.


\bibliographystyle{apalike}
\bibliography{wileyNJD-AMA}



\appendix

\section{Proofs}\label{Sect:proof}

\begin{proof}[Proof of Proposition~\ref{Prop:E_lambda}]
	By taking expectations on both sides of Eq.~\eqref{eq:lambda},
	\begin{align*}
		\E[\bm{\lambda}_t] &= \E[\bm{\lambda}_0] + \int_0^t \{ \bm{\beta}\bm{\mu} + (\bm{\alpha} - \bm{\beta}) \E[\bm{\lambda}_s]) \}\D s\\
		&= \E[\bm{\lambda}_t] +  \{ \bm{\beta}\bm{\mu} + (\bm{\alpha} - \bm{\beta}) \E[\bm{\lambda}_t]) \}t,
	\end{align*}
	where we use $\E[\bm{\lambda}_t] = \E[\bm{\lambda}_s] = \E[\bm{\lambda}_0]$
	to obtain Eq.~\eqref{Eq:E_lambda}.
\end{proof}

\begin{proof}[Proof of Proposition~\ref{Prop:E_ll}]
	By Lemma~\ref{Lemma:quad},
	\begin{align*}
		\D(\bm{\lambda}_t \bm{\lambda}_t^{\top}) ={}& \bm{\lambda}_{t} \D \bm{\lambda}_t^{\top} + \D(\bm{\lambda}_t) \bm{\lambda}^{\top}_t + \D [\bm{\lambda}, \bm{\lambda}^{\top}]_t \\
		={}& \left[ \bm{\lambda}_t \{ \bm{\beta} \bm{\mu} + (\bm{\alpha}- \bm{\beta} )\bm{\lambda}_t \}^{\top} +  
		\{\bm{\beta} \bm{\mu} + (\bm{\alpha}- \bm{\beta}) \bm{\lambda}_t \} \bm{\lambda}_t^{\top} \right] \D t
		+ \bm{\alpha}\mathrm{Dg}(\D \bm{N}_t)\bm{\alpha}^{\top}  \\
		&+ \bm{\lambda}_t \{ \bm{\alpha} (\D \bm{N}_t - \bm{\lambda}_t \D t)\}^{\top} +   \bm{\alpha} (\D \bm{N}_t - \bm{\lambda}_t \D t) \bm{\lambda}_t^{\top}
	\end{align*}
	By taking expectation in an integration form,
	\begin{align*}
		\E [\bm{\lambda}_t \bm{\lambda}_t^{\top}] = \E [\bm{\lambda}_0 \bm{\lambda}_0^{\top}] + \int_0^t &\left\{  \E[\bm{\lambda}_s \bm{\lambda}_s^{\top}] (\bm{\alpha}- \bm{\beta} )^{\top}
		+ (\bm{\alpha}- \bm{\beta} ) \E[\bm{\lambda}_s \bm{\lambda}_s^{\top}] \right.\\
		&\left. + \E[\bm{\lambda}_s]   (\bm{\beta} \bm{\mu})^{\top}
		+ \bm{\beta} \bm{\mu} \E[\bm{\lambda}_s^{\top}]
		+ \bm{\alpha}\mathrm{Dg}(\E[\bm{\lambda}_s])\bm{\alpha}^{\top}  \right\} \D s.
	\end{align*}
	By using $\E [\bm{\lambda}_t \bm{\lambda}_t^{\top}]= \E [\bm{\lambda}_s \bm{\lambda}_s^{\top}] = \E [\bm{\lambda}_0 \bm{\lambda}_0^{\top}]$,
	the integrand should be zero, and we have the desired result.
\end{proof}

\begin{proof}[Proof of Proposition~\ref{Prop:E_lN}]
	By Lemma~\ref{Lemma:quad},
	\begin{align*}
		\D(\bm{\lambda}_t \bm{N}_t^{\top}) &= \bm{\lambda}_t \D \bm{N}_t^{\top} + \D(\bm{\lambda}_t) \bm{N}^{\top}_{t-} + \D [\bm{\lambda}\bm{N}^{\top}]_t \\
		&= \bm{\lambda}_t \D \bm{N}_t^{\top} + \{\bm{\beta} (\bm{\mu} - \bm{\lambda}_t) \D t + \bm{\alpha} \D \bm{N}_t \} \bm{N}^{\top}_{t-} +  \bm{\alpha}\mathrm{Dg}(\D\bm{N}_t)
	\end{align*}
	and
	\begin{align*}
		\E[\bm{\lambda}_t \bm{N}^{\top}_t] = \int_0^t \left\{ \E[\bm{\lambda}_s \bm{\lambda}^{\top}_s] + \bm{\beta} \bm{\mu} \E[\bm{\lambda}_s^{\top}] s + 
		(\bm{\alpha} - \bm{\beta} ) \E[\bm{\lambda}_s \bm{N}_s^{\top}] + \bm{\alpha}\mathrm{Dg}(\E[\bm{\lambda}_s]) \right\}\D s
	\end{align*}
	or as a system of linear differential equations,
	\begin{align*}
		\frac{\D \E[\bm{\lambda}_t \bm{N}_t^{\top}]}{\D t} = (\bm{\alpha} - \bm{\beta} ) \E[\bm{\lambda}_t \bm{N}_t^{\top}] + \bm{\beta} \bm{\mu} \E[\bm{\lambda}_t^{\top}] t + \E[\bm{\lambda}_t \bm{\lambda}_t^{\top}] + \bm{\alpha}\mathrm{Dg}(\E[\bm{\lambda}_t]).
	\end{align*}
\end{proof}

\begin{proof}[Proof of Theorem~\ref{Thm:simple_vol}]
	
	Let $f(x_1, x_2) = (x_1 - x_2)^2$.
	The infinitesimal generator on $N_1(t)$ and $N_2(t)$ defined by its action on $f(N_1(t), N_2(t))$ is
	\begin{align*}
		\mathcal{A}f(N_1(t), N_2(t)) ={}& \lim_{h \downarrow 0} \frac{\E[f( N_1(t+h), N_2(t+h)) - f( N_1(t), N_2(t)) \rvert \F_t]}{h}\\
		={}&\lambda_1(t) \{ f(N_1(t) + 1, N_2(t)) - f(N_1(t), N_2(t)\} \\
		&+ \lambda_2(t) \{ f(N_1(t), N_2(t) + 1) - f(N_1(t), N_2(t)) \} \\
		={}& \lambda_1(t) \{ 2N_1(t) - 2N_2(t) + 1) \}  + \lambda_2(t) \{2N_2(t) - 2N_1(t) + 1\} .
	\end{align*}
	By Dynkin's formula,
	\begin{align*}
		&\E \left[ (N_1(t) - N_2(t))^2 \right] = \int_0^t \E[\mathcal{A}f(N_1(s), N_2(s))] \D s \\
		={}& \int_0^t \E \left[ \lambda_1(s) \{ 2N_1(s) - 2N_2(s) + 1) \}  + \lambda_2(s) \{2N_2(s) - 2N_1(s) + 1\} \right] \D s \\
		={}& \int_0^t
		\left\{
		2  \begin{bmatrix} 1 & -1 \end{bmatrix} 
		\E[ \bm{\lambda}_s  \bm{N}_s^{\top}]
		\begin{bmatrix} 1 \\ -1 \end{bmatrix}
		+
		\begin{bmatrix} 1 & 1 \end{bmatrix}
		\E[\bm{\lambda}_s] \right\}
		\D s \\
		={}& \int_0^t
		\left\{
		2  \begin{bmatrix}
			1 & -1
		\end{bmatrix} 
		\left(
		\V \CC \circ \begin{bmatrix} \ee^{\xi_1 s} & \ee^{\xi_1 s} \\ \ee^{\xi_2 s} & \ee^{\xi_2 s} \end{bmatrix}
		+ \Aa s + \BB
		\right)
		\begin{bmatrix}
			1 \\ -1
		\end{bmatrix}
		+
		\begin{bmatrix}
			1 & 1
		\end{bmatrix}
		\E[\bm{\lambda}_s]
		\right\}
		\D s \\
		={}& 
		2  \begin{bmatrix} 1 & -1 \end{bmatrix} 
		\left( \V \DD \circ 
		\begin{bmatrix} \ee^{\xi_1 t} - 1  & \ee^{\xi_1 t} - 1 \\ \ee^{\xi_2 t} - 1 & \ee^{\xi_2 t} - 1 \end{bmatrix}
		+ \frac{1}{2} \Aa t^2 +  \BB t  \right)
		\begin{bmatrix}
			1 \\ -1
		\end{bmatrix}
		+
		\begin{bmatrix}
			1 & 1
		\end{bmatrix}
		\E[\bm{\lambda}_t] t
	\end{align*}
	where
	$$\DD = \CC \circ \begin{bmatrix}\xi_{1}^{-1} & \xi_{1}^{-1} \\ \xi_{2}^{-1} & \xi_{2}^{-1} \end{bmatrix}.$$
	Finally, the variance formula is derived using
	\begin{align*}
		\E [ (N_1(t) - N_2(t)) ]^2 =
		\begin{bmatrix}
			1 & -1
		\end{bmatrix} 
		\E[\bm{\lambda}_s] \E[\bm{\lambda}_s]^{\top} 
		\begin{bmatrix}
			1 \\ -1
		\end{bmatrix} t^2
		=
		\begin{bmatrix}
			1 & -1
		\end{bmatrix} 
		\Aa
		\begin{bmatrix}
			1 \\ -1
		\end{bmatrix} t^2
	\end{align*}	
	and 
	$$ \Var(N_1(t) - N_2(t)) = \E [ (N_1(t) - N_2(t)) ]^2 - (\E[N_1(t) - N_2(t) ])^2.$$
\end{proof}

\begin{proof}[Proof of Proposition~\ref{Prop:E_lambda2}]
	Since
	$$ \E \left[ \mathbb{K}_t [\bm{g}(z)] \bm{\lambda}_t \right] =  \bm{\eta} \circ (\overline \ZZ - 1) \E[\bm{\lambda}_t],$$
	we have
	\begin{align*}
		\E[\bm{\lambda}_t]
		&= \E[\bm{\lambda}_t] +  \{ \bm{\beta}\bm{\mu} + (\bm{\alpha} + \bm{\eta} \circ (\overline \ZZ - 1) - \bm{\beta}) \E[\bm{\lambda}_t]  \}t.
	\end{align*}
	Using $\E[\bm{\lambda}_t] = \E[\bm{\lambda}_s] = \E[\bm{\lambda}_0]$, we obtain Eq.~\eqref{Eq:E_lambda2}.
\end{proof}

\begin{proof}[Proof of Lemma~\ref{Lemma:useful}]
	Since these expressions are the results of simple matrix calculations, only a few parts of the proofs are briefly explained.
	For Eq.~\eqref{eq:e2}, since
	\begin{align*}
		\bm{g}(z) \Dg (\bm{M}(\D s \times \D z)) \bm{g}(z)^{\top} ={}& (\bm{\alpha} - \bm{\eta} + \bm{\eta} \circ \ZZ) \Dg (\bm{M}(\D s \times \D z)) (\bm{\alpha} - \bm{\eta} + \bm{\eta} \circ \ZZ)^{\top} \\
		={}& (\bm{\alpha} - \bm{\eta} + \bm{\eta} \circ \ZZ) \Dg (\bm{M}(\D s \times \D z)) (\bm{\alpha} - \bm{\eta})^{\top} \\
		& + (\bm{\alpha} - \bm{\eta} ) \Dg (\bm{M}(\D s \times \D z)) (\bm{\eta} \circ \ZZ)^{\top} \\
		& + (\bm{\eta} \circ \ZZ) \Dg (\bm{M}(\D s \times \D z)) (\bm{\eta} \circ \ZZ)^{\top}
	\end{align*}
	and
	\begin{align*}
		\begin{aligned}
			&\E \left[ \int_{(0,t] \times E} (\bm{\eta} \circ \ZZ) \Dg (\bm{M}(\D s \times \D z)) (\bm{\eta} \circ \ZZ)^{\top}   \right] \\
			& = \int_{(0,t] \times E} 
			\left[\begin{matrix}
				\eta_{11}^2 \E [z_1^2 M_1(\D s \times \D z_1)] + \eta_{12}^2 \E [z_2^2 M_2(\D s \times \D z_2)] \\
				\eta_{11}\eta_{21} \E [z_1^2 M_1(\D s \times \D z_1)] + \eta_{12}\eta_{22}\E [z_2^2 M_2(\D s \times \D z_2)]
			\end{matrix}\right.\\
			&\qquad\qquad\qquad\qquad\qquad
			\left.\begin{matrix}
				\eta_{11}\eta_{21}\E [z_1^2 M_1(\D s \times \D z_1)] + \eta_{12}\eta_{22} \E [z_2^2 M_2(\D s \times \D z_2)] \\
				\eta_{21}^2 \E [z_1^2 M_1(\D s \times \D z_1)] + \eta_{22}^2 \E [z_2^2 M_2(\D s \times \D z_2)]  
			\end{matrix}\right]\D s\\
			& = \int_{(0,t] \times E} 
			\left[\begin{matrix}
				\eta_{11}^2 Z^{(2)}_{1} \E[\lambda_1(s)]+ \eta_{12}^2 Z^{(2)}_{2} \E[\lambda_2(s)]\\
				\eta_{11}\eta_{21} Z^{(2)}_{1} \E[\lambda_1(s)] + \eta_{12}\eta_{22} Z^{(2)}_{2} \E[\lambda_2(s)]
			\end{matrix}\right.\\
			&\qquad\qquad\qquad\qquad\qquad\qquad\qquad\qquad
			\left.\begin{matrix}
				\eta_{11}\eta_{21} Z^{(2)}_{1} \E[\lambda_1(s)] + \eta_{12}\eta_{22}Z^{(2)}_{2} \E[\lambda_2(s)]\\
				\eta_{21}^2 Z^{(2)}_{1} \E[\lambda_1(s)] + \eta_{22}^2 Z^{(2)}_{2} \E[\lambda_2(s)] 
			\end{matrix}\right]\D s\\
			&= \int_0^t \left(\bm{\eta} \circ \ZZ^{(2)\circ\frac{1}{2}}\right) \Dg(\E[\bm{\lambda}_s]) \left(\bm{\eta} \circ \ZZ^{(2)\circ\frac{1}{2}}\right)^{\top} \D s
		\end{aligned}
	\end{align*}
	we have the desired result.
	For Eq.~\eqref{eq:e3},
	\begin{equation*}
		(\bm{\alpha}  + \bm{g}(z)) \Dg(\bm{M}(\D s \times \D z)) \Dg(z) = (\bm{\alpha} - \bm{\eta}) \Dg(\bm{M}(\D s \times \D z)) \Dg(z)   + (\bm{\eta} \circ \ZZ) \Dg(\bm{M}(\D s \times \D z)) \Dg(z) 
	\end{equation*}
	and
	\begin{align*}
		&\E \left[ \int_{(0,t]\times E}  (\bm{\alpha} - \bm{\eta}) \Dg(\bm{M}(\D s \times \D z)) \Dg(z)  \right] \\
		&= \int_{(0,t]\times E} \E \left[\begin{matrix}
			(\alpha_{11} - \eta_{11}) z_1 M_1(\D s \times \D z_1) & (\alpha_{12} - \eta_{12})z_2 M_2 (\D s \times \D z_2) \\
			(\alpha_{21} - \eta_{21}) z_1 M_1(\D s \times \D z_1) & (\alpha_{22} - \eta_{22})z_2 M_2 (\D s \times \D z_2)
		\end{matrix}\right] \D s\\
		&= (\bm{\alpha} - \bm{\eta}) \circ \overline \ZZ \Dg(\E[\bm{\lambda}_t]) t.
	\end{align*}
	Similarly,
	\begin{align*}
		\E \left[ \int_{(0,t]\times E}  (\bm{\eta} \circ \ZZ) \Dg(\bm{M}(\D s \times \D z)) \Dg(z) \right] &= \E \left[ \int_{(0,t]\times E}  (\bm{\eta} \circ \ZZ^{\circ 2}) \Dg(\bm{M}(\D s \times \D z)) \right] \\
		&= \bm{\eta} \circ \overline  \ZZ^{(2)}  \Dg(\E[\bm{\lambda}_t]) t.
	\end{align*} 
\end{proof}

\begin{proof}[Proof of Lemma~\ref{Lemma:XY}]
	Use
	\begin{align*}
		\bm{X}_t \bm{Y}^{\top}_t &= \bm{X}_0 \bm{Y}^{\top}_0 + \int_0^t \bm{X}_{s-} \D \bm{Y}^{\top}_s + \int_0^t \D(\bm{X}_s) \bm{Y}^{\top}_{s-} + [\bm{X}\bm{Y}^{\top}]_t 
	\end{align*}
	where
	\begin{align*}
		\int_0^t \bm{X}_{s-} \D \bm{Y}^{\top}_u &= \int_0^t \bm{X}_{s} \bm{b}_s^{\top} \D s + \int_{(0,t] \times E} \bm{X}_{s-} \bm{M}^{\top}(\D s \times \D z) \bm{f}_y(s,z)^{\top} \\
		\int_0^t \D(\bm{X}_s) \bm{Y}^{\top}_{s-} &= \int_0^t \bm{a}_s \bm{Y}_{s}^{\top}  \D s + \int_{(0,t] \times E} \bm{f}_x(s,z) \bm{M}(\D s \times \D z) \bm{Y}_{s-}^{\top}\\ 
		[\bm{X}\bm{Y}^{\top}]_t & = \int_{(0,t] \times E} \bm{f}_x(s,z) \mathrm{Dg} (\bm{M}(\D s \times \D z)) \bm{f}_y(s,z)^{\top}
	\end{align*}
	and
	\begin{align*}
		&\E \left[ \int_{(0,t] \times E} \bm{X}_{s-} \bm{M}^{\top}(\D s \times \D z) \bm{f}_y(s,z)^{\top} \right] = \int_0^t  \E\left[ \bm{X}_{s} \bm{\lambda}_s^{\top} \mathbb{K}_s [\bm{f}_y(s,z)]^{\top} \right]  \D s  \\
		&\E \left[ \int_{(0,t] \times E} \bm{f}_x(s,z) \bm{M}(\D s \times \D z) \bm{Y}_{s-}^{\top} \right] =  \int_0^t  \E\left[\mathbb{K}_s [\bm{f}_x(s,z)] \bm{\lambda}_s \bm{Y}^{\top}_{s} \right] \D s.
	\end{align*}
\end{proof}

\begin{proof}[Proof of Proposition~\ref{Prop:E_ll2}]
	Since the intensity matrix process follows Eq.~\eqref{eq:marklambda},
	using Eq.~\eqref{eq:diff_xy} with
	$$\bm{X}_t=\bm{Y}_t=\bm{\lambda}_t, \quad  \bm{a}_t = \bm{b}_t = \bm{\beta}(\bm{\mu} -  \bm{\lambda}_t), \quad \bm{f}_x = \bm{f}_y = \bm{\alpha} + \bm{g}(z) $$
	we have
	\begin{align}
		\frac{\D\E[\bm{\lambda}_t \bm{\lambda}^{\top}_t]}{\D t} ={}& \E[ \bm{\lambda}_{t} (\bm{\mu} -  \bm{\lambda}_t)^{\top} \bm{\beta}^{\top}] + \E[\bm{\lambda}_t \bm{\lambda}_t^{\top} (\bm{\alpha} + \mathbb{K}_t[\bm{g}(z)])^{\top} ] \nonumber\\
		&+ \E[ \bm{\beta}(\bm{\mu} -  \bm{\lambda}_t) \bm{\lambda}_{t}^{\top}]
		+ \E[(\bm{\alpha} + \mathbb{K}_t[\bm{g}(z)])  \bm{\lambda}_t  \bm{\lambda}_t^{\top} ] \nonumber\\
		&+  \frac{\D}{\D t} \E \left[ \int_{(0,t] \times E}  (\bm{\alpha} + \bm{g}(z))   \mathrm{Dg} (\bm{M}(\D s \times \D z))  (\bm{\alpha} + \bm{g}(z))^{\top} \right] = 0. 
	\end{align}
	By Lemma~\ref{Lemma:EK},
	\begin{align*}
		&\E\left[\bm{\lambda}_t \bm{\lambda}_t^{\top}\mathbb{K}_t[\bm{g}(z)]^{\top}\right] = \left(( \overline \ZZ_{\bm{\lambda}\bm{\lambda}^{\top}}^{\top} - 1 )\circ \E[ \bm{\lambda}_t \bm{\lambda}_t^{\top} ] \right) \bm{\eta}^{\top} \\
		&\E\left[\mathbb{K}_t[\bm{g}(z)]\bm{\lambda}_t \bm{\lambda}_t^{\top}\right] =  \bm{\eta} \left(( \overline \ZZ_{\bm{\lambda}\bm{\lambda}^{\top}} - 1 )\circ \E[ \bm{\lambda}_t \bm{\lambda}_t^{\top} ] \right),
	\end{align*}
	and using Eq.~\eqref{eq:e2}, we obtain Eq.~\eqref{eq:ELL} for $\E[\bm{\lambda}_t \bm{\lambda}_t^{\top}]$.
\end{proof}

\begin{proof}[Proof of Proposition~\ref{Prop:E_lN2}]
	Using Eq.~\eqref{eq:diff_xy} with
	$$\bm{X}_t=\bm{N}_t, \quad \bm{Y}_t=\bm{\lambda}_t, \quad  \bm{a}_t = 0, \quad \bm{b}_t = \bm{\beta}(\bm{\mu} -  \bm{\lambda}_t), \quad \bm{f}_x = \mathrm{Dg}(z), \quad   \bm{f}_y = \bm{\alpha} + \bm{g}(z) $$
	we have
	\begin{align*}
		\frac{\D}{\D t} \E [ \bm{N}_t \bm{\lambda}_t^{\top}] ={}& 
		\E[\bm{N}_t (\bm\mu - \bm{\lambda}_t)^{\top} \bm{\beta}^{\top}] +  \E[  \bm{N}_t \bm{\lambda}_t^{\top} (\bm{\alpha} + \mathbb{K}_t[\bm{g}(z)])^{\top} ]
		+ \E [\mathbb{K}_t[\Dg(z)] \bm{\lambda}_t \bm{\lambda}_t^{\top} ]  \\
		&+ \frac{\D}{\D t} \E \left[ \int_{(0,t] \times E)} \Dg(z) \Dg(\bm{M}(\D s \times \D z)) (\bm{\alpha} + \bm{g}(z))^{\top}  \right] \\
		={}& \Dg(\overline \ZZ) \E[\bm{\lambda}_t](\bm{\beta}\bm{\mu})^{\top}  t + \E[ \bm{N}_t \bm{\lambda}_t^{\top}](\bm{\alpha}- \bm{\beta})^{\top} \\
		&+ 
		((\overline \ZZ_{\bm{N}\bm{\lambda}^{\top}} - 1 ) \circ \E[\bm{N}_t \bm{\lambda}_t^{\top} ]) \bm{\eta}^{\top} 
		+ \overline \ZZ^{\top}_{\bm{\lambda}\bm{\lambda}^{\top}}  \circ \E [\bm{\lambda}_t \bm{\lambda}_t^{\top}] \\
		&+ \Dg(\E[\bm{\lambda}_t]) \left((\bm{\alpha} - \bm{\eta}) \circ \overline \ZZ  + \bm{\eta} \circ \overline \ZZ^{(2)}\right)^{\top}.
	\end{align*}
	As in the previous cases, let
	$$ \E[ \bm{N}_t \bm{\lambda}_t^{\top}] \approx \Aa t + \BB,$$
	then $\Aa$ and $\BB$ satisfy Eqs.~\eqref{eq:A}~and~\eqref{eq:B}.
\end{proof}

\begin{proof}[Proof of Theorem~\ref{Thm:var}]
	Since
	\begin{align*}
		\frac{\D}{\D t} \E[ \bm{N}_t \bm{N}_t^{\top}] ={}& \E[ \bm{N}_t \bm{\lambda}_t^{\top} \mathbb{K}_t[ \Dg(z)]] + \E[  \mathbb{K}_t[ \Dg(z)] \bm{\lambda}_t \bm{N}_t^{\top}  ] \\
		&+ \frac{\D}{\D t} \E \left[\int_{(0,t]\times E} \Dg(z) \Dg(\bm{M}(\D s \times \D z)) \Dg(z) \right] \\
		={}& \overline \ZZ_{\bm{N}\bm{\lambda}^{\top}} \circ \E[ \bm{N}_t \bm{\lambda}_t^{\top}] + \overline \ZZ_{\bm{N}\bm{\lambda}^{\top}}^{\top} \circ \E[ \bm{\lambda}_t \bm{N}_t^{\top}] + 
		\overline \ZZ^{(2)} \circ  \Dg (\E [\bm{\lambda}_t])
	\end{align*}
	and
	$$ \E[N_1(t) - N_2(t)]^2   =  \U^{\top} \E[ \bm{N}_t \bm{N}_t^{\top}] \U - \left( \U^{\top} \Dg(\overline \ZZ) \E[\bm{\lambda}_t] t \right)^2, $$
	we have the variance formula.
\end{proof}

\begin{proof}[Proof of Corollary~\ref{Cor:var}]
	If $\overline \ZZ = \overline \ZZ_{\bm{N}\bm{\lambda}^{\top}}$, then $\Aa$ satisfies Eq.~\eqref{eq:A2} which is symmetric,
	and
	\begin{align*}
		\U^{\top} \mathcal{T} \left\{\overline \ZZ_{\bm{N}\bm{\lambda}^{\top}} \circ \left(\frac{1}{2}\Aa t^2 \right) \right\} \U &=  \U^{\top} \left\{ \overline \ZZ \circ \left( \Dg(\overline \ZZ) \E[\bm{\lambda}_t] \E[\bm{\lambda}_t]^{\top} \right) \right\} \U = \left( \U^{\top} \Dg(\overline \ZZ) \E[\bm{\lambda}_t] t \right)^2.
	\end{align*}
\end{proof}

\newpage

\afterpage{%
	\clearpage
	\thispagestyle{empty}
	\begin{landscape}
		\section{Estimation example}\label{Sect:estimate}

		Examples of the maximum likelihood estimates for the Hawkes model are presented in Table~\ref{Table:estimates}.

		\centering 
		\captionof{table}{Examples of the maximum likelihood estimates of the Hawkes model based on the NBBO of NVDA with 0.1-second filtering; the first row of each date presents the estimates and the second row presents the standard error}\label{Table:estimates}
		\begin{tabular}{cccccccccccccc}
			\hline
			date & $\mu_1$  & $\mu_2$ &  $\alpha_{1.1}$ & $\alpha_{1.2}$ & $\alpha_{2.1}$ & $\alpha_{2.2}$ & $\beta_1$ & $\beta_2$ & $\eta_{1.1}$ & $\eta_{1.2}$ & $\eta_{2.1}$ & $\eta_{2.2}$ & llh \\
			\hline
			20191001 & 0.2017 & 0.2437 & 0.1447 & 0.0894 & 0.1248 & 0.157  & 0.5994 & 0.7947 & 0.0271 & 0.0137 & 0.0164 & 0.0455 & -30189 \\
			& 0.0079 & 0.0076 & 0.0123 & 0.0099 & 0.0126 & 0.0124 & 0.0443 & 0.0472 & 0.0038 & 0.0032 & 0.0039 & 0.0050 &          \\
			20191002 & 0.2556 & 0.2302 & 0.1141 & 0.1293 & 0.0504 & 0.1525 & 0.7721 & 0.5664 & 0.0349 & 0.0279 & 0.0199 & 0.0308 & -30983 \\
			& 0.0137 & 0.0085 & 0.0220 & 0.0198 & 0.0102 & 0.0122 & 0.1535 & 0.0442 & 0.0058 & 0.0061 & 0.0034 & 0.0043 &          \\
			20191004 & 0.1691 & 0.1671 & 0.1281 & 0.1009 & 0.1148 & 0.0927 & 0.5300 & 0.5033 & 0.0314 & 0.013  & 0.0156 & 0.0330 & -27744 \\
			& 0.0067 & 0.0071 & 0.0101 & 0.0106 & 0.0112 & 0.0095 & 0.0347 & 0.0371 & 0.0043 & 0.0037 & 0.0036 & 0.0043 &          \\
			20191007 & 0.2378 & 0.2284 & 0.0683 & 0.0716 & 0.0841 & 0.1068 & 0.4127 & 0.5021 & 0.0266 & 0.0183 & 0.0131 & 0.0250 & -31509 \\
			& 0.0157 & 0.0121 & 0.0107 & 0.0145 & 0.0145 & 0.0119 & 0.0783 & 0.0700 & 0.0050 & 0.0034 & 0.0033 & 0.0045 &          \\
			20191008 & 0.2096 & 0.2091 & 0.1028 & 0.1213 & 0.1157 & 0.0908 & 0.4897 & 0.4806 & 0.0251 & 0.0109 & 0.0113 & 0.0311 & -31609   \\
			& 0.0091 & 0.0102 & 0.0099 & 0.0117 & 0.0131 & 0.0103 & 0.0389 & 0.049  & 0.0035 & 0.0028 & 0.0028 & 0.0039 &          \\
			20191009 & 0.2149 & 0.1764 & 0.0738 & 0.1757 & 0.1264 & 0.0979 & 0.6455 & 0.5485 & 0.0485 & 0.0087 & 0.0005 & 0.0453 & -28724   \\
			& 0.0077 & 0.0077 & 0.0102 & 0.0146 & 0.0115 & 0.0102 & 0.0480 & 0.0432 & 0.0058 & 0.0044 & 0.0038 & 0.0052 &          \\
			20191010 & 0.2494 & 0.2138 & 0.0811 & 0.0873 & 0.1121 & 0.1100 & 0.5253 & 0.5887 & 0.0413 & 0.0114 & 0.0233 & 0.0282 & -30643 \\
			& 0.0093 & 0.0084 & 0.0101 & 0.0107 & 0.0118 & 0.0116 & 0.0494 & 0.0493 & 0.0049 & 0.0034 & 0.0041 & 0.0043 &          \\
			20191011 & 0.2183 & 0.1939 & 0.0521 & 0.1570 & 0.1368 & 0.1182 & 0.5975 & 0.6337 & 0.0445 & 0.0093 & 0.0210  & 0.0426 & -29412   \\
			& 0.0079 & 0.0086 & 0.0097 & 0.0160 & 0.0160 & 0.0124 & 0.0569 & 0.0646 & 0.0051 & 0.0036 & 0.0042 & 0.0054 &          \\
			20191014 & 0.1302 & 0.1091 & 0.0954 & 0.1534 & 0.1260 & 0.0921 & 0.5462 & 0.4448 & 0.0233 & 0.0132 & 0.0042 & 0.0383 & -23533 \\
			& 0.0068 & 0.0057 & 0.0120 & 0.0190 & 0.0128 & 0.0100 & 0.0689 & 0.0392 & 0.0053 & 0.0050 & 0.0040 & 0.0053 &          \\
			20191015 & 0.1905 & 0.2261 & 0.1029 & 0.0601 & 0.1292 & 0.1020 & 0.3521 & 0.6533 & 0.0197 & 0.0102 & 0.0227 & 0.0392 & -31026 \\
			& 0.0110 & 0.0092 & 0.0100 & 0.0092 & 0.0146 & 0.0111 & 0.0407 & 0.0622 & 0.0034 & 0.0027 & 0.0039 & 0.0051 &          \\
			20191016 & 0.1905 & 0.2127 & 0.0739 & 0.1984 & 0.2159 & 0.0914 & 0.7615 & 0.9208 & 0.0558 & 0.0203 & 0.0251 & 0.0716 & -28294 \\
			& 0.0026 & 0.0063 & 0.0097 & 0.0096 & 0.0150  & 0.0122 & 0.0132 & 0.0490  & 0.0037 & 0.0044 & 0.0053 & 0.0062 &          \\
			20191017 & 0.1887 & 0.1894 & 0.0621 & 0.1223 & 0.1679 & 0.0646 & 0.4861 & 0.6641 & 0.0385 & 0.0217 & 0.0302 & 0.0491 & -28803 \\
			& 0.0078 & 0.0087 & 0.0089 & 0.0129 & 0.0198 & 0.0109 & 0.0401 & 0.0784 & 0.0047 & 0.0037 & 0.0052 & 0.0058 &          \\
			20191018 & 0.2252 & 0.1851 & 0.1294 & 0.1365 & 0.0879 & 0.0932 & 0.7626 & 0.4563 & 0.0484 & 0.0279 & 0.0220 & 0.0298 & -29896 \\
			& 0.0084 & 0.0088 & 0.0131 & 0.0170 & 0.0109 & 0.0095 & 0.0736 & 0.0419 & 0.0064 & 0.0045 & 0.0036 & 0.0041 &          \\
			\hline
		\end{tabular}
		
	\end{landscape}
	\clearpage
}



\end{document}